\newcounter{theoremc}
\newtheorem{fact}[theoremc]{Fact}
\newtheorem{clm}[theoremc]{Claim}
\newtheorem{theorem}[theoremc]{Theorem}
\newtheorem{lemma}[theoremc]{Lemma}
\newtheorem{corollary}[theoremc]{Corollary}
\newtheorem{definition}[theoremc]{Definition}
\newcommand{\mP}{\mathcal{P}}
\newcommand{\mone}{\mathbf{1}}
\newenvironment{packed_itemize}{
\begin{itemize}
  \setlength{\itemsep}{1pt}
  \setlength{\parskip}{0pt}
  \setlength{\parsep}{0pt}
}{\end{itemize}}
\newenvironment{packed_enumerate}{
\begin{enumerate}
  \setlength{\itemsep}{1pt}
  \setlength{\parskip}{0pt}
  \setlength{\parsep}{0pt}
}{\end{enumerate}}
\begin{document}

\title{The Network Improvement Problem for Equilibrium Routing\thanks{Supported in part by NSF Awards 1038578 and 1319745, an NSF CAREER Award (1254169), the Charles Lee Powell Foundation, and a Microsoft Research Faculty Fellowship.}}
\author{Umang Bhaskar\footnotemark[2]
\and
Katrina Ligett\footnotemark[2]
\and
Leonard J.\ Schulman\footnotemark[2]}
\maketitle
\renewcommand{\thefootnote}{\fnsymbol{footnote}}
\footnotetext[2]{\{umang,katrina,schulman\}@caltech.edu}
\renewcommand{\thefootnote}{\arabic{footnote}}

\begin{abstract}
In routing games, agents pick their routes through a network to minimize their own delay. A primary concern for the network designer in routing games is the average agent delay at equilibrium. A number of methods to control this average delay have received substantial attention, including network tolls, Stackelberg routing, and edge removal.

A related approach with arguably greater practical relevance is that of making investments in improvements to the edges of the network, so that, for a given investment budget, the average delay at equilibrium in the improved network is minimized. This problem has received considerable attention in the literature on transportation research and a number of different algorithms have been studied. To our knowledge, none of this work gives guarantees on the output quality of any polynomial-time algorithm. We study a model for this problem introduced in transportation research literature, and present both hardness results and algorithms that obtain nearly optimal performance guarantees.

\begin{itemize}
\item We first show that a simple algorithm obtains good approximation guarantees for the problem. Despite its simplicity, we show that for affine delays the approximation ratio of $4/3$ obtained by the algorithm cannot be improved.
\item To obtain better results, we then consider restricted topologies. For graphs consisting of parallel paths with affine delay functions we give an optimal algorithm. However, for graphs that consist of a series of parallel links, we show the problem is weakly NP-hard.
\item Finally, we consider the problem in series-parallel graphs, and give an FPTAS for this case.
\end{itemize}

Our work thus formalizes the intuition held by transportation researchers that the network improvement problem is hard, and presents topology-dependent algorithms that have provably tight approximation guarantees. 
\end{abstract}

\maketitle

\section{Introduction}
\label{sec:intro}

Routing games are widely used to model and analyze networks where traffic is routed by multiple users, who typically pick their route to minimize their delay~\cite{Roughgardenbook}. Routing games capture the uncoordinated nature of traffic routing. A prominent concern in the study of these games is the overall social cost, which is usually taken to be the average delay suffered by the players at equilibrium.
It is well known that equilibria are generally suboptimal in terms of social cost. The ratio of the average delay of the worst equilibrium routing to the optimal routing that minimizes the average delay is called the price of anarchy; tight bounds on the price of anarchy are well-studied and are known for various classes of delay functions~\cite[Chapter 18]{agtbook}.

However, the notion of price of anarchy assumes a fixed network. In reality, of course, networks change, and such changes may intentionally be implemented by the network designer to improve quality of service. This raises the question of how to identify cost-effective network improvements. Our work addresses this fundamental design problem. Specifically, given a budget for improving the network, how should the designer allocate the budget among edges of the network to minimize the average delay at equilibrium in the resulting, improved network? This crucial question arises frequently in network planning and expansion, and yet seems to have received no attention in the algorithmic game theory literature. This is surprising considering the attention given to other methods of improving equilibria, e.g., edge tolls and Stackelberg routing.

Our model of network improvement is adopted from a widely studied problem in transportation research \cite{YangH98} called the Continuous Network Design Problem (CNDP)~\cite{AbdulaalL79}. In this model, each edge in the network has a delay function that gives the delay on the edge as a function of the traffic carried by the edge. Specifically, the delay function on each edge consists of a free-flow term (a constant), plus a congestion term that is the ratio of the traffic on the edge to the conductance of the edge, raised to a fixed power. The cost to the network designer of increasing the conductance of an edge by one unit is an edge-specific constant.
Our objective is to select an allocation of the improvement budget to the edges that minimizes the social cost of equilibria in the improved network.

The continuous network design problem, along with the discrete network design problem that deals with the creation (rather than improvement) of edges, has been referred to as ``one of the most difficult and challenging problems facing transport''~\cite{YangH98}. The CNDP is generally formulated as a mathematical program with the budget allocated to each edge and the traffic at equilibrium as variables. Since the traffic is constrained to be at equilibrium, such a formulation is also called a Mathematical Program with Equilibrium Constraints (MPEC). Further, since the traffic at equilibrium is itself obtained as a solution to a optimization problem, this is also a bilevel optimization problems. Both bilevel optimization problems and MPECs have a number of other applications and have been studied independent of the CNDP as well (e.g.,~\cite{ColsonMS07}).

Owing both to the rich structure of the problem and its practical relevance, the CNDP has received considerable attention in transportation research. Because of the nonconvexity and the complex nature of the constraints, the bulk of the literature focuses on heuristics, and proposed algorithms are evaluated by performance on test data rather than formal analysis. Many of these algorithms are surveyed in~\cite{YangH98}. More recent papers give algorithms that obtain global optima~\cite{LiYZM12,LuathepSLLL11,WangL10}, but make no guarantees on the quality of solutions that can be obtained in polynomial time.

In this paper, we consider a model with fixed demands, separable polynomial delay functions on the edges and constant improvement costs. This particular model, and further restrictions of it, have been the focus of considerable attention, e.g.,~\cite{FrieszCM92,HarkerF84,Marcotte86}, and is frequently used for test instances. The model captures many of the essential characteristics of the more general problems, such as the bilevel and nonconvex nature of the problem and the equilibrium constraints. Our work thus gives the first algorithmic results with proven output quality and runtime for the network improvement problem.

\paragraph{Our Contributions.} We first focus on general graphs, and show that a simple algorithm that relaxes equilibrium constraints on the flow gives an approximation guarantee that is tight for linear delays.

\begin{packed_itemize}
\item We show that for general networks with multiple sources and sinks and polynomial delays, a simple algorithm gives an $O(d/ \log d)$-approximation to the optimal allocation, where $d$ is the maximum degree of the polynomial delay functions. If $d=1$, this gives a $4/3$-approximation algorithm.
\item We show that the approximation ratio for linear delays is tight, even for the single-commodity case: by a reduction similar to that used by Roughgarden~\cite{Roughgarden06}, we show that it is NP-hard to obtain an approximation ratio better than $4/3$.
\end{packed_itemize}

The hardness result crucially depends on the generality of the network topology. The practical relevance of the network improvement problem then motivates us to consider restricted topologies of networks, for which we give both polynomial-time algorithms and further hardness results. We restrict ourselves to single source and sink networks for the following results.

\begin{packed_itemize}
\item In graphs consisting of parallel $s$-$t$ paths with linear delays, we show that even though the problem is non-convex, first-order conditions are sufficient for optimality. We utilize this property and the special structure of the first-order conditions to give an optimal polynomial-time algorithm. If each path consists of a single edge, we give a particularly simple optimal algorithm.
\item In contrast to our previous hardness result, we show that even in graphs with linear delays and with very simple topologies consisting of a series of parallel links, called \emph{series-dipole graphs}, obtaining the optimal allocation for a given investment budget is NP-hard.
\item Lastly, in series-parallel networks with polynomial delays, we show that there exists a fully-polynomial time approximation scheme. Our algorithm is based on discretizing simultaneously over the space of flows and allocations and showing there is a near-optimal flow and allocation in the discretized space which can be obtained efficiently for series-parallel networks. The discretized flow may not correspond to the equilibrium flow; however, in series-parallel graphs, we show that the delay at equilibrium is at most the delay of the discretized flow.
\end{packed_itemize}

Our work thus presents a fairly comprehensive set of approximation guarantees for the problem of network improvement. We give tight bounds on the approximability of the problem in general networks, and optimal and near-optimal algorithms for restricted topologies. Further, we show that the problem is NP-hard --- though weakly so --- even in very restricted topologies. Our results thus supplement the work in transportation research on the problem, by formalizing the intuition that the problem is hard, and giving tight approximation algorithms for a number of cases.

\section{Related Work}
\label{sec:related}

Routing games as a model of traffic on roads were introduced by Wardrop in 1952~\cite{Wardrop52}. Beckmann et al.~\cite{BeckmannMW56} showed that equilibria in routing games are obtained as the solution to a strictly convex optimization problem if all delay functions are increasing, thus establishing the existence and uniqueness of equilibria. Wardrop's model, and our work, focuses on nonatomic routing games where the traffic controlled by each player is infinitesimal. This is a good assumption for road traffic since an individual driver has negligible impact on the delay. However, many other models of traffic in routing games are studied as well. For example, in atomic games, players can control significant traffic; this traffic may be splittable (e.g.,~\cite{HayrapetyanTW06,CatoniP91}) or unsplittable~(e.g.,~\cite{Rosenthal73}), depending on whether a player may split her traffic among multiple routes.

The problem of obtaining formal bounds on the efficiency of equilibria in games was first studied by Koutsoupias and Papadimitriou~\cite{KoutsoupiasP99}. The price of anarchy --- the ratio of the social cost at the worst equilibrium routing to the social cost of an optimal routing --- was later introduced by Papadimitriou~\cite{Papadimitriou01} as a formal measure of inefficiency. For nonatomic routing games with the social cost given by the average delay, the price of anarchy is known to be 4/3 for linear delays~\cite{RoughgardenT02}, and $\Theta(p/\log p)$ for delay functions that are polynomials of degree $p$~\cite{Roughgarden03}.

Significant research has gone into the use of tolls to improve the efficiency of routing games. It is known that tolls corresponding to the marginal delay of an optimal flow induce the optimal flow as an equilibrium~\cite{BeckmannMW56}. More generally, tolls to induce any minimal routing can be obtained as the solution to a linear program~\cite{FleischerJM04,KarakostasK04,YangH04}. A similar result for atomic splittable routing games was shown independently by Swamy~\cite{Swamy07} and Yang and Zhang~\cite{YangZ08}. These results compute efficiency without adding tolls to the delay of the players, and hence disregard the effect of tolls on the utility of the players in this computation. If tolls are added to the delays in the computation of efficiency, then even for linear delays, it is NP-hard to obtain tolls that give better than a 4/3-approximation~\cite{ColeDR06}. Since a 4/3-approximation can be obtained by not applying any tolls, this result says that it is NP-hard to find improving tolls. 

Motivated by Braess's paradox, where removal of an edge improves the efficiency of the equilibrium routing, Roughgarden~\cite{Roughgarden06} studies the problem of removing edges from a network to minimize the delay at equilibrium in the resulting network. The problem is strongly NP-hard, and there is no algorithm with an approximation ratio better than $n/2$ for general delay functions. Another method studied for improving the efficiency of routing is Stackelberg routing, which assumes that a fraction of the traffic is centrally controlled and is routed to improve efficiency. Obtaining the optimal Stackelberg routing is NP-hard even in parallel links~\cite{Roughgarden04}, although a fully-polynomial time approximation scheme is known for this case~\cite{KumarM02}.

The importance of the network improvement problem has caused it to receive significant attention in transportation research, where the version we are considering is known as the continuous network design problem'. Early research focused on heuristics that did not give any guarantees about the quality of the solution obtained. These were based on sensitivity analysis of the variational inequality to implement gradient-descent~\cite{FrieszTCM90}, as well as derivative-free algorithms~\cite{AbdulaalL79}. For a survey of other algorithms and early results on the continuous network design problem, see~\cite{YangH98}.

More recent work in the transportation literature has also tried to obtain algorithms that obtain global minima for the continuous network design problem. Early approaches include the use of simulated annealing~\cite{FrieszCM92} and genetic algorithms~\cite{Yin00}. Li et al. ~\cite{LiYZM12} reduce the problem to a sequence of mathematical programs with concave objectives and convex constraints, and show that the accumulation point of the sequence of solutions is a global optimum. If the sequence is terminated early, they show weak bounds on the quality of the solution that are consequential only under strong assumptions on the delay function and agents' demands. Wang and Lo~\cite{WangL10} reformulate the problem as a mixed integer linear program (MILP) by replacing the equilibrium constraints by constraints containing binary variables for each path, and using a number of linear segments to approximate the delay functions. This approach was further developed by Luathep et al.~\cite{LuathepSLLL11} who replaced the possibly exponentially many path variables by edge variables and gave a cutting constraint algorithm for the resulting MILP. The last two methods were shown to converge to a global optimum of the linearized approximation in finite time, but require solving a MILP with a possibly exponential number of variables and constraints.

A variant of the problem where the initial conductance of every edge in the network is zero, and the budget is part of the objective rather than a hard constraint, is studied by Marcotte~\cite{Marcotte86} and, independent of our work, by Gairing et al.~\cite{GairingHK13}. Unlike the work cited earlier, these papers give provable guarantees on the performance of polynomial-time algorithms. Marcotte gives an algorithm that is a 2-approximation for monomial delay functions and a $5/4$-approximation for linear delay functions. Gairing et al. present an algorithm that improves upon these upper bounds, give an optimal polynomial-time algorithm for single-commodity instances, and show that the problem is APX-hard in general. In our problem, the budget is a hard constraint, and edges may have arbitrary initial capacities. Our problem is demonstrably harder than this variant: e.g., in contrast to the polynomial-time algorithm for single-commodity instances given by Gairing et al.~\cite{GairingHK13}, we show that in our problem no approximation better than $4/3$ is possible even in single-commodity instances.

\section{Notation and Preliminaries}
\label{sec:notation}

$G=(V,E)$ is a directed graph with $|E|=m$ and $|V|=n$. If $G$ is a two-terminal graph, then it has two special vertices $s$ and $t$ called the source and the sink, collectively called the terminals. A $u$-$v$ path $p=((v_0,v_1),(v_1, v_2),\dots,(v_{k-1},v_k))$ is a sequence of edges with $v_0 = u$, $v_k = v$ and edges $(v_i,v_{i+1}) \in E$. In a two-terminal graph, each edge lies on an $s$-$t$ path, and we use $\mP$ to denote the set of all $s$-$t$ paths. Given vertices $s'$, $t'$ in graph $G$, vector $(f_e)_{e \in E}$ is an $s'$-$t'$ flow of value $d$ if the following conditions are satisfied:
\begin{align*}
\sum_{(u,w) \in E} f_{uw} - \sum_{(w,u) \in E} f_{wu} & = 0, ~ \forall u \in V \setminus \{s',t'\} \\
\sum_{(s,w) \in E} f_{sw} - \sum_{(w,s) \in E} f_{ws} & = d \\
f_e & \ge 0, ~ \forall e \in E \, .
\end{align*}

We use $|f|$ to denote the value of flow $f$. A path decomposition of an $s'$-$t'$ flow $f$ is a set of flows $\{f_p\}$ along $s'$-$t'$ paths $p$ that satisfies $f_e = \sum_{p: e \in p} f_p$, $\forall e$. A path decomposition for flow $f$ so that $f_p > 0$ for at most $m$ paths can be obtained in polynomial time~\cite{AhujaMO93}. Without reference to a path decomposition, we use $f_p > 0$ to indicate that $f_e > 0$ for all $e \in p$.

Each edge $e \in E$ has an increasing delay function $l_e(x)$ that gives the delay on the edge as a function of the flow on the edge. For flow $f$ and path $p$, $l_p(f) := \sum_{e \in p} l_e(f_e)$ is the delay on path $p$. Further, $f_e l_e(f_e)$ is the total delay on edge $e$, and the total delay of flow $f$ is $\sum_{e \in E} f_e l_e(f_e)$.

\paragraph{Routing games.} A routing game is a tuple $\Gamma = (G,l,K)$ where $G$ is a directed graph, $l$ is a vector of delay functions on edges, and $K = \{s_i, t_i, d_i\}_{i \in I}$ is a set of triples where $d_i$ is the total traffic routed by players of commodity $i$ from $s_i$ to $t_i$. Each player of commodity $i$ in a routing game controls infinitesimal traffic and picks an $s_i$-$t_i$ path $p$ on which to route her flow, as her strategy. The strategies induce a flow $f^i$. Let $f = \sum_i f^i$, then the delay of a player that selects path $p$ as her strategy is $l_p(f)$. In the single-commodity case, $|I| = 1$. We say a flow $f$ is a valid flow for routing game $\Gamma$ if $f = \sum_{i \in I} f^i$ where each $f^i$ is an $s_i$-$t_i$ flow of value $d_i$.

At equilibrium in a routing game, each player minimizes her delay, subject to the strategies of the other players. The equilibrium in a routing game is also called a Wardrop equilibrium.

\begin{definition}
A set of flows $\{f^i\}_{i \in I}$ where $f^i$ is an $s_i$-$t_i$ flow of value $d_i$ is a Wardrop equilibrium if for all $i \in I$, for any $s_i$-$t_i$ paths $p$, $q$ such that $f_p^i > 0$,  $l_p(f) \le l_q(f)$.\label{defn:wardrop}
\end{definition}

We use \emph{equilibrium flow} to refer to the set of flows $\{f^i\}_{i \in I}$ that form a Wardrop equilibrium. The equilibrium flow is also obtained as the solution to the following mathematical program. Since the delay functions are increasing, the program has a strictly convex objective with linear constraints, and hence the first-order conditions are necessary and sufficient for optimality. Further, because of strict convexity, the equilibrium flow is unique.

\[
\min \sum_{e \in E} \int_0^{f_e} l_e(x) \, dx, ~ \mbox{s.t. $f = \sum_{i \in I} f^i$ and $f^i$ is an $s_i$-$t_i$ flow of value $d_i$} \, .
\]

Definition~\ref{defn:wardrop} then corresponds to the first-order conditions for optimality of the convex program. By Definition~\ref{defn:wardrop}, each $s^i$-$t^i$ path $p$ with $f_p^i > 0$ has the same delay at equilibrium. Let $L^i$ be this common path delay. Then the total delay $\sum_e f_e l_e(f_e) = \sum_i d_i \, L^i$, where $f = \sum_{i \in I} f^i$ and $\{f^i\}_{i \in I}$ is the equilibrium flow. The average delay is $\sum_i d_i \, L^i / \sum_i d_i$. 

\paragraph{Network Improvement.} In the network improvement problem, we are given a routing game $\Gamma$, where the delay function on each edge $e$ is of the form $l_e(x) = (x/c_e)^{n_e} + b_e$. We call $c_e$ the \emph{conductance}, $1/c_e$ the \emph{resistance}, and $b_e$ the \emph{length} of edge $e$. We assume $c_e \ge 0$ and $n_e > 0$, and hence the delay is an increasing function of the flow on the edge. The delay function on an edge is affine if $n_e = 1$.  Each edge has a marginal cost of improvement, $\mu_e$.  Upon spending $\beta_e$ to improve edge $e$, the conductance of the edge increases to $c_e + \mu_e \beta_e$. For a given budget $B$, a valid allocation is a vector $\beta = (\beta_e)_{e \in E}$ so that $\sum_e \beta_e \le B$ and $\beta_e \ge 0$ for each $e \in E$. The objective is to determine a valid allocation of the budget $B$ to the edges to minimize the average delay obtained at equilibrium with the modified delay functions $l_e(x, \beta_e) = \left(x/(c_e + \mu_e \beta_e)\right)^{n_e} + b_e$. Delay functions are affine if $n_e = 1$ on all edges.

Let $\beta = (\beta_e)_{e \in E}$ be the vector of edge allocations. Since the flow at equilibrium is unique, for any $\beta$, the average delay at equilibrium is unique. $L(\beta)$ is this unique average delay as a function of the edge allocations. When considering a flow $f$ other than the equilibrium flow, we use $L(f, \beta)$ to denote the average delay of flow $f$ with the modified delay functions. We will also have occasion to allocate budget to units other than edges, e.g., paths, and will slightly abuse notation to express the average delay in terms of these units.

Our problem corresponds to the following (non-linear, possibly non-convex) optimization problem:

\begin{equation}
\min_\beta L(\beta), ~ \mbox{s.t. } \sum_e \beta_e \le B, ~ \beta_e \ge 0 ~ \forall e \in E \, . \label{eqn:main}
\end{equation}

We use $\beta^*$ to denote an optimal solution for this problem, and define $L^* := L(\beta^*)$. As is common in nonlinear optimization, instead of an exact solution we will obtain a solution that is within a specified additive tolerance of $\epsilon$ of the exact solution, i.e., a valid allocation $\hat{\beta}$ so that $L(\hat{\beta}) - \epsilon \le L(\beta^*)$. An algorithm is polynomial-time if it obtains such a solution in time polynomial in the input size and $\log (1/\epsilon)$. Since the problem has linear constraints, the first-order conditions are necessary for optimality~(e.g.,~\cite{Ye05}). By the first-order conditions for optimality, for any edges $e$ and $e'$,

\begin{align}
\beta_e > 0 & \Rightarrow \frac{\partial L(\beta)}{\partial \beta_e} \le \frac{\partial L(\beta)}{\partial \beta_{e'}} \, . \label{eqn:foc}
\end{align}

For any edge $e$ and allocation $\beta$, define $c_e(\beta) = c_e + \mu_e \beta_e$. For a path $p$, $b_p = \sum_{e \in p} b_e$ as the length of path $p$. For affine delay functions, define $c_p(\beta) = 1/ \sum_{e \in p} \frac{1}{c_e(\beta)}$ as the conductance of path $p$, and the resistance of path $p$ as the reciprocal of the conductance: $r_p(\beta) = 1/c_p(\beta)$. For $k \in \mathbb{Z}_+$, $[k] := \{1, 2, \dots, k\}$.  The following statement is easily verified; it is used often in our proofs, hence we state it formally below for ease of reference.

\begin{fact}
For $x$, $y$, $z$ $\in \mathbb{R}_{\ge 0}$ and $k \in \mathbb{R}_{> 0}$, 
\[
 \frac{x}{y} > k \iff \frac{x+kz}{y+z} < \frac{x}{y} \iff \frac{x-kz}{y-z} > \frac{x}{y} \, .
\]
\label{fact:xyz}
\end{fact}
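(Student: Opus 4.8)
The plan is to show that, under the sign conditions implicit in the statement (namely $y>0$, $z>0$, and $y>z$, so that all three displayed ratios are well-defined and have positive denominators), each of the two inequalities on the right is equivalent to the single linear inequality $x > ky$, which is exactly the leftmost condition $\tfrac{x}{y} > k$. Chaining these two equivalences then gives the ``$\iff$'' triple.

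First I would establish $\tfrac{x}{y} > k \iff \tfrac{x+kz}{y+z} < \tfrac{x}{y}$. Since $y>0$ and $y+z>0$, multiplying the right-hand inequality through by the positive quantity $y(y+z)$ preserves its direction and reduces it to $y(x+kz) < x(y+z)$, i.e.\ $kyz < xz$; dividing by $z>0$ yields $ky < x$, which is the left-hand condition. For the other equivalence $\tfrac{x}{y} > k \iff \tfrac{x-kz}{y-z} > \tfrac{x}{y}$, I use $y>0$ and $y-z>0$: multiplying the right-hand inequality by $y(y-z)>0$ keeps its direction and gives $y(x-kz) > x(y-z)$, i.e.\ $-kyz > -xz$, i.e.\ again $kyz < xz$, hence $ky < x$. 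Combining, all three statements are equivalent to $x>ky$.

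There is essentially no obstacle in this argument; the only point that needs a moment's care is the sign of the denominators, since cross-multiplication reverses an inequality when the multiplier is negative. In particular one must note that $y-z>0$ in the intended range of application (and $z>0$, so that dividing the intermediate inequality $kyz<xz$ by $z$ is legitimate and strict). With those conventions fixed, the verification is the routine algebra sketched above, which is why the statement is recorded as a Fact.
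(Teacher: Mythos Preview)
Your proposal is correct and is exactly the routine verification the paper has in mind; the paper itself does not supply a proof but simply records the statement as ``easily verified,'' so your cross-multiplication argument reducing each displayed inequality to $x>ky$ is precisely the intended check. Your remark that one needs $y>0$, $z>0$, and $y>z$ for the strict equivalences to hold (and for all denominators to be positive) is a valid observation about the implicit hypotheses, and nothing further is required.
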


\section{Tight Bounds in General Graphs}
\label{sec:general}

\subsection{Simple approximation algorithms for general graphs}

We start with a simple polynomial-time algorithm that gives a good approximation for the general network improvement problem: with multiple sources and sinks, in general graphs, and with polynomial delay functions. The algorithm follows from the observation that relaxing the equilibrium constraints on the flow yields a convex problem. Although the algorithm described here is a very natural one and lower bounds for its performance were given in~\cite{Marcotte86}, the upper bounds shown here appear not to have been noticed earlier. In fact, in the next section we will show that the bound obtained by this simple algorithm is tight in the case of affine delay functions.

Consider the following problem:

\[
\mbox{COPT: } ~ \min_{x, \beta} \sum_e x_e l_e(x_e, \beta_e) \mbox{ s.t. $x$ is a valid flow, and $\beta$ is a valid allocation.}
\]

\noindent It is obvious the constraints for COPT are convex. We now show that if the delay functions $l_e(x_e, \beta_e)$ are polynomial, then the objective is a convex function as well.

\begin{lemma}
The objective function for COPT with polynomial delays is convex.
\label{lem:coptconvex}
\end{lemma}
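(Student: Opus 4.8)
The plan is to show that each summand $x_e l_e(x_e, \beta_e)$ is a convex function of the two variables $(x_e, \beta_e) \in \mathbb{R}_{\ge 0} \times \mathbb{R}_{\ge 0}$; since a sum of convex functions over disjoint sets of variables is convex, this suffices. Writing $l_e(x,\beta) = (x/(c_e + \mu_e \beta))^{n_e} + b_e$, the term $x \cdot b_e$ is linear and hence harmless, so the work reduces to showing that $g(x,\beta) := x^{n_e+1} / (c_e + \mu_e \beta)^{n_e}$ is convex on the nonnegative orthant (with $n_e > 0$, $c_e \ge 0$, $\mu_e \ge 0$; when $\mu_e = 0$ this is trivially convex in $x$ alone, so assume $\mu_e > 0$). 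After the affine change of variable $y = c_e + \mu_e \beta$ (which preserves convexity and keeps $y$ ranging over a subinterval of $\mathbb{R}_{> 0}$), it is enough to show that $h(x,y) = x^{n+1}/y^{n}$ is convex on $\mathbb{R}_{\ge 0} \times \mathbb{R}_{> 0}$, where I abbreviate $n = n_e$.

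The cleanest route is to recognize $h$ as a \emph{perspective}-type function. First I would observe that $\phi(t) := t^{n+1}$ is convex on $\mathbb{R}_{\ge 0}$ for $n > 0$. Its perspective $\tilde\phi(x, y) = y\,\phi(x/y) = y \cdot (x/y)^{n+1} = x^{n+1}/y^{n}$ is then convex on $\mathbb{R}_{\ge 0} \times \mathbb{R}_{>0}$ by the standard fact that the perspective of a convex function is convex. That is exactly $h$, so we are done. Alternatively, if one prefers to avoid invoking the perspective operation, I would compute the Hessian of $h$ directly: one finds $h_{xx} = n(n+1) x^{n-1} y^{-n}$, $h_{yy} = n(n+1) x^{n+1} y^{-n-2}$, and $h_{xy} = -n(n+1) x^{n} y^{-n-1}$, so the Hessian is $n(n+1) x^{n-1} y^{-n-2}$ times the rank-one positive semidefinite matrix $\begin{pmatrix} y \\ -x \end{pmatrix}\begin{pmatrix} y & -x\end{pmatrix}$ (using $y \cdot y^{-n-2}\cdot x^{n-1}$ bookkeeping carefully); since the scalar prefactor is nonnegative on the domain, the Hessian is positive semidefinite, giving convexity. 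The perspective argument is shorter and avoids the slightly delicate boundary behavior when $n < 1$ (where $h$ is not twice differentiable at $x = 0$), so I would lead with that and mention the Hessian only as a remark.

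The one genuine subtlety — the step I expect to need the most care — is the case $n_e < 1$, where $t^{n_e+1}$ is still convex (since $n_e + 1 > 1$) but its derivatives blow up or vanish at the boundary $x = 0$; here the perspective formulation is robust because convexity of $t \mapsto t^{n_e+1}$ on the closed half-line $\mathbb{R}_{\ge 0}$ is elementary and does not require differentiability, and the perspective of a convex function on a convex domain is convex without smoothness hypotheses. I would also note explicitly that the domain of $\beta$, namely $\beta_e \ge 0$, maps into $y = c_e + \mu_e\beta_e \ge c_e \ge 0$, and that when $c_e = 0$ the feasible flow must have $x_e = 0$ whenever $\beta_e = 0$ for the delay to be finite; restricting attention to the region where the objective is finite, $h$ extends to a convex function (with value $0$ along $x = 0$ and $+\infty$ for $x > 0, y = 0$), which is the standard closed extension of the perspective. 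Wrapping up, summing the per-edge convex functions $x_e l_e(x_e,\beta_e)$ over all $e \in E$ yields a convex objective for COPT, as claimed.
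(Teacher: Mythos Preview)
Your proposal is correct. The paper takes the second route you sketch: it computes the Hessian of $l(x,\beta) = x^{n+1}/(c+\mu\beta)^n + bx$ directly in the variables $(x,\beta)$ and verifies $\alpha^T H \alpha = \frac{n(n+1)x^{n-1}}{(c+\mu\beta)^n}\bigl(\alpha_1 - \alpha_2 \tfrac{\mu x}{c+\mu\beta}\bigr)^2 \ge 0$, which is exactly your rank-one observation written out. Your primary argument via the perspective transform is a genuinely cleaner route: recognizing $x^{n+1}/y^n = y\,\phi(x/y)$ with $\phi(t)=t^{n+1}$ convex on $\mathbb{R}_{\ge 0}$ gives convexity in one line and, as you note, sidesteps the differentiability issue at $x=0$ when $0 < n_e < 1$ that the Hessian computation glosses over. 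The paper's approach is more self-contained for a reader unfamiliar with the perspective operation, but yours is shorter and more robust at the boundary.
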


\begin{proof}
We will show that the Hessian for each term in the summation in the objective is positive semi-definite. This will show that each individual term is convex, and hence the objective is convex as well. Each term in the summation is of the form

\[
l(x, \beta) := \frac{x^{n+1}}{(c + \mu \beta)^n} + bx \, .
\]

\noindent We will show that $l(x, \beta)$ is a convex function for the proof. The following derivatives are easily obtained:

\begin{align}
\frac{\partial l}{\partial x} = \frac{(n+1) x^n}{(c+\mu \beta)^n} + b \, , ~ \frac{\partial^2 l}{\partial x^2} = \frac{(n+1)nx^{n-1}}{(c+\mu \beta)^n} \label{eqn:pdx} \\
\frac{\partial l}{\partial \beta} = - \frac{n \mu x^{n+1}}{(c+\mu \beta)^{n+1}} \, , ~ 
\frac{\partial^2 l}{\partial \beta^2} = \frac{(n+1)n\mu^2 x^{n+1}}{(c+\mu \beta)^{n+2}} \label{eqn:pdbeta} \\
\frac{\partial^2 l}{\partial \beta \, \partial x} = \frac{\partial^2 l}{\partial x \, \partial \beta} = - \frac{n(n+1) \mu x^n}{(c+ \mu \beta)^{n+1}} \label{eqn:pdxbeta}
\end{align}

The Hessian for function $l(x, \beta)$ is given by

\[
H := \left[ \begin{array}{ll}
		\frac{\partial^2 l}{\partial x^2} & \frac{\partial^2 l}{\partial x \partial \beta} \\
		\frac{\partial^2 l}{\partial \beta \partial x} & \frac{\partial^2 l}{\partial \beta^2} 
	\end{array} \right]
\]

For any vector $\alpha = [ \alpha_1 ~ \alpha_2 ]^T$ we will show that $\alpha^T H \alpha \ge 0$, which proves the positive semi-definiteness of $H$ and hence the convexity of $l(x, \beta)$. By the expressions from~(\ref{eqn:pdx}),~(\ref{eqn:pdbeta}) and~(\ref{eqn:pdxbeta}),

\begin{align*}
\alpha^T H \alpha & = \alpha_1^2 \frac{n (n+1) x^{n-1}}{(c+\mu \beta)^n} + \alpha_2^2 \frac{n(n+1)x^{n+1} \mu^2}{(c+\mu \beta)^{n+2}} - 2 \alpha_1 \alpha_2  \frac{n(n+1)\mu x^n}{(c+\mu \beta)^{n+1}} \\
	& = \frac{n(n+1) x^{n-1}}{(c+\mu \beta)^n} \left( \alpha_1^2 + \alpha_2^2 \frac{x^2 \mu^2}{(c+\mu \beta)^2} - 2 \alpha_1 \alpha_2 \frac{x\mu}{(c+ \mu \beta)} \right) \\
	& = \frac{n(n+1) x^{n-1}}{(c+\mu \beta)^n} \left( \alpha_1 - \alpha_2 \frac{x \mu}{c+\mu \beta} \right)^2 \ge 0 \, .
\end{align*}

\end{proof}

The solution $(\hat{x}, \hat{\beta})$ to problem COPT can thus be obtained in polynomial time. Our approximation algorithm then simply returns the allocation $\hat{\beta}$ obtained by solving COPT.

\begin{lemma}
If the delay function on every edge is affine, then the delay obtained at equilibrium for the allocation $\hat{\beta}$ is a $4/3$-approximation to the  delay at equilibrium for the optimal allocation $\beta^*$. In general if all delay functions are polynomials of degree at most $p$, then the delay obtained at equilibrium for $\hat{\beta}$ is an $O(p/\log p)$-approximation to the delay at equilibrium for the optimal allocation $\beta^*$.
\end{lemma}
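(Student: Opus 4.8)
The plan is to relate the delay of the equilibrium flow under allocation $\hat\beta$ to the optimal cost $L^*$ by chaining two inequalities through the convex-relaxation optimum $(\hat x, \hat\beta)$. Write $\mathrm{COPT}$ for the optimal value of the convex program, i.e. $\mathrm{COPT} = \sum_e \hat x_e l_e(\hat x_e, \hat\beta_e)$, normalized by $\sum_i d_i$ to match the "average delay" convention. The first step is the easy direction: $\mathrm{COPT} \le L^*$. Indeed, the pair consisting of $\beta^*$ together with the equilibrium flow for $\beta^*$ is a feasible point for $\mathrm{COPT}$ (a valid flow and a valid allocation), and its objective value is exactly the total delay at equilibrium for $\beta^*$, which is $\left(\sum_i d_i\right) L^*$. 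So minimizing can only decrease the value, giving $\mathrm{COPT}/\sum_i d_i \le L^*$.

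The second and main step is to bound $L(\hat\beta)$, the average delay at equilibrium \emph{for the fixed allocation} $\hat\beta$, in terms of $\mathrm{COPT}$. Here is the point: once $\hat\beta$ is fixed, the modified delay functions $l_e(\cdot, \hat\beta_e) = \left(x/c_e(\hat\beta)\right)^{n_e} + b_e$ are fixed polynomial (affine when $n_e=1$) delay functions, and $L(\hat\beta)$ is the average delay of the Wardrop equilibrium in that fixed network, while $\mathrm{COPT}$ is at most the total delay of the \emph{optimal} flow in that same fixed network (the flow $\hat x$ is feasible for $\mathrm{COPT}$ with allocation $\hat\beta$, but the flow minimizing total delay in the fixed network with allocation $\hat\beta$ is one particular choice of $x$, so $\mathrm{COPT} \le$ that min-delay value — actually we want the reverse bound, so we instead note $\mathrm{COPT} = \min_{x,\beta}(\cdots) \le \min_x \sum_e x_e l_e(x_e,\hat\beta_e) = $ optimal-flow cost in the fixed network with allocation $\hat\beta$). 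Combining, $L(\hat\beta)$ is the equilibrium cost and $\mathrm{COPT}/\sum_i d_i$ is at most the optimal-flow (social optimum) cost, both in the same fixed instance. The price-of-anarchy theorems for nonatomic routing games — $4/3$ for affine delays~\cite{RoughgardenT02} and $\Theta(p/\log p)$ for degree-$p$ polynomials~\cite{Roughgarden03} — then give
\[
L(\hat\beta) \;\le\; \frac{4}{3}\cdot\frac{\mathrm{COPT}}{\sum_i d_i} \quad(\text{affine}), \qquad L(\hat\beta) \;\le\; O\!\left(\frac{p}{\log p}\right)\cdot\frac{\mathrm{COPT}}{\sum_i d_i} \quad(\text{degree } p).
\]

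Putting the two steps together yields $L(\hat\beta) \le \tfrac43 L^*$ in the affine case and $L(\hat\beta) = O(p/\log p)\cdot L^*$ in general, which is the claim. The one subtlety to get right — and the place I expect to have to be careful — is the direction of the inequality between $\mathrm{COPT}$ and the social optimum of the fixed instance: $\mathrm{COPT}$ minimizes jointly over $(x,\beta)$, so $\mathrm{COPT} \le \mathrm{SocOpt}(\hat\beta)$, and combined with the other step $\mathrm{COPT} \le L^*$ as well. The price-of-anarchy bound says equilibrium-cost $\le \rho \cdot \mathrm{SocOpt}(\hat\beta)$, but we need it against $\mathrm{COPT}$, not against $\mathrm{SocOpt}(\hat\beta)$; this works only because $(\hat x,\hat\beta)$ achieves $\mathrm{COPT}$ and $\hat x$ \emph{is} the social-optimal flow for the instance with allocation $\hat\beta$ — otherwise one could lower the objective by replacing $\hat x$ with the true optimal flow for $\hat\beta$, contradicting optimality of $(\hat x,\hat\beta)$ for $\mathrm{COPT}$. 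So in fact $\mathrm{COPT}/\sum_i d_i = \mathrm{SocOpt}(\hat\beta)$, and the price-of-anarchy bound applies directly. Verifying this identity (via the first-order optimality conditions of the convex program $\mathrm{COPT}$, which decouple in $x$ for fixed $\beta$) is the crux of the argument; the rest is bookkeeping with the normalization $\sum_i d_i$.
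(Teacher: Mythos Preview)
Your proposal is correct and follows essentially the same two-step chain as the paper: $(f^*,\beta^*)$ is feasible for COPT so $\mathrm{COPT}\le L^*$, and then the price-of-anarchy bound for the fixed instance with allocation $\hat\beta$ gives $L(\hat\beta)\le\rho\cdot\mathrm{COPT}$. Your extended discussion of whether $\hat x$ equals $\mathrm{SocOpt}(\hat\beta)$ is fine but slightly more than needed: since $(\hat x,\hat\beta)$ minimizes jointly, $\hat x$ is automatically the social-optimum flow for fixed $\hat\beta$, so equilibrium $\le \rho\cdot\mathrm{SocOpt}(\hat\beta)=\rho\cdot\mathrm{COPT}$ with no subtlety; the paper simply asserts this step in one line.
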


\begin{proof}
Let $f^*$ be the equilibrium flow obtained for the optimal allocation $\beta^*$, and let $f$ be the equilibrium flow for allocation $\hat{\beta}$. It is obvious that $\sum_e f_e^* l_e(f_e^*, \beta_e^*) \ge \sum_e \hat{x}_e l_e (\hat{x}_e, \hat{\beta}_e)$. For fixed affine delays, it is well-known that the total delay of the equilibrium routing is at most $4/3$ that of the the flow that minimizes the total delay~\cite{RoughgardenT02}. Thus, $\sum_e f_e l_e(f_e, \hat{\beta}_e) \le 4/3 \sum_e \hat{x}_e l_e (\hat{x}_e, \hat{\beta}_e)$. The statement of the lemma for affine delays follows. Further, the statement for general polynomial delays follows since the total delay of the equilibrium routing is known to be at most $O(p/ \log p)$ that of the the flow that minimizes the total delay~\cite{Roughgarden03}.
\end{proof}

\subsection{A nearly tight lower bound for affine delays}

We now show that the upper bounds obtained in the previous section are tight for affine delays, even for single-commodity routing games. We give a reduction from the problem of 2-Directed Disjoint Paths, which is known to be NP-complete~\cite{FortuneHW80}:

\begin{definition}[2-Directed Disjoint Paths (2DDP)]
Given a directed graph $G$ and vertices $s_1$, $s_2$, $t_1$ and $t_2$, do there exist $s_i$-$t_i$ paths $p_i$ such that $p_1$ and $p_2$ are vertex-disjoint?
\end{definition}

Note that the 2DDP problem is known to be solvable in polynomial-time if the graph is acyclic~\cite{ShiloachP78} or planar~\cite{Schrijver94}. Our reduction is essentially identical to that given by Roughgarden~\cite{Roughgarden06} for the problem of removing edges from a network to improve the total delay at equilibrium in the resulting network.

In our reduction, we allow the budget to be unbounded. We modify the graph for the 2DDP problem by adding vertices $s$, $t$ and edges $(s,s_1)$, $(s,s_2)$, $(t_1, t)$ and $(t_2,t)$. For all edges except the ones added, we choose $c_e = 0$, $b_e = 0$ and $\mu_e = 1$. Thus none of these edges cannot be used at equilibrium unless it is given a strictly positive allocation. For edges $(s,s_1)$ and $(t_2,t)$, we choose $c_e = 1$, $b_e = 1$, and $\mu_e = 1$. For edges $(s,s_2)$ and $(t_1,t)$, we choose $c_e = 1$, $b_e = 0$ and $\mu_e = 0$; thus any allocation to these edges does not affect the delay function. The demand between $s$ and $t$ is 1.

We now show that if the given instance of 2DDP contains two disjoint paths, then there exists an allocation that yields two vertex-disjoint $s$-$t$ paths with delay functions $x+1$ on both, and thus an average delay of $3/2$. If two vertex-disjoint paths do not exist, any allocation has average delay at least $2$ because the existence of a common vertex leads to inefficient routing, similar to that in the Braess graph.

\begin{figure}[h]
\psfrag{G}{$\displaystyle G$}
\psfrag{s}{$\displaystyle s$}
\psfrag{t}{$\displaystyle t$}
\psfrag{v}{$\displaystyle v$}
\psfrag{s1}{$\displaystyle s_1$}
\psfrag{s2}{$\displaystyle s_2$}
\psfrag{t1}{$\displaystyle t_1$}
\psfrag{t2}{$\displaystyle t_2$}
\psfrag{ss1}{$\displaystyle \frac{x}{1+\beta}$}
\psfrag{t2t}{$\displaystyle\frac{x}{1+\beta}$}
\psfrag{ss2}{$\displaystyle x$}
\psfrag{t1t}{$\displaystyle x$}
\centering \includegraphics[scale=0.5]{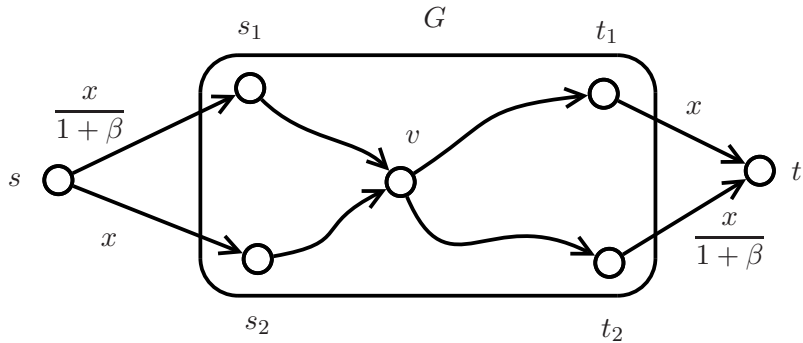}
\caption{The graph for reduction from 2DDP. Each of the edges in $G$ have $c_e = b_e = 0$ and $\mu_e = 1$. Edges $(s,s_2)$ and $(t_1,t)$ have $\mu_e = 0$.}
\end{figure}

\begin{lemma}
If $G$ contains two disjoint paths, then there is an allocation for the CNDP instance constructed with average delay $3/2$, otherwise any allocation has average delay at least $2$.
\label{lem:2ddphard}
\end{lemma}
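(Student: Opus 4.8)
The plan is to prove the two implications separately, using throughout the standard description of the (unique) Wardrop equilibrium $f$ for an allocation $\beta$ by node potentials $\phi\colon V\to\mathbb{R}$ with $\phi_s=0$, $\phi_t=L(\beta)$, $l_e(f_e,\beta_e)\ge \phi_w-\phi_u$ for every edge $e=(u,w)$, with equality whenever $f_e>0$. Two consequences will be used repeatedly: any $s$-$t$ walk all of whose edges carry positive flow has delay exactly $L(\beta)$ (the potentials telescope), and any $s$-$t$ walk has delay at least $L(\beta)$. Since the demand is $1$, $L(\beta)$ is simultaneously the common delay of flow-carrying $s$-$t$ paths and the average delay. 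I write $a:=f_{(s,s_1)}$ and $b:=f_{(t_1,t)}$, so $f_{(s,s_2)}=1-a$ and $f_{(t_2,t)}=1-b$ by conservation at $s$ and $t$; note that under the chosen parameters $l_{(s,s_1)}(\cdot)\ge 1$ and $l_{(t_2,t)}(\cdot)\ge 1$ for every allocation, while $l_{(s,s_2)}(1-a)=1-a$ and $l_{(t_1,t)}(b)=b$ exactly (these two edges have $\mu_e=0$, $c_e=1$). As usual for $2$DDP I also assume $s_1,s_2,t_1,t_2$ are distinct, $s_1,s_2$ have no in-edges and $t_1,t_2$ no out-edges in $G$.

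For the ``if'' direction, given vertex-disjoint paths $p_1$ ($s_1$-$t_1$) and $p_2$ ($s_2$-$t_2$) in $G$, I would allocate a large amount $N$ to every edge of $p_1\cup p_2$ and to $(s,s_1)$ and $(t_2,t)$, and $0$ elsewhere. Every edge of $G$ outside $p_1\cup p_2$ then has conductance $0$, so the only usable $s$-$t$ routes are $P_1=(s,s_1)\cdot p_1\cdot(t_1,t)$ and $P_2=(s,s_2)\cdot p_2\cdot(t_2,t)$, which are vertex-disjoint; the instance therefore reduces to a game on two parallel links with affine path-delays $\alpha_i x+1$, where $\alpha_i=1+\frac{1}{1+N}+\frac{|p_i|}{N}\to 1$ as $N\to\infty$ (the additive $1$ is the length of $(s,s_1)$, resp.\ $(t_2,t)$; the ``$1\cdot x$'' inside $\alpha_i$ is the congestion term of $(t_1,t)$, resp.\ $(s,s_2)$). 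Equating delays puts $\alpha_{3-i}/(\alpha_1+\alpha_2)$ units on $P_i$, with common (hence average) delay $\alpha_1\alpha_2/(\alpha_1+\alpha_2)+1\to\tfrac12+1=\tfrac32$ as $N\to\infty$. With an unbounded budget this gives average delay $3/2$ in the limit (exactly $3/2$ if infinite conductance is permitted, which is the spirit of ``unbounded''); in any case for every $\varepsilon>0$ some allocation achieves average delay at most $3/2+\varepsilon$, which is all the hardness argument needs.

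For the ``otherwise'' direction, I would fix an arbitrary allocation $\beta$ with equilibrium $f$ and classify each unit of flow by the pair $(i,j)$ of $s$-terminal edge and $t$-terminal edge it uses, writing $x_{ij}$ for the total flow of type $(i,j)$. If $x_{12}>0$, a flow-carrying type-$(1,2)$ path uses both $(s,s_1)$ and $(t_2,t)$, so $L(\beta)\ge l_{(s,s_1)}+l_{(t_2,t)}\ge2$. So assume $x_{12}=0$. If $x_{11}=0$ then $a=0$, hence $f_{(s,s_2)}=1$ and $l_{(s,s_2)}(1)=1$, and every flow-carrying path uses $(s,s_2)$; either some flow-carrying path also uses $(t_2,t)$ (so $L(\beta)\ge1+l_{(t_2,t)}\ge2$) or all flow uses $(t_1,t)$, whence $f_{(t_1,t)}=1$, $l_{(t_1,t)}(1)=1$, and $L(\beta)\ge1+1=2$; the case $x_{22}=0$ is symmetric. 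This leaves the main case $x_{11}>0$ and $x_{22}>0$: pick a flow-carrying type-$(1,1)$ path $(s,s_1)\cdot Q_1\cdot(t_1,t)$ and a flow-carrying type-$(2,2)$ path $(s,s_2)\cdot Q_2\cdot(t_2,t)$, where $Q_1$ ($s_1$-$t_1$ in $G$) and $Q_2$ ($s_2$-$t_2$ in $G$) consist entirely of positive-flow edges. If $Q_1$ and $Q_2$ were vertex-disjoint they would solve the $2$DDP instance, contradicting the hypothesis; so they share a vertex $v$. Splicing the prefix of $Q_1$ up to $v$ with the suffix of $Q_2$ from $v$, prepending $(s,s_1)$ and appending $(t_2,t)$, produces an $s$-$t$ walk $R$ all of whose edges carry positive flow, so $l_R(f)=L(\beta)$; but $R$ traverses $(s,s_1)$ and $(t_2,t)$, so $l_R(f)\ge l_{(s,s_1)}+l_{(t_2,t)}\ge2$. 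Hence $L(\beta)\ge2$ in every case (vacuously so if $\beta$ disconnects $s$ from $t$ in the usable subgraph), finishing the proof.

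The main obstacle is exactly this last case. The naive attempt — to find an $s$-$t$ path of delay strictly below $L(\beta)$ and contradict the equilibrium — fails, because in a Braess-type configuration all relevant paths are tight and have delay precisely $L(\beta)$. The right move is the dual one: the shared vertex $v$ lets us splice the two tight flow-carrying paths into a tight \emph{type-$(1,2)$} walk, which is therefore forced to have delay exactly $L(\beta)$, while its two unit-length endpoint edges force its delay to be at least $2$. The secondary subtlety is the correct treatment of the degenerate sub-cases $x_{11}=0$ or $x_{22}=0$: there the Braess gadget is not ``activated'' — one of the congested edges $(s,s_2)$ or $(t_1,t)$ carries the full unit of flow and by itself supplies the second unit of delay — so these must be handled directly rather than through the disjoint-paths hypothesis.
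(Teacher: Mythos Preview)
Your proof is correct and follows essentially the same approach as the paper: for the ``if'' direction both allocate large (or infinite) budget to the edges of the two disjoint paths and compute the resulting parallel-link equilibrium; for the ``otherwise'' direction both locate flow-carrying $s_1$--$t_1$ and $s_2$--$t_2$ paths, use the hypothesis to find a shared vertex $v$, and then exploit the potential/telescoping property of equilibrium flows to conclude $L(\beta)\ge 2$. Your version is simply more explicit---you package the argument via node potentials and a case split on the terminal-edge pair $(i,j)$, and you treat the degenerate sub-cases $x_{11}=0$ or $x_{22}=0$ directly---whereas the paper compresses these into the single assertion that both $(s,s_1)$ and $(t_2,t)$ must carry positive flow under the contradiction hypothesis $L(\beta)<2$.
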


\begin{proof}
If there exist two disjoint paths, we allocate infinite budgets to exactly the edges on these paths, thus reducing the delay functions on these edges to zero. We additionally allocate infinite budgets to edges $(s,s_1)$ and $(t_2,t)$. Then the flow that routes $1/2$ on the $s$-$s_1$-$t_1$-$t$ path and $1/2$ on the $s$-$s_2$-$t_2$-$t$ path is an equilibrium flow of average delay $3/2$.

Suppose for a contradiction that there do not exist two disjoint paths but the average delay at equilibrium for an allocation is less than $2$. Let $F$ be the subset of edges which carry strictly positive flow at equilibrium. Then $F$ must contain an $s_1$-$t_1$ path, as well as an $s_2$-$t_2$ path. To see this, note that $F$ cannot contain an  $s$-$s_1$-$t_2$-$t$ path, since the delay on this path would be at least 2. However, both $(s,s_1)$ and $(t_2,t)$ must carry positive flow. Therefore, there must be an $s_1$-$t_1$ path and an $s_2$-$t_2$ path in $F$. Since these paths cannot be vertex-disjoint, let $v$ be the common vertex. Then any $s$-$v$ path in $F$ must have delay at least 1, and any $v$-$t$ path must similarly have delay at least 1. Hence the delay at equilibrium must be at least 2.
\end{proof}
\section{Single and Parallel Paths}
\label{sec:singlepath}

\subsection{Single Paths} 
\label{sec:singlepath1}

We first consider the case where $G$ is a simple $s$-$t$ path. In this case, we show that the delay at equilibrium $L(\beta)$ is a convex function. Thus, obtaining the optimal allocation requires minimizing a convex function subject to linear constraints, which can be done in polynomial time by, e.g., interior-point methods~\cite{BoydV04}.

\begin{lemma}
If $G$ is an $s$-$t$ path, then $L(\beta)$ is convex.
\end{lemma}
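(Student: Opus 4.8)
When $G$ is a single $s$-$t$ path, there is only one $s$-$t$ path, so the unique equilibrium flow carries all demand $d$ along that path; there is no routing choice, and the equilibrium flow is independent of $\beta$. Hence $L(\beta)$ is simply the total delay (divided by $d$) of the fixed flow $f_e = d$ on every edge $e \in p$, evaluated with the improved delay functions:
\[
L(\beta) = \frac{1}{d}\sum_{e \in p} d\, l_e(d, \beta_e) = \sum_{e \in p}\left( \frac{d^{n_e}}{(c_e + \mu_e \beta_e)^{n_e}} + b_e \right).
\]
So the whole claim reduces to showing this explicit function of $\beta$ is convex on the feasible region $\{\beta : \beta_e \ge 0\}$.

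**Main argument.** Since a sum of convex functions is convex, and the constant terms $b_e$ and the linear-combination structure are harmless, it suffices to show that each term $g_e(\beta_e) := d^{n_e}/(c_e + \mu_e \beta_e)^{n_e}$ is a convex function of the single variable $\beta_e \ge 0$. Writing $u = c_e + \mu_e\beta_e > 0$ (an affine, hence convexity-preserving, change of variable), this is the function $u \mapsto d^{n_e} u^{-n_e}$, and $\tfrac{d^2}{du^2}\, u^{-n} = n(n+1)u^{-n-2} > 0$ for $n > 0$ and $u > 0$. So each $g_e$ is convex, and therefore $L(\beta)$ is convex. One can instead cite Lemma~\ref{lem:coptconvex}: the function $l(x,\beta) = x^{n+1}/(c+\mu\beta)^n + bx$ was already shown there to be jointly convex in $(x,\beta)$; restricting to the slice $x = d$ gives convexity in $\beta$ of $d\, l_e(d,\beta_e)$, and summing over $e \in p$ gives convexity of $L(\beta)$ (up to the positive scaling by $1/d$). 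I would phrase the proof using whichever is cleaner — probably the direct one-line second-derivative computation, since the reduction "single path $\Rightarrow$ equilibrium flow is fixed at $f_e = d$" is the only real content.

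**What needs care.** The only substantive point — and the "obstacle," though it is a small one — is justifying that on a single path the equilibrium flow does not depend on $\beta$: every edge lies on the unique $s$-$t$ path, flow conservation forces $f_e = d$ for all $e \in p$, and this holds regardless of the allocation since there is no alternative route over which Wardrop's condition could redistribute flow. Once that observation is made, $L(\beta)$ is literally the closed-form sum above and convexity is immediate from the elementary fact that $t \mapsto t^{-n}$ is convex on $(0,\infty)$ for $n > 0$, composed with the affine map $\beta_e \mapsto c_e + \mu_e\beta_e$. Edge cases $c_e = 0$ cause no trouble as long as $\beta_e > 0$ somewhere on such an edge (needed merely for $L$ to be finite); on the interior of the feasible region where $L$ is finite, convexity holds, and by continuity it extends to the closure.
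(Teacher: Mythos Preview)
Your proof is correct and follows essentially the same approach as the paper: observe that on a single path the equilibrium flow is forced to be $f_e=d$ on every edge, write $L(\beta)$ as the explicit sum $\sum_{e\in p}\bigl(d^{n_e}/(c_e+\mu_e\beta_e)^{n_e}+b_e\bigr)$, and conclude convexity because each term is the composition of the convex map $u\mapsto u^{-n_e}$ with an affine function. The only difference is that the paper's proof in this section is written for the affine case ($n_e=1$, giving $\sum_e d/c_e(\beta)+b_e$), whereas you carry through the general polynomial exponent; your extra remark invoking Lemma~\ref{lem:coptconvex} is a nice alternative but not needed.
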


\begin{proof} For an $s$-$t$ path $p$, the delay at equilibrium $L(\beta)$ for an allocation $\beta$ is $L(\beta) = \sum_{e \in p} \frac{d}{c_e(\beta)} + b_e$. Since $d$ and $b_e$ are fixed, minimizing $L(\beta)$ is equivalent to minimizing $\sum_{e \in p} 1/c_e(\beta)$, which is a convex function, since each $c_e(\beta)$ is an affine function. 
\end{proof}

\subsection{Parallel Paths}
\label{sec:ppaths}

When $G$ consists of parallel paths between $s$ and $t$, $L(\beta)$ may not be a convex function of $\beta$, and Figure~\ref{fig:nonconvex} gives an example of this nonconvexity. The graph shows the delay at equilibrium as the allocation to edge 1 is increased and allocation to edge 2 is decreased, keeping the total allocation equal to the budget $B=3$. We will show that for network improvement, the first-order conditions for optimality are sufficient. Hence, any solution that satisfies the first-order optimality conditions is a global minimum. We will then use this characterization to give a continuous greedy-like algorithm that uses the particular structure of the first-order optimality conditions to obtain an allocation.

\begin{figure}[h]
\centering
\subfloat[][The example.]{
\psfrag{l1}{$10x+90$}
\psfrag{p1}{$\mu_1 = 1$}
\psfrag{l2}{$5x$}
\psfrag{p2}{$\mu_2 = 0.1$}
\psfrag{d}{$d = 40$}
\raisebox{0.3in}{\includegraphics[scale=0.2]{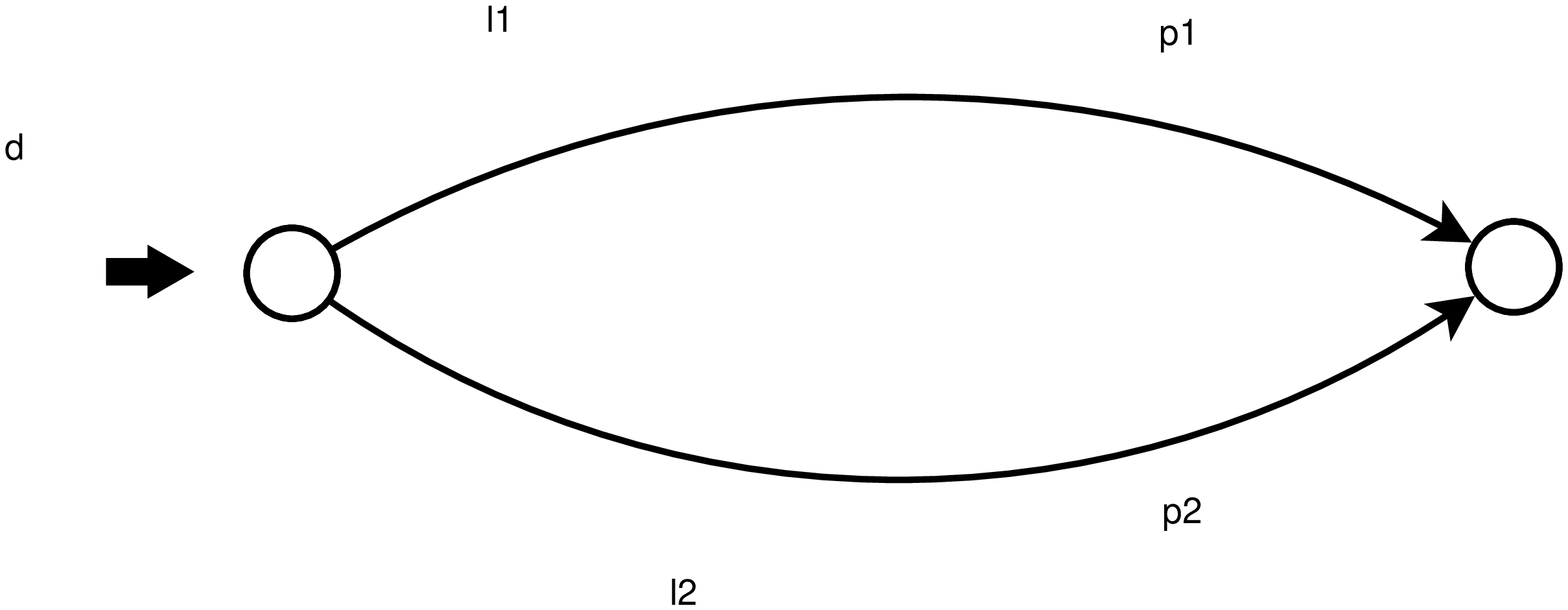}}
\label{fig:nonconvex1}}
\subfloat[][Graph showing nonconvexity of equilibrium delay.]{\includegraphics[scale=0.6]{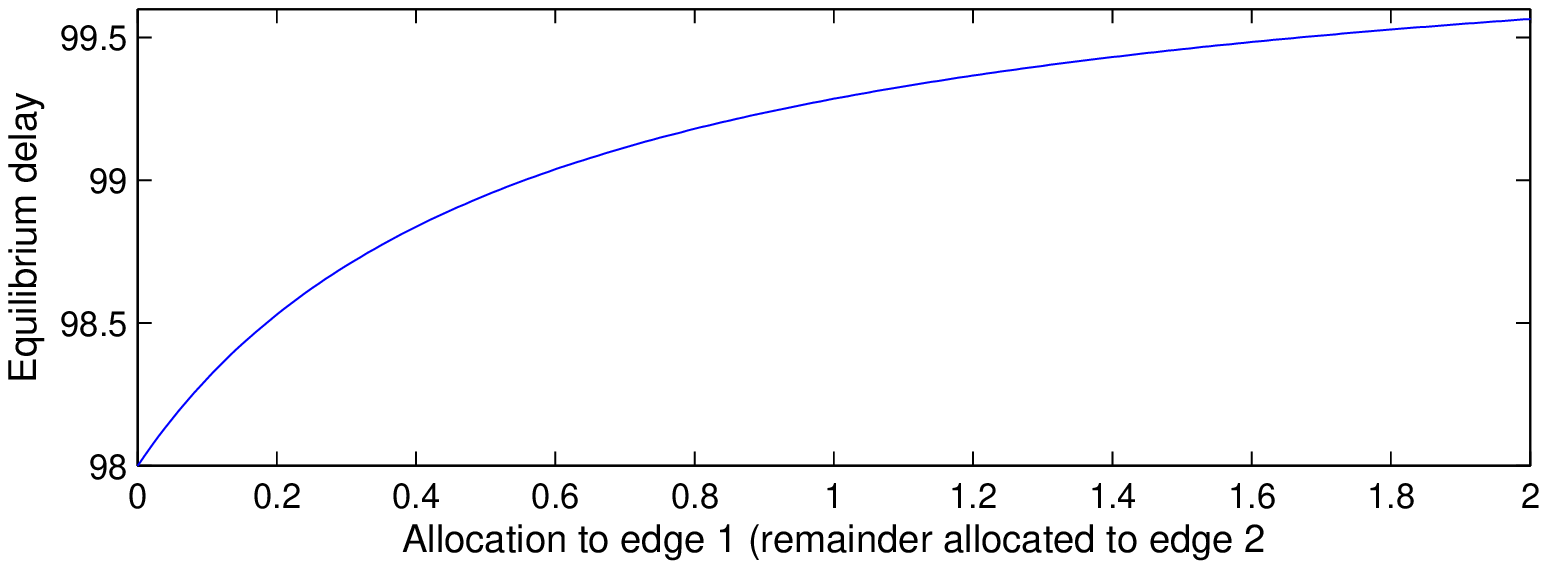}
\label{fig:nonconvex2}}
\caption{Example showing nonconvexity of equilibrium delay. Budget $B=3$.} \label{fig:nonconvex}
\end{figure}

Assume the given graph consists of $m$ $s$-$t$ paths. As before, $\mP$ is the set of all $s$-$t$ paths. We relabel paths so that $b_1 \le b_2 \le \dots \le b_m$ where $b_i$ is the length of path $i$, and for simplicity, we assume that all these inequalities are strict (the case where some inequalities are not strict requires only minor modifications to the algorithm and analysis but requires tedious notation). Here, we assume that the equilibrium flow obtained for the optimal allocation $L(\beta^*) > b_m$, and hence by definition of the equilibrium flow all paths must carry positive flow at equilibrium. Relaxing the assumption requires us to solve the problem multiple times for increasing subsets of $\mP$, and we leave this to the appendix.

By Section~\ref{sec:singlepath1}, we know how to maximize the conductance of any path $p$ for a fixed budget. Hence in the case of parallel paths, we will focus on obtaining an allocation to paths, rather than individual edges on paths, and assume that once the optimal allocation to paths is known, we can compute the allocation to the edges. We use $\beta_p$ to denote the (scalar) allocation to path $p$, $\beta$ to denote the vector of allocations to paths, and use $c_p(\beta_p)$ to denote the maximum conductance of path $p$ obtained for allocation $\beta_p$.

We first establish the concavity of path conductance.\footnote{Note that $1/c_p(\beta)$ is a convex function, since each $1/c_e(\beta)$ is convex. However, this does not imply that $c_p(\beta)$ is concave, since the reciprocal of a convex function is not necessarily concave, even in the single variable case. For example, both $x^2$ and $1/x^2$ are convex for $x > 0$.}

\begin{lemma}
For any path $p$, let $\beta'$ and $\beta''$ be two vectors of allocations to edges in path $p$ and define $\beta(\lambda) := \lambda \beta' + (1-\lambda) \beta''$ for $0 \le \lambda \le 1$ and $c_p(\lambda) := c_p(\beta(\lambda))$. Then
\begin{packed_enumerate} 
\item $c_p(\lambda)$ is concave.
\item If $c_p(\lambda)$ is not strictly concave at some $\lambda \in (0,1)$, then $c_p(\lambda)$ is linear for all $\lambda \in (0,1)$.
\end{packed_enumerate}
\label{lem:cpconcave}
\end{lemma}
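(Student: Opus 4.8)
The plan is to reduce both statements to a one-variable calculus problem and then recognize the key inequality as Cauchy--Schwarz. Along the segment, write $a_e(\lambda) := c_e(\beta(\lambda)) = c_e + \mu_e\bigl(\lambda\beta'_e + (1-\lambda)\beta''_e\bigr)$; this is an affine function of $\lambda$ with slope $s_e := \mu_e(\beta'_e - \beta''_e)$, and we may assume it is strictly positive on the range of interest (otherwise $c_p$ degenerates and the claims are trivial). Then $c_p(\lambda) = \bigl(\sum_{e\in p} a_e(\lambda)^{-1}\bigr)^{-1}$. Setting $R(\lambda) := \sum_{e\in p} a_e(\lambda)^{-1}$, we have $c_p = 1/R$ with $R>0$, and differentiating twice gives $R' = -\sum_e s_e a_e^{-2}$, $R'' = 2\sum_e s_e^2 a_e^{-3}$, and $c_p'' = R^{-3}\bigl(2(R')^2 - RR''\bigr)$. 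So part~1 reduces to showing $RR'' \ge 2(R')^2$, i.e. $\bigl(\sum_e a_e^{-1}\bigr)\bigl(\sum_e s_e^2 a_e^{-3}\bigr) \ge \bigl(\sum_e s_e a_e^{-2}\bigr)^2$.

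This inequality is exactly Cauchy--Schwarz applied to the vectors $u = (a_e^{-1/2})_{e\in p}$ and $v = (s_e a_e^{-3/2})_{e\in p}$, since $\sum u_e^2 = \sum a_e^{-1}$, $\sum v_e^2 = \sum s_e^2 a_e^{-3}$, and $\sum u_e v_e = \sum s_e a_e^{-2}$. This establishes $c_p'' \le 0$ on the whole segment, proving concavity.

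For part~2, observe that $c_p$ is a ratio of polynomials in $\lambda$ with nonvanishing denominator ($R>0$), hence smooth, so the hypothesis that $c_p$ fails to be strictly concave at some $\lambda_0 \in (0,1)$ amounts to $c_p''(\lambda_0) = 0$, i.e. equality in the Cauchy--Schwarz step at $\lambda_0$. Equality forces $u$ and $v$ to be parallel: there is a scalar $\kappa = \kappa(\lambda_0)$ with $s_e = \kappa\, a_e(\lambda_0)$ for every $e \in p$. If $\kappa = 0$, every $s_e = 0$, so each $a_e$ is constant, $R$ is constant, and $c_p$ is constant — in particular linear on $(0,1)$. If $\kappa \ne 0$, then $a_e(\lambda) = a_e(\lambda_0) + s_e(\lambda-\lambda_0) = s_e\bigl(\kappa^{-1} + \lambda - \lambda_0\bigr)$, so every edge conductance equals a common affine function $g(\lambda) := \kappa^{-1} + \lambda - \lambda_0$ scaled by $s_e$; hence $R(\lambda) = g(\lambda)^{-1}\sum_{e\in p} s_e^{-1}$ and $c_p(\lambda) = g(\lambda)\big/\sum_{e\in p} s_e^{-1}$, which is affine in $\lambda$. (The sign bookkeeping is routine: all $s_e$ share the sign of $\kappa$, so $\sum_e s_e^{-1} \ne 0$ and $c_p$ stays positive.)

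The main obstacle here is conceptual rather than computational: recognizing that the second-derivative condition collapses to a single application of Cauchy--Schwarz, and then observing that its equality case at just one interior point already pins down the entire affine structure of every $a_e$ — this is precisely what upgrades ``second derivative vanishes somewhere'' to ``globally linear,'' which is a property a generic concave function does not have. A secondary detail to handle with care is the degenerate case where some $c_e + \mu_e\beta_e(\lambda)$ can hit $0$ inside $[0,1]$; restricting attention to the open interval $(0,1)$ and to paths along which $c_p$ remains finite and positive disposes of it.
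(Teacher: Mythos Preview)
Your proposal is correct and follows essentially the same route as the paper: both compute $c_p'' = R^{-3}\bigl(2(R')^2 - RR''\bigr)$ with $R=\sum_e 1/c_e(\lambda)$, reduce concavity to the inequality $RR'' \ge 2(R')^2$, and recognize it as Cauchy--Schwarz; for part~2 both exploit the equality case to force $s_e = \kappa\, a_e(\lambda_0)$ (equivalently $c_e(\bar\lambda)$ proportional to $c_e'(\bar\lambda)$) and then use affinity of each $c_e$ to propagate this proportionality to all $\lambda$. The only cosmetic difference is that the paper phrases the computation in terms of resistances $r_e$ and the identity $r_e'' = 2(r_e')^2/r_e$, whereas you parametrize directly by the slopes $s_e$; your version handles the sign of $s_e$ slightly more cleanly since you apply Cauchy--Schwarz to $(a_e^{-1/2})$ and $(s_e a_e^{-3/2})$ rather than to nonnegative quantities involving $\sqrt{r_e''}$.
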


\begin{proof}
We will find it more convenient in the proof to work with resistances rather than conductances, where $r_p(\lambda) = 1/c_p(\lambda)$ and for each edge $e$,  $r_e(\lambda) = 1/c_e(\lambda)$. Then

\begin{align*}
\frac{d c_p(\lambda)}{d \lambda} & = - \frac{1}{r_p^2(\lambda)} \frac{d r_p(\lambda)}{d \lambda} 
\end{align*}

\noindent and differentiating again,

\begin{align}
\frac{d^2 c_p(\lambda)}{d \lambda^2} & = - \frac{1}{r_p^4(\lambda)} \left( r_p^2(\lambda) \frac{d^2 r_p(\lambda)}{d \lambda^2} - 2 r_p(\lambda) \left( \frac{d r_p(\lambda)}{d \lambda} \right)^2 \right) \nonumber \\
 &  = - \frac{1}{r_p^3(\lambda)} \left( r_p(\lambda) \frac{d^2 r_p(\lambda)}{d \lambda^2} - 2 \left( \frac{d r_p(\lambda)}{d \lambda} \right)^2 \right)\, . \label{eqn:cprp}
\end{align}

\noindent We will use Cauchy-Schwarz and the expression for $d^2 r_p(\lambda)/d \lambda^2$ to show that the expression on the right is nonpositive, which will complete the proof of the lemma. For any edge $e$, $c_e(\lambda)$ is a linear function of $\lambda$, and hence $d^2 c_e(\lambda) / d \lambda^2 = 0$. Equation~(\ref{eqn:cprp}) holds for edges as well as paths, and hence

\begin{align}
\frac{d^2 r_e(\lambda)}{d \lambda^2} & = \frac{2}{r_e(\lambda)} \left( \frac{d r_e(\lambda)}{d \lambda} \right)^2 \, .
\label{eqn:ce}
\end{align}

To use Cauchy-Schwarz, define

\[
x_e := \frac{d^2 r_e(\lambda)}{d \lambda^2} ~ \mbox{ and } ~ y_e := r_e(\lambda) \, .
\]

Since $r_p(\lambda) = \sum_{e \in p} r_e(\lambda)$,

\begin{align}
r_p(\lambda) \frac{d^2 r_p(\lambda)}{d \lambda^2} = \left( \sum_e y_e \right) \left( \sum_e x_e \right) \ge \left( \sum_e \sqrt{x_e y_e} \right)^2 = 2 \left( \frac{d r_p(\lambda)}{d \lambda} \right)^2 \label{eqn:xeye}
\end{align}

\noindent where the inequality is by Cauchy-Schwarz and the second equality follows since $\sqrt{x_e y_e} = \sqrt{2}\, d r_e(\lambda)/d \lambda$ from~(\ref{eqn:ce}). Replacing in~(\ref{eqn:cprp}), we get the proof of the first part of the lemma.

For the second part of the proof, assume that $c_p(\lambda)$ is not strictly concave at $\bar{\lambda}$, i.e., $d^2 c_p(\bar{\lambda})/d \lambda^2 = 0$. Then the inequality in~(\ref{eqn:xeye}) must be an equality, which again by Cauchy-Schwarz is possible if and only if the vectors $x$ and $y$ are parallel, i.e., $y_e = k x_e$ for all $e \in E$ and some constant $k$. Thus $r_e(\bar{\lambda}) = k d^2 r_e(\bar{\lambda})/d \bar{\lambda}^2$, or from~(\ref{eqn:ce}),

\[
r_e(\bar{\lambda}) = k' \frac{d r_e (\bar{\lambda})}{d \bar{\lambda}}
\]

\noindent where $k' = \sqrt{2k}$. Since $r_e(\lambda) = 1/c_e(\lambda)$, this is equivalent to

\begin{align}
\frac{1}{c_e(\bar{\lambda})} = - \frac{k'}{c_e^2(\bar{\lambda})} \frac{d c_e (\bar{\lambda})}{d \lambda} \, , ~ \mbox{ or } ~ c_e(\bar{\lambda}) = - k' \frac{d c_e (\bar{\lambda})}{d \lambda} \, . \label{eqn:cpceratio}
\end{align}

\noindent We will show now that the vectors $(c_e(\lambda))_{e \in p}$ and $(\frac{d c_e (\lambda)}{d \lambda})_{e \in p}$ are parallel for all $\lambda \in [0,1]$. Hence the inequality in~(\ref{eqn:xeye}) is always an equality, and $d^2 c_p(\lambda)/d \lambda^2 = 0$ for all $\lambda$.

For any $\lambda$, since $c_e(\lambda)$ is a linear function, 

\[
c_e(\lambda) = c_e(\bar{\lambda}) + (\lambda - \bar{\lambda}) \frac{d c_e(\lambda)}{d \lambda} = (\lambda - \bar{\lambda}-k') \frac{d c_e (\lambda)}{d \lambda} 
\]

\noindent where the second equality follows from~(\ref{eqn:cpceratio}). Thus the vectors $(c_e(\lambda))_{e \in p}$ and $(\frac{d c_e (\lambda)}{d \lambda})_{e \in p}$ are parallel, and hence $d^2 c_p(\lambda)/d \lambda^2 = 0$ for all $\lambda$.
\end{proof}

The following corollary immediately follows from the first part of the lemma.

\begin{corollary}
For any path $p$ and vector $\beta$ of allocations to the edges of $p$, $c_p(\beta)$ is concave.
\label{cor:cpconcave}
\end{corollary}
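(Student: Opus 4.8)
The plan is to derive Corollary~\ref{cor:cpconcave} directly from the first part of Lemma~\ref{lem:cpconcave}, invoking the standard characterization that a real-valued function on a convex domain is concave precisely when its restriction to every line segment contained in the domain is concave. First I would observe that the domain in question --- the set of allocation vectors $(\beta_e)_{e \in p}$ with $\beta_e \ge 0$ --- is convex, being an intersection of halfspaces, so the characterization applies.

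Next, fix two allocations $\beta'$ and $\beta''$ to the edges of $p$ and a scalar $\lambda \in [0,1]$, and define $\beta(\lambda) = \lambda\beta' + (1-\lambda)\beta''$ exactly as in the statement of Lemma~\ref{lem:cpconcave}. Part~(1) of that lemma asserts that the single-variable function $c_p(\lambda) := c_p(\beta(\lambda))$ is concave on $[0,1]$. Instantiating the resulting concavity inequality at the endpoints $\lambda = 1$ and $\lambda = 0$, where $\beta(1) = \beta'$ and $\beta(0) = \beta''$, gives $c_p\bigl(\lambda\beta' + (1-\lambda)\beta''\bigr) \ge \lambda\, c_p(\beta') + (1-\lambda)\, c_p(\beta'')$. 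Since $\beta'$, $\beta''$, and $\lambda$ were arbitrary, this is precisely the statement that $c_p$ is a concave function of its (vector) allocation argument, which is what the corollary claims.

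There is essentially no obstacle here: all of the analytic content --- the Cauchy--Schwarz argument bounding the second derivative along a line --- has already been discharged inside the proof of Lemma~\ref{lem:cpconcave}. The only point worth stating explicitly is that concavity along every line segment in a convex set is equivalent to (not merely implied by) concavity of the multivariate function, so the lemma's one-dimensional conclusion upgrades to the full statement with no additional computation.
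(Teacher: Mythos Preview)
Your proposal is correct and matches the paper's approach: the paper states only that the corollary ``immediately follows from the first part of the lemma,'' and you have spelled out exactly the standard fact (concavity along every line segment in a convex domain is equivalent to concavity of the multivariate function) that makes this immediate.
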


We now obtain an expression for the delay at equilibrium. For an allocation $\beta$, let $x$ be the flow at equilibrium and $\{x_p\}_{p \in \mP}$ be the unique flow decomposition. Then $x_p>0$ iff $L(\beta) > b_p$. Since $L(\beta^*) > b_p$, for each path $p \in \mP$, by definition of equilibria,

\begin{align}
L(\beta) & = \frac{x_p}{c_p(\beta)} + b_p \, . \label{eqn:peflow}
\end{align}

\noindent Multiplying both sides by $c_p(\beta)$, and summing over all paths yields

\begin{align}
L(\beta) & = \frac{d + \sum_{p \in \mP} c_p(\beta) b_p}{\sum_{p \in \mP} c_p(\beta)} \, . \label{eqn:pedelay}
\end{align}

We now show that if $L(\beta^*) > b_m$, the first order conditions for optimality are also sufficient. 

\begin{lemma}
Let $\beta'$ and $\beta''$ be two valid allocations where $\beta'$ satisfies the first-order conditions for optimality, and define $\beta(\lambda) := (1-\lambda)\beta' + \lambda \beta''$ and $L(\lambda) = L(\beta(\lambda))$. Then either $L(0) \le L(\lambda)$ for all $\lambda \in [0,1]$, or there is a valid allocation $\beta$ so that $L(\beta) \le b_m$.
\label{lem:ppsufficient}
\end{lemma}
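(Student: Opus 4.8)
The plan is to establish the dichotomy by assuming that \emph{no} valid allocation $\beta$ satisfies $L(\beta)\le b_m$, and then showing $L(0)\le L(\lambda)$ for all $\lambda\in[0,1]$. Under this assumption every $\beta(\lambda)$, being a convex combination of valid allocations, is valid, so $L(\beta(\lambda))>b_m\ge b_p$ for every path $p\in\mP$ and every $\lambda$; in particular all paths carry positive flow along the segment, so~(\ref{eqn:peflow}) and~(\ref{eqn:pedelay}) hold throughout. Writing $c_p(\lambda):=c_p(\beta(\lambda))$, $A(\lambda):=\sum_{p\in\mP}c_p(\lambda)$, and $C(\lambda):=\sum_{p\in\mP}c_p(\lambda)b_p$, equation~(\ref{eqn:pedelay}) becomes $L(\lambda)A(\lambda)=d+C(\lambda)$, with $A(\lambda)>0$ since each $c_p(\lambda)=x_p/(L(\lambda)-b_p)>0$. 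By Lemma~\ref{lem:cpconcave}, each $c_p(\lambda)$ is concave (so $c_p''(\lambda)\le 0$), and each is moreover either linear on all of $(0,1)$ or strictly concave at every point of $(0,1)$.

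The engine of the proof is a second-derivative identity. Differentiating $LA=d+C$ twice gives
\[
A(\lambda)\,L''(\lambda) \;=\; D(\lambda)\;-\;2\,L'(\lambda)\,A'(\lambda),\qquad
D(\lambda):=C''(\lambda)-L(\lambda)A''(\lambda)=\sum_{p\in\mP} c_p''(\lambda)\,\bigl(b_p-L(\lambda)\bigr).
\]
Since $c_p''(\lambda)\le 0$ and $b_p-L(\lambda)<0$ (this is where the assumption $L>b_m$ enters), we get $D(\lambda)\ge 0$, so at any $\lambda_0\in(0,1)$ with $L'(\lambda_0)=0$ we have $L''(\lambda_0)=D(\lambda_0)/A(\lambda_0)\ge 0$. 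Separately I would show $L'(0)\ge 0$: because $\beta'$ satisfies the first-order (KKT) conditions and the feasible set is convex, $\beta''-\beta'$ is not a descent direction at $\beta'$; concretely, writing the KKT conditions with budget multiplier $\gamma\ge 0$ and nonnegativity multipliers $\eta_p\ge 0$ gives $L'(0)=\sum_{p\in\mP}(\eta_p-\gamma)(\beta_p''-\beta_p')\ge 0$, using $\sum_p\beta_p''\le B$ (with $\sum_p\beta_p'=B$ whenever $\gamma>0$) and $\eta_p\beta_p'=0$.

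To conclude I would split on whether some $c_p$ is nonlinear. If every $c_p$ is linear on $(0,1)$, then $A$ and $C$ are affine, so $L=(d+C)/A$ is a ratio of affine functions with positive denominator, hence monotone on $[0,1]$; together with $L'(0)\ge 0$ this forces $L$ non-decreasing, and $L(0)\le L(\lambda)$. Otherwise some path $q$ has $c_q$ strictly concave on $(0,1)$, so $c_q''(\lambda)<0$ there, and also $c_q''(0)<0$: if $c_q''(0)$ vanished, the computation in the second part of Lemma~\ref{lem:cpconcave} (which nowhere uses that the exceptional point lies in the \emph{open} interval) would make $c_q$ affine on $(0,1)$. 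Hence $D(\lambda)>0$ for all $\lambda\in[0,1)$, so every critical point of $L$ in $(0,1)$ is a \emph{strict} local minimum, and moreover $L''(0)=D(0)/A(0)>0$ whenever $L'(0)=0$. Now suppose, for contradiction, that $L(\mu)<L(0)$ for some $\mu\in(0,1]$. Since either $L'(0)>0$, or $L'(0)=0$ and $L''(0)>0$, we have $L(\delta)>L(0)$ for small $\delta>0$; therefore $\max_{[0,\mu]}L>L(0)>L(\mu)$, so this maximum is attained at an interior point $\lambda_M\in(0,\mu)\subseteq(0,1)$, which is then a critical point of $L$ --- contradicting that such points are strict local minima. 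Hence $L(0)\le L(\lambda)$ for all $\lambda\in[0,1]$.

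The main obstacle is handling the endpoint $\lambda=0$. Extracting $L'(0)\ge 0$ cleanly from the first-order conditions requires care about whether the budget constraint is tight at $\beta'$, and ruling out the borderline case $L'(0)=0$ with $L$ dipping immediately to the right of $0$ is exactly the point where one must push the strict-concavity alternative of Lemma~\ref{lem:cpconcave} to the boundary point and invoke $D(0)>0$. Everything else is routine calculus once the identity $AL''=D-2L'A'$ and the sign $D\ge 0$ (which hinges on the hypothesis $b_p<L(\lambda)$) are in hand.
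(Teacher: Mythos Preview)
Your proposal is correct and follows essentially the same route as the paper: both compute the second derivative of $L$ along the segment, use the concavity of each $c_p(\lambda)$ together with $b_p<L(\lambda)$ to get $L''\ge 0$ at every stationary point, and invoke the rigidity alternative of Lemma~\ref{lem:cpconcave} to upgrade this to \emph{strict} minima (or constancy). Your derivation via the identity $A\,L''=D-2L'A'$ obtained from $L\cdot A=d+C$ is just a repackaging of the paper's quotient-rule computation in~(\ref{eqn:ppfirstderivative})--(\ref{eqn:ppsecondderivative}).

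There are two organizational differences worth noting. First, you explicitly derive $L'(0)\ge 0$ from the KKT conditions at $\beta'$; the paper's proof asserts ``$L(0)$ is the only minima in $[0,1]$'' without spelling this out, so your treatment is in fact more complete on this point. Second, your case split (all $c_p$ linear versus some $c_q$ strictly concave) is taken \emph{a priori}, whereas the paper splits on whether $L''$ vanishes at a stationary point; this leads you to the conclusion ``$L$ is monotone'' in the linear case (correct: a ratio of affine functions with positive denominator is monotone) rather than the paper's ``$L$ is constant'' (which additionally uses the existence of a stationary point). Your extension of the second part of Lemma~\ref{lem:cpconcave} to the boundary point $\lambda=0$ to secure $D(0)>0$ is legitimate, since each $c_e(\lambda)$ is affine and positive on a neighbourhood of $[0,1]$, so $c_p$ is smooth up to the endpoints and the Cauchy--Schwarz argument in that proof goes through at $\bar\lambda=0$.
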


\begin{proof}
Our proof proceeds by considering all stationary points in $\lambda \in [0,1]$. We will show that if $L(\lambda) > b_m$ for all $\lambda \in [0,1]$, then either any stationary point is a minima, or $L(\lambda)$ is constant in the interval $[0,1]$. In the former case, since there are no maxima, and maxima and minima must alternate, $L(0)$ is the only minima in $[0,1]$, and hence in either case, $L(0) \le L(\lambda)$ for all $\lambda \in [0,1]$.

Let $\lambda'$ be a stationary point. Then $dL(\lambda')/d \lambda = 0$. We first show that $d^2L(\lambda')/d\lambda^2 \ge 0$, and hence there are no maxima. From~(\ref{eqn:M}), and since $L$ is a function of $\lambda$ rather than $\beta$, for any path $q \in \mP$,

\begin{align*}
\frac{\partial L(\lambda)}{\partial c_q(\lambda)} & = \frac{b_q \left( \sum_{p \in \mP} c_p(\lambda) \right) - \left(d + \sum_{p \in \mP} c_p(\lambda) b_p\right)}{\left(\sum_{p \in \mP} c_p(\lambda) \right)^2} ~ = ~ \frac{1}{\sum_{p \in \mP} c_p(\lambda)} \left(b_q - L(\lambda) \right) \, .
\end{align*}

\noindent and hence, by the chain rule,

\begin{align}
\frac{d L(\lambda)}{d \lambda} & = \sum_{p \in \mP} \frac{\partial L(\lambda)}{\partial c_p(\lambda)} \frac{d c_p(\lambda)}{d \lambda}  = \frac{1}{\sum_{p \in \mP} c_p(\lambda)} \left(\sum_{p \in \mP} \left(b_p - L(\lambda) \right) \frac{d c_p(\lambda)}{d \lambda} \right) \, .
\label{eqn:ppfirstderivative}
\end{align}

Define $A(\lambda)$ to be the term in parentheses in~(\ref{eqn:ppfirstderivative});  then $\frac{dL(\lambda)}{d \lambda} = A(\lambda)/\sum_{p \in \mP} c_p(\lambda)$. Note that $A(\lambda) = 0$ if and only if $\frac{dL(\lambda)}{d \lambda} = 0$, and hence $A(\lambda') = 0$. Further,  for the second derivative, we get

\begin{align}
\frac{d^2 L(\lambda)}{d \lambda^2} & = \frac{1}{\left( \sum_{p \in \mP} c_p(\lambda) \right)^2} \left( \sum_{p \in \mP} c_p(\lambda) \frac{d A(\lambda)}{d \lambda}  - A(\lambda) \sum_{p \in \mP} \frac{d c_p(\lambda)}{d\lambda} \right) \nonumber \\
\end{align}

\noindent and since $A(\lambda') = 0$,

\begin{align}
\frac{d^2 L(\lambda')}{d \lambda^2}	& = \frac{1}{\sum_{p \in \mP} c_p(\lambda')}\frac{d A(\lambda')}{d \lambda} \, . \label{eqn:ppsecondderivative}
\end{align}

We will now show that $A(\lambda')$ is nondecreasing, and hence any stationary point cannot be a maxima. Each term in the summation for $A(\lambda)$ is the product of $b_p - L(\lambda)$ and $d c_p(\lambda)/d\lambda$. By assumption, $\frac{dL(\lambda')}{d \lambda} = 0$, hence $b_p - L(\lambda')$ is constant and negative. By Corollary~\ref{cor:cpconcave}, the second term is nonincreasing, hence the product is nondecreasing. Each of the summands is nondecreasing, and hence $A(\lambda')$ must be nondecreasing.

Further, if $d^2L(\lambda')/d \lambda^2 = 0$, then $A(\lambda')$ must be a constant by~(\ref{eqn:ppsecondderivative}). Since each summand is nondecreasing, each summand must in fact be constant, and in particular $d c_p(\lambda')/d \lambda$ must be constant, i.e., $c_p(\lambda')$ must be linear. However, in this case, by the second part of Lemma~\ref{lem:cpconcave}, $c_p(\lambda)$ is linear for $\lambda \in (0,1)$. Hence the second derivative is zero in $(0,1)$, which by integration, and since $dL(\lambda')/d \lambda = 0$, forces the first derivative to be zero in $(0,1)$; hence, $L(\lambda)$ is constant in $[0,1]$.
\end{proof}

\paragraph*{An optimal algorithm.}

We now describe an algorithm for minimizing the delay at equilibrium on parallel paths. In order to describe our algorithm to optimize $L(\beta)$, we first show that $L(\beta)$ is a strictly monotone function of the budget to be allocated. That is, the value $L^*(B)$ $:= \min_\beta \{ L(\beta): \beta_p \ge 0 ~ \forall p \mbox{ and } \sum_p \beta_p \le B\}$ is a \emph{strict} monotone function of the budget $B$. Note that since on every edge $e$, $c_e(\beta_e)$ is \emph{strictly} monotone. Hence for every path $p$, $c_p(\beta_p)$ is strictly monotone as well.

\begin{clm}
If $L^*(B) > b_m$, then $L^*(B)$ is \emph{strictly} decreasing in $B$.
\label{clm:ppathsdecr}
\end{clm}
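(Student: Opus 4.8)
The plan is to show that if we increase the budget from $B$ to $B' > B$, then the optimal delay strictly decreases, by exhibiting an explicit feasible allocation under budget $B'$ that strictly improves on the optimal allocation under budget $B$. Let $\beta^*$ be an optimal allocation for budget $B$, so $L(\beta^*) = L^*(B) > b_m$; by the assumption in this section all paths carry positive flow at equilibrium, and~(\ref{eqn:pedelay}) gives
\[
L^*(B) = \frac{d + \sum_{p \in \mP} c_p(\beta^*) b_p}{\sum_{p \in \mP} c_p(\beta^*)} \, .
\]
First I would observe that at least one path $q$ has $b_q < L^*(B)$ — indeed $b_1 \le \dots \le b_m < L^*(B)$, so in fact every path qualifies; fix $q = 1$ (the shortest). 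Now form the allocation $\hbeta$ that agrees with $\beta^*$ on all paths except that path $q$ receives an additional $B' - B$ of budget, distributed within $q$ so as to maximize $c_q$; this is feasible for budget $B'$. Since $c_e(\cdot)$ is strictly monotone on each edge and $q$ carries positive flow (so $q$ has at least one edge with $\mu_e > 0$ receiving extra allocation), strict monotonicity of path conductance gives $c_q(\hbeta) > c_q(\beta^*)$, while $c_p(\hbeta) = c_p(\beta^*)$ for $p \ne q$.

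The core of the argument is then to show that increasing a single $c_q$ while holding the others fixed strictly decreases the right-hand side of~(\ref{eqn:pedelay}), provided $b_q < L^*(B)$. This is exactly Fact~\ref{fact:xyz}: writing $x = d + \sum_p c_p(\beta^*) b_p$, $y = \sum_p c_p(\beta^*)$, $z = c_q(\hbeta) - c_q(\beta^*) > 0$ and $k = b_q$, the increment changes the ratio to $(x + k z)/(y + z)$, and since $x/y = L^*(B) > b_q = k$, Fact~\ref{fact:xyz} gives $(x+kz)/(y+z) < x/y$. Hence $L(\hbeta) < L^*(B)$. Since $\hbeta$ is a valid allocation for budget $B'$, we conclude $L^*(B') \le L(\hbeta) < L^*(B)$, which is the claimed strict monotonicity.

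One technical point to handle carefully: the step that increasing $c_q$ strictly decreases $L$ uses formula~(\ref{eqn:pedelay}), which was derived under the assumption that \emph{all} paths carry positive flow at the equilibrium for the allocation in question. Under the allocation $\hbeta$ the new equilibrium delay $L(\hbeta)$ could in principle drop to or below some $b_p$, in which case~(\ref{eqn:pedelay}) no longer holds and path $p$ drops out. But this only helps: if $L(\hbeta) \le b_m$ then already $L(\hbeta) < L^*(B)$ (since $L^*(B) > b_m$), so strict decrease holds a fortiori; and if $L(\hbeta) > b_m$ then all paths still carry flow and the Fact~\ref{fact:xyz} computation applies verbatim. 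I expect this case analysis — making sure formula~(\ref{eqn:pedelay}) is invoked only when valid, and checking that the "degenerate" case is consistent with strict monotonicity rather than an obstruction — to be the only subtlety; the rest is a direct application of Fact~\ref{fact:xyz} and strict monotonicity of conductance.
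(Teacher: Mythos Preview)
Your proof is correct and follows essentially the same approach as the paper: add the extra budget to a single path (the paper also picks path~1), use strict monotonicity of path conductance, and apply Fact~\ref{fact:xyz} to the expression~(\ref{eqn:pedelay}), with the same case split on whether $L(\hat{\beta}) \le b_m$. The only cosmetic difference is that the paper factors the monotonicity step into a separate Claim~\ref{clm:midecreasing} (allowing several paths to increase) before specializing to one path, whereas you inline the one-path case directly; note also that your parenthetical justification ``$q$ carries positive flow, so $q$ has an edge with $\mu_e > 0$'' is not the right reason---strict monotonicity of $c_e$ is a standing assumption on every edge, independent of the flow.
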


We first show the following claim.

\begin{clm}
Let $\beta'$, $\beta''$ be two vectors of allocations to paths in $\mP$ so that $\beta_p'' \ge \beta_p'$ for all $p \in \mP$ and the inequality strict for some $p$. Then if $L(\beta') > b_m$, then $L(\beta') > L(\beta'')$.
\label{clm:midecreasing}
\end{clm}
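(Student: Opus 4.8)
The plan is to use the closed-form expression for the equilibrium delay in~(\ref{eqn:pedelay}), namely $L(\beta) = \frac{d + \sum_p c_p(\beta) b_p}{\sum_p c_p(\beta)}$, and show that increasing the allocation on a single path strictly decreases this quantity; the general case then follows by a finite chain of single-path increases. So first I would reduce to the case where $\beta''$ and $\beta'$ differ only in the allocation to one path $q$, with $\beta_q'' > \beta_q'$ and $\beta_p'' = \beta_p'$ for $p \neq q$. Note that $L(\beta') > b_m \ge b_p$ for every path $p$, so every path carries positive flow under $\beta'$ and~(\ref{eqn:pedelay}) applies; I should also check that this property is preserved along the way, which it is, since we will be showing the delay only decreases but argue it stays above $b_m$ — actually the cleanest route is to first establish the strict inequality $L(\beta'') < L(\beta')$ assuming~(\ref{eqn:pedelay}) holds for $\beta''$, and separately observe that if at some point the delay drops to $b_m$ or below we are in an even better situation and the weak inequality $L(\beta'') \le L(\beta') $ still holds a fortiori (monotonicity of equilibrium delay in available conductance). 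For the main computation, write $S = \sum_p c_p(\beta')$ and $T = \sum_p c_p(\beta') b_p$, and let $\delta := c_q(\beta'') - c_q(\beta') > 0$, which is strictly positive because $c_q$ is strictly monotone in $\beta_q$ (as noted in the paragraph preceding the claim). Then $L(\beta') = (d+T)/S$ and $L(\beta'') = (d + T + \delta b_q)/(S + \delta)$.

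Now I invoke Fact~\ref{fact:xyz} with $x = d+T$, $y = S$, $z = \delta$, and $k = b_q$: the inequality $\frac{d+T+b_q\delta}{S+\delta} < \frac{d+T}{S}$ holds if and only if $\frac{d+T}{S} > b_q$, i.e., if and only if $L(\beta') > b_q$. But $L(\beta') > b_m \ge b_q$ by hypothesis, so this holds and we conclude $L(\beta'') < L(\beta')$, as desired. (If $b_q = b_m$ and the $b_i$ were not all strict, one would get $L(\beta') \ge b_q$ and hence only $\le$; but under the strictness assumption in force, $L(\beta') > b_m \ge b_q$ is strict.) To handle the general case where $\beta''$ dominates $\beta'$ coordinatewise with at least one strict inequality, let $q$ be a path with $\beta_q'' > \beta_q'$ and form the intermediate allocation $\tilde\beta$ that agrees with $\beta''$ on $q$ and with $\beta'$ elsewhere. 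By the single-path argument $L(\tilde\beta) < L(\beta')$; and since $\beta''$ dominates $\tilde\beta$ coordinatewise (weakly), a monotonicity argument — raising conductances on the remaining paths cannot increase the equilibrium delay, which again follows by repeated application of Fact~\ref{fact:xyz} in the direction giving $\le$, or simply from the fact that the equilibrium optimization problem is over a larger feasible-conductance set — gives $L(\beta'') \le L(\tilde\beta) < L(\beta')$.

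The main obstacle I anticipate is the bookkeeping around whether~(\ref{eqn:pedelay}) remains valid as we raise allocations: the formula requires all paths to be used, which requires $L(\beta) > b_m$, and a priori as we raise conductances the delay could dip below $b_m$, at which point some path drops out and the closed form changes. The way I would finesse this: prove the strict decrease \emph{conditionally} on $L(\beta'') > b_m$ using the Fact~\ref{fact:xyz} computation above; and if instead $L(\beta'') \le b_m$, then trivially $L(\beta'') \le b_m < L(\beta')$ and we are done (note $L(\beta') > b_m$ is given). Either way $L(\beta') > L(\beta'')$. For the chained/dominance step, the same case split applies at each stage, so no path ever needs to be tracked explicitly. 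This keeps the argument to essentially one invocation of Fact~\ref{fact:xyz} plus a short monotonicity remark.
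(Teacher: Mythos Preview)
Your proposal is correct and follows essentially the same approach as the paper: a case split on whether $L(\beta'') \le b_m$ (trivial case) or $L(\beta'') > b_m$ (use the closed form~(\ref{eqn:pedelay}) and Fact~\ref{fact:xyz}). The only difference is cosmetic: the paper handles all paths simultaneously---writing $L(\beta'')$ as $\bigl(d + \sum_p c_p(\beta')b_p + \sum_p (c_p(\beta'')-c_p(\beta'))b_p\bigr)\big/\bigl(\sum_p c_p(\beta') + \sum_p (c_p(\beta'')-c_p(\beta'))\bigr)$ and invoking Fact~\ref{fact:xyz} once (since every $b_p < L(\beta')$)---whereas you chain single-path increments, which requires the extra monotonicity remark to stitch the steps together.
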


\begin{proof} The proof follows from the observation that $c_p(\beta'') \ge c_p(\beta')$ for every $p \in P_i$, with the inequality strict for at least one path. If $L(\beta'') \le b_m$, the claim is true since $L(\beta') > b_m$ by assumption. Otherwise,

\begin{align*}
L(\beta'') & = \frac{d + \sum_{p \in \mP} c_p(\beta_p'') b_p}{\sum_{p \in \mP} c_p(\beta_p'')} \\
	& = \frac{d + \sum_{p \in \mP} c_p(\beta_p') b_p + \sum_{p \in \mP} \left( c_p(\beta_p'') - c_p(\beta_p') \right) b_p }{\sum_{p \in \mP} c_p(\beta_p') + \sum_{p \in P_i} \left( c_p(\beta_p'') - c_p(\beta_p') \right)} \\
	& < L(\beta') 
\end{align*}

\noindent where the inequality follows from Fact~\ref{fact:xyz} and since $L(\beta') > b_i$ for all $i \in [m]$.
\end{proof}

\noindent \emph{Proof of Claim~\ref{clm:ppathsdecr}.} Let $B'$, $B'' \in \mathbb{R}_{> 0}$ with $B'' > B'$, and let $\beta'$ be the allocation that minimizes $L(\beta)$ subject to the total allocation being at most $B'$. Then consider the allocation $\beta''$ where $\beta_1'' = \beta_1' + (B''-B')$, and $\beta_i'' = \beta_i'$ on the other paths. By Claim~\ref{clm:midecreasing}, $L(\beta'') < L(\beta')$. Since $\beta''$ is a valid allocation for budget $B''$, $L^*(B'') \le L(\beta'') = L^*(B')$, and the claim follows. \qed

We now describe our algorithm. The algorithm proceeds by conducting a binary search for the optimal value $L^*(B)$. Initially, $b_m$ and $L(0)$ are our lower and upper bounds, and $\bar{L} = (b_m + L(0))/2$.

\begin{enumerate}
\item Let $\beta = 0$ be the initial allocation.
\item Increase the allocation to paths in $\mP$ so that for any path $p$, if $\beta_p > 0$, then 
\begin{equation}
(\bar{L}-b_p) \frac{d c_p(\beta_p)}{d \beta_p} \ge (\bar{L}-b_q) \frac{d c_q(\beta_q)}{d \beta_q} 
\label{eqn:barL}
\end{equation}
\noindent for all paths $q \in \mP$. Continue allocating in this manner until $L(\beta) = \bar{L}$. Note that by Claim~\ref{clm:midecreasing}, $L(\beta)$ is strictly decreasing in $\beta$, hence any process that monotonically increases allocations to paths will obtain $\bar{L}$ as long as $\bar{L} > b_m$.
\item Let $B' = \mone^T \beta'$, where $\beta'$ is the allocation obtained in Step 2. If $B' = B$, then $\beta'$ is the optimal allocation for budget $B$ and $L^* = \bar{L}$. If $B' > B$, then $L^*(B) > \bar{L}$, and $L^*(B) < \bar{L}$ otherwise.
\end{enumerate}

Step 2 in the algorithm can be implemented by binary search; we give details on the implementation in the Appendix. To show that the algorithm works, we now prove the correctness of Step 3. We start with the following claim about the allocation $\beta'$ obtained when Step 2 completes.

\begin{clm}
Let $B' = \mone^T \beta'$. Then the allocation $\beta'$ minimizes $L(\beta)$ for budget $B'$, i.e., $L(\beta') = L^*(B')$.
\label{clm:betaprime}
\end{clm}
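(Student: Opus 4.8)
The plan is to show that the allocation $\beta'$ produced when Step~2 halts is a first-order stationary point of the (non-convex) objective $L$ on the budget-$B'$ feasible region, and then to invoke Lemma~\ref{lem:ppsufficient}, which guarantees that in this regime first-order stationarity already implies global optimality; this yields $L(\beta')=L^*(B')$.

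The first step is to recognize the Step~2 stopping invariant~(\ref{eqn:barL}) as precisely the first-order condition~(\ref{eqn:foc}). Taking the path allocations $\beta_p$ as decision variables (justified, as throughout this subsection, by Section~\ref{sec:singlepath1}) and differentiating the equilibrium-delay expression from~(\ref{eqn:pedelay}) by the chain rule, using the identity $\partial L/\partial c_q(\beta)=\frac{1}{\sum_p c_p(\beta)}\bigl(b_q-L(\beta)\bigr)$ obtained inside the proof of Lemma~\ref{lem:ppsufficient}, one gets
\[
\frac{\partial L(\beta)}{\partial\beta_p}=\frac{1}{\sum_q c_q(\beta)}\bigl(b_p-L(\beta)\bigr)\,\frac{d c_p(\beta_p)}{d\beta_p}\,.
\]
Since Step~2 terminates with $L(\beta')=\bar L$, and $\sum_q c_q(\beta')>0$ (indeed $\bar L>b_m\ge b_p$ for every path $p$, so both sides of~(\ref{eqn:barL}) are nonpositive), multiplying~(\ref{eqn:barL}) — which holds for every path $p$ with $\beta_p'>0$ — by the positive factor $1/\sum_q c_q(\beta')$ and reversing signs turns it into exactly $\beta_p'>0\Rightarrow\partial L(\beta')/\partial\beta_p\le\partial L(\beta')/\partial\beta_q$ for all $q$, i.e.\ condition~(\ref{eqn:foc}) for the problem $\min\{L(\beta):\beta\ge 0,\ \sum_p\beta_p\le B'\}$.

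Next I would apply Lemma~\ref{lem:ppsufficient}. Let $\beta''$ be an arbitrary allocation feasible for budget $B'$ and set $\beta(\lambda)=(1-\lambda)\beta'+\lambda\beta''$; by convexity of the constraint set each $\beta(\lambda)$ is again feasible for budget $B'$. Since $\beta'$ satisfies the first-order conditions, the lemma gives the dichotomy: either $L(\beta')=L(0)\le L(\beta(\lambda))$ for all $\lambda\in[0,1]$ — in particular $L(\beta')\le L(\beta'')$, and as $\beta''$ was arbitrary this is precisely $L(\beta')=L^*(B')$ — or there is a budget-$B'$ allocation $\beta$ with $L(\beta)\le b_m$.

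The delicate point, which I expect to be the main obstacle, is ruling out the second alternative. This would contradict the regime we are in: Step~2 delivers $\beta'$ with $L(\beta')=\bar L>b_m$, so one must argue that no budget-$B'$ allocation can drive the equilibrium delay to $b_m$ or below. When $B'\le B$ this is immediate from monotonicity of the optimal value together with the standing assumption $L^*=L(\beta^*)>b_m$ (via Claim~\ref{clm:midecreasing}). When $B'>B$, an allocation attaining $L\le b_m$ within budget $B'$ would mean that a proper subset of the paths already carries all the demand at delay at most $b_m$, which is exactly the degenerate situation handled separately in the appendix by re-solving over increasing subsets of $\mP$, and hence does not arise under the assumption of this section. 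Thus only the first branch of the dichotomy survives, and $\beta'$ is optimal for budget $B'$.
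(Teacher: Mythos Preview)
Your approach is exactly the paper's: recognize that the Step~2 invariant~(\ref{eqn:barL}), once $\bar L$ is replaced by $L(\beta')$ and both sides are multiplied by the positive scalar $-1/\sum_q c_q(\beta')$, is precisely the first-order condition~(\ref{eqn:foc}) at budget $B'$, and then invoke Lemma~\ref{lem:ppsufficient}. You are in fact more careful than the paper, whose proof simply cites Lemma~\ref{lem:ppsufficient} without ever mentioning the lemma's second alternative; your handling of that alternative is correct for $B'\le B$, while for $B'>B$ your appeal to the section's standing assumption is a little loose (that assumption bounds $L^*(B)$, not $L^*(B')$, and a larger budget could in principle drive $L^*(B')$ down to $b_m$) --- but this wrinkle is one the paper's own proof does not address either.
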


\begin{proof}
The solution obtained satisfies

\[
(\bar{L}-b_p) \frac{d c_p(\beta_p')}{d \beta_p} \ge (\bar{L}-b_q) \frac{d c_q(\beta_q')}{d \beta_q}
\]

\noindent for all $p,q \in \mP$ with $\beta_p' > 0$. Since $\bar{L} = L(\beta')$, and $\sum_{p \in \mP} c_p(\beta') > 0$, this condition is equivalent to

\[
\frac{1}{\sum_{p \in P_i} c_p(\beta')} \left( L(\beta')-b_p \right) \frac{d c_p(\beta_p')}{d \beta_p} \ge \frac{1}{\sum_{p \in P_i} c_p(\beta')}  \left(L(\beta')-b_q\right) \frac{d c_q(\beta_q')}{d \beta_q}
\]

\noindent for all $p,q \in \mP$ with $\beta_p' > 0$, which are exactly the first order conditions for minimizing $L(\beta)$. Then by Lemma~\ref{lem:ppsufficient}, $\beta'$ must minimize $L(\beta)$ for budget $B'$.
\end{proof}

It follows from the claim that if $B' = B$, then $\beta'$ is the optimal allocation for $L(\beta)$ for budget $B$. If $B' > B$, note that by Claim~\ref{clm:ppathsdecr}, $L^*(B)$ is strictly monotone in $B$, and hence $\bar{L} = L^*(B') < L^*(B)$, and similarly if $B' < B$, then $\bar{L} > L^*(B)$. This proves the correctness of Step 3.

\subsection{A simple algorithm for parallel links}

We now consider the case where $G$ is a dipole graph, i.e., parallel edges between $s$ and $t$.  In contrast to the algorithm for the more general parallel paths case in Section~\ref{sec:ppaths}, we show that a very simple algorithm gives the optimal allocation in this case. We prove that there always exists an optimal solution where the entire budget is spent on a single edge. The algorithm for obtaining the optimal allocation is then straightforward: consider each edge in turn, and compute the delay at equilibrium obtained by allocating the entire budget to that edge. The optimal allocation is to allocate the budget to the edge for which the delay obtained is minimum.

As before, we assume every edge has flow at equilibrium in every valid allocation. From~(\ref{eqn:peflow}),

\begin{align}
L(\beta) & = \frac{d + \sum_{e \in E} c_e(\beta) b_e}{\sum_{e \in E} c_p(\beta)} \, . \label{eqn:pedelay2}
\end{align}

\noindent From the first-order conditions of optimality for~(\ref{eqn:main}), it follows that for an optimal allocation,~(\ref{eqn:foc}) must hold. We show that if two edges $e, e'$ have positive allocation in an optimal allocation $\beta$, then decreasing the allocation on one edge and increasing it on the other does not affect the delay at equilibrium. For $\delta \in \mathbb{R}$, let $\beta'$ be the allocation obtained by increasing the allocation to $e$ by $\delta$ and decreasing the allocation to $e'$ by $\delta$. 

\begin{lemma}
If $\frac{\partial L(\beta)}{\partial \beta_e} = \frac{\partial L(\beta)}{\partial \beta_{e'}}$, then $L(\beta) = L(\beta')$.
\label{lem:dipolechange}
\end{lemma}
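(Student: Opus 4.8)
The plan is to work directly from the closed form (\ref{eqn:pedelay2}) for the equilibrium delay. Write $N(\beta) := d + \sum_{e \in E} c_e(\beta) b_e$ and $D(\beta) := \sum_{e \in E} c_e(\beta)$, so that $L(\beta) = N(\beta)/D(\beta)$ by (\ref{eqn:pedelay2}); note $D(\beta)>0$ since every edge carries flow at equilibrium, so $c_e(\beta) = c_e + \mu_e \beta_e > 0$ for each $e$. Because $\partial c_e(\beta)/\partial \beta_e = \mu_e$ and $\partial c_{e''}(\beta)/\partial \beta_e = 0$ for $e'' \ne e$, the quotient rule gives
\[
\frac{\partial L(\beta)}{\partial \beta_e} \;=\; \frac{\mu_e b_e D(\beta) - \mu_e N(\beta)}{D(\beta)^2} \;=\; \frac{\mu_e\bigl(b_e - L(\beta)\bigr)}{D(\beta)} \, .
\]
Hence the hypothesis $\partial L(\beta)/\partial \beta_e = \partial L(\beta)/\partial \beta_{e'}$ is equivalent to the scalar identity $\mu_e\bigl(b_e - L(\beta)\bigr) = \mu_{e'}\bigl(b_{e'} - L(\beta)\bigr)$, i.e.\ $\mu_e b_e - \mu_{e'} b_{e'} = (\mu_e - \mu_{e'})\,L(\beta)$.

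Next I would compute the effect of the shift. Since $\beta'$ differs from $\beta$ only in $c_e(\beta') = c_e(\beta) + \mu_e \delta$ and $c_{e'}(\beta') = c_{e'}(\beta) - \mu_{e'}\delta$, we get
\[
N(\beta') = N(\beta) + (\mu_e b_e - \mu_{e'} b_{e'})\,\delta, \qquad D(\beta') = D(\beta) + (\mu_e - \mu_{e'})\,\delta \, .
\]
By the standing assumption that every edge carries flow at equilibrium in every valid allocation, (\ref{eqn:pedelay2}) applies at $\beta'$ as well, so $L(\beta') = N(\beta')/D(\beta')$, and it suffices to verify $N(\beta')D(\beta) = N(\beta)D(\beta')$. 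Expanding, the difference $N(\beta')D(\beta) - N(\beta)D(\beta')$ equals $\delta\bigl[(\mu_e b_e - \mu_{e'} b_{e'})D(\beta) - (\mu_e - \mu_{e'})N(\beta)\bigr]$; substituting $N(\beta) = L(\beta)D(\beta)$ turns the bracket into $D(\beta)\bigl[\mu_e(b_e - L(\beta)) - \mu_{e'}(b_{e'} - L(\beta))\bigr]$, which vanishes by the identity from the first paragraph. Therefore $L(\beta') = L(\beta)$.

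There is no genuine obstacle here — the argument is a short algebraic manipulation once the partial derivative is written in the form $\mu_e(b_e - L)/D$. The only point that needs a word of care is that the closed form (\ref{eqn:pedelay2}) is being invoked at both $\beta$ and $\beta'$; this is legitimate exactly because of the running hypothesis that all parallel edges are used at equilibrium for every valid allocation (so in particular for $\beta'$, whenever $\delta$ is small enough that $\beta'$ remains a valid allocation). Absent that assumption one would first have to rule out edges dropping out of the support as $\delta$ varies, but with affine delays and the assumption in force this case does not arise.
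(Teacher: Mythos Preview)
Your proof is correct and follows essentially the same route as the paper: derive $\partial L/\partial \beta_e = \mu_e(b_e - L(\beta))/D(\beta)$, translate the hypothesis into the scalar identity $\mu_e b_e - \mu_{e'}b_{e'} = (\mu_e - \mu_{e'})L(\beta)$, and then check directly that $L(\beta')$ given by the closed form~(\ref{eqn:pedelay2}) equals $L(\beta)$. Your cross-multiplication $N(\beta')D(\beta)=N(\beta)D(\beta')$ is in fact slightly cleaner than the paper's version, since it never divides by $\mu_e-\mu_{e'}$ and so handles the degenerate case $\mu_e=\mu_{e'}$ without a separate argument.
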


We will use the expression for the delay at equilibrium in the following proof, obtained from~(\ref{eqn:pedelay2}) as

\begin{align}
\frac{\partial L(\beta)}{\partial \beta_e} & = \frac{\partial L(\beta)}{\partial c_e(\beta)} \frac{\partial c_e(\beta)}{\partial \beta_e} = -\frac{\mu_e}{\sum_{e' \in E} c_{e'}(\beta)} \left(L(\beta) - b_e \right) & \forall e \in E \label{eqn:perate}
\end{align}

\begin{proof}
Since $\frac{\partial L(\beta)}{\partial \beta_e} = \frac{\partial L(\beta)}{\partial \beta_{e'}}$, from~(\ref{eqn:perate}),

\begin{align*}
\mu_e \, \frac{b_e - L(\beta)}{\sum_{r \in E} c_r(\beta)} & = \mu_{e'} \, \frac{b_{e'} - L(\beta)}{\sum_{r \in E} c_r(\beta)} \, ,
\end{align*}

\noindent or, with some algebraic manipulation,

\begin{align}
L(\beta)  & = \frac{b_e \mu_e - b_{e'} \mu_{e'}}{\mu_e - \mu_{e'}} \label{eqn:peequal} \, .
\end{align}

Since in $\beta'$, the allocation to $e$ is increased and the allocation to $e'$ is decreased by $\delta$,

\begin{align}
L(\beta') & = \frac{d + \sum_{r \in E} b_r c_r(\beta) + b_e \mu_e \delta - b_{e'} \mu_{e'} \delta}{\sum_{r \in E} c_r(\beta) + \mu_e \delta - \mu_{e'} \delta} \, .\label{eqn:pebeta2}
\end{align}

\noindent From~(\ref{eqn:pedelay}) and~(\ref{eqn:peequal}), the numerator above is exactly $L(\beta)$ times the denominator. Replacing in~(\ref{eqn:pebeta2}) thus yields that $L(\beta') = L(\beta)$. 
\end{proof}

The following corollary is obtained since we can start with any optimal allocation that allocates to more than a single edge and by Lemma~\ref{lem:dipolechange} successively shift allocation onto a single edge so that we are left with an allocation on a single edge that yields the optimal delay at equilibrium.

\begin{corollary}
There exists an optimal allocation where the entire budget is allocated to a single edge.
\label{cor:paralleledges}
\end{corollary}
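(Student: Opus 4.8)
The plan is to derive Corollary~\ref{cor:paralleledges} directly from Lemma~\ref{lem:dipolechange} by a finite ``shifting'' argument. Start from any optimal allocation $\beta^*$ for the dipole graph, and let $S = \{e \in E : \beta^*_e > 0\}$ be its support. If $|S| \le 1$ we are done, so suppose $|S| \ge 2$ and fix two edges $e, e' \in S$. Since $\beta^*$ is optimal, the first-order conditions~(\ref{eqn:foc}) hold; applying them to the pair $e, e'$ (both have positive allocation, so the implication in~(\ref{eqn:foc}) fires in both directions) gives $\frac{\partial L(\beta^*)}{\partial \beta_e} = \frac{\partial L(\beta^*)}{\partial \beta_{e'}}$. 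Hence the hypothesis of Lemma~\ref{lem:dipolechange} is satisfied for this pair.

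Next, I would take $\delta = \beta^*_{e'}$ in the construction preceding Lemma~\ref{lem:dipolechange}: move the entire allocation of $e'$ onto $e$. The lemma guarantees that the resulting allocation $\beta'$ has $L(\beta') = L(\beta^*) = L^*$, so $\beta'$ is again optimal, and it is still a valid allocation since the total budget spent is unchanged and all entries remain nonnegative (we only decreased $\beta_{e'}$ to exactly $0$ and increased $\beta_e$). Crucially, the support of $\beta'$ is $S \setminus \{e'\}$, so it has strictly fewer positive entries. Iterating this step at most $|E|-1$ times produces an optimal allocation supported on a single edge, which is exactly the claim. One should note that Lemma~\ref{lem:dipolechange} as stated allows $\delta \in \mathbb{R}$, so the choice $\delta = \beta^*_{e'}$ is permitted; the only thing to check is that we never push an allocation negative, which holds because we always empty the edge we are removing from the support rather than some intermediate amount.

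The main (and only real) subtlety is the bookkeeping around the first-order conditions: one must be sure that~(\ref{eqn:foc}), which is stated only as a necessary condition and only as a one-directional implication $\beta_e > 0 \Rightarrow \partial L/\partial\beta_e \le \partial L/\partial\beta_{e'}$, actually yields equality of the two partials when \emph{both} $\beta_e$ and $\beta_{e'}$ are positive. This is immediate — applying the implication with the roles of $e$ and $e'$ swapped gives the reverse inequality — but it is the one place where the argument could be mis-stated. Everything else is a routine induction on the size of the support, invoking Lemma~\ref{lem:dipolechange} as a black box at each step, so I would expect the proof to be only two or three sentences once this point is made explicit.
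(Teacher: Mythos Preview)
Your proposal is correct and is essentially the same argument the paper gives: start from an optimal allocation with support of size $k>1$, use the first-order conditions (applied in both directions to a pair of positively-allocated edges) to get equality of the partials, invoke Lemma~\ref{lem:dipolechange} with $\delta=\beta^*_{e'}$ to shift all of $e'$'s budget onto $e$ without changing the delay, and iterate. The paper is terser about the equality-of-partials point you spell out, but the structure and the key step are identical.
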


\begin{proof}
Consider an optimal allocation $\beta^*$ of the budget to $k > 1$ edges and edges $e$, $e'$ with strictly positive allocation. Consider the modified allocation $\beta'$: $\beta'_{r} = \beta^*_r$ for $r \neq e, e'$; $\beta'_e = \beta^*_e + \beta^*_{e'}$, and $\beta'_{e'} = 0$. Then $\beta'$ is a valid allocation, and since $\beta_{e}$, $\beta_{e'} > 0$, $\frac{\partial L(\beta)}{\partial \beta_e} = \frac{\partial L(\beta)}{\partial \beta_{e'}}$ by the first-order conditions for optimality. Then by the lemma, $L(\beta') = L(\beta^*)$. Thus, $\beta'$ is an optimal allocation where exactly $k-1$ edges have strictly positive allocation, and successively removing edges from the optimal allocation in this manner gives us the corollary.
\end{proof}

Thus, the simple algorithm given earlier that allocates the entire budget to a single edge is optimal.

\section{NP-Hardness in Series-Dipole Graphs}
\label{sec:sdipole}

In contrast to the previous section, we show that even in fairly simple networks called \emph{series-dipole networks}, the network improvement problem is NP-hard. A series-dipole graph consists of a number of subgraphs consisting of parallel edges (called \emph{dipole graphs}) connected in series. In fact, we show that even when each dipole consists of just two edges, computing the optimal allocation is NP-hard. We will use $n$ to denote the number of dipoles in the graph.

The delay at equilibrium in a series-dipole graph is the sum of delays on the individual dipoles. Further, given an allocation of the budget to dipoles rather than individual edges, by Corollary~\ref{cor:paralleledges} the optimal allocation to the edges can be determined by independently finding the edge in each dipole that minimizes the delay on the dipole on being allocated the entire budget for the dipole. Hence in this section we consider allocations to dipoles rather than individual edges, and define an allocation $\beta = (\beta_i)_{i \in [n]}$. Allocation $\beta$ is valid if $\sum_i \beta_i \le B$ and all $\beta_i \ge 0$. Further, define $L_i(\beta_i)$ as the optimal delay in dipole $i$ on being allocated $\beta_i$. Thus $L(\beta) = \sum_i L_i(\beta_i)$. 

We show that the problem of network improvement is NP-hard by a reduction from partition.

\begin{definition}[Partition]
Given $n$ items where item $i$ has value $v_i$ and $\sum_i v_i = 2V$, select a subset $S$ of the items so that $\sum_{i \in S} v_i = V$.
\end{definition}

For the $i$th dipole consisting of edges $e_1$ and $e_2$, the values for the parameters in our construction are as follows. Let $\delta = 19/31$ and $\lambda = 4 \sqrt{2}-1$. Then

\[
c_1 = \frac{\delta}{v_i}, ~ c_2 = \frac{1-\delta}{v_i}, ~ \mu_1 = \lambda v_i^2, ~ \mu_2 = 2 \lambda v_i^2, ~ b_1 = (\lambda + 2) v_i, ~ b_2 = 0, ~ \mbox{demand } d = 2(\lambda + 2)\, .
\]

\begin{clm}
For the instance constructed, there exists an allocation $\alpha_i = (1+\sqrt{2}) v_i$ for the $i$th dipole so that, for any allocation $x$, $L_i(x) \ge L_i(\alpha_i) + \alpha_i - x$, with equality if and only if $x = \alpha_i$ or $x = \alpha_i + v_i$.
\label{clm:dipole}
\end{clm}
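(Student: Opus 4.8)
The idea is to pin down $L_i$ in closed form, observe that it coincides with the ``invest everything in $e_2$'' curve, and then read the claim off from convexity together with the engineered constants. By Corollary~\ref{cor:paralleledges} applied to dipole $i$ (itself a parallel-links instance), the best way to spend a dipole budget $x$ is to put all of it on a single edge, so $L_i(x)=\min\{L^{(1)}(x),L^{(2)}(x)\}$, where $L^{(j)}(x)$ is the equilibrium delay of the dipole when all of $x$ is placed on $e_j$. For a two-edge parallel instance with conductances $C_1,C_2$, lengths $b_1>b_2=0$, and demand $d$, the equilibrium delay equals $\tfrac{d+C_1b_1}{C_1+C_2}$ when both edges carry flow (i.e.\ when $C_2\le d/b_1$) and $d/C_2$ when only $e_2$ does. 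Since $c_2\le d/b_1$ in the construction, investing in $e_1$ keeps both edges active, so $L^{(1)}$ is a single strictly convex, strictly decreasing branch $\tfrac{d+(c_1+\mu_1x)b_1}{(c_1+\mu_1x)+c_2}$; investing in $e_2$ gives a two-piece curve with one breakpoint $x_0$, namely $L^{(2)}(x)=\tfrac{d+c_1b_1}{c_1+c_2+\mu_2x}$ for $x\le x_0$ and $L^{(2)}(x)=\tfrac{d}{c_2+\mu_2x}$ for $x\ge x_0$, where $x_0$ solves $d/(c_2+\mu_2x)=b_1$ (the flow on $e_1$ vanishes there). Each piece is strictly convex and decreasing, but $L^{(2)}$ has a \emph{concave} kink at $x_0$: a short computation shows the one-sided slopes are $-\tfrac{\mu_2b_1}{c_1+c_2+\mu_2x_0}$ on the left and $-\tfrac{\mu_2b_1}{c_2+\mu_2x_0}$ on the right, and the left one is the larger (less negative).

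Next I would show $L_i\equiv L^{(2)}$. For $x\ge x_0$ this is immediate, since $L^{(2)}(x)=d/C_2<b_1<L^{(1)}(x)$ (the last inequality because $d>b_1c_2$ forces $L^{(1)}>b_1$ always). For $x\le x_0$, clearing denominators shows $L^{(1)}(x)-L^{(2)}(x)$ has the sign of $x\bigl[(d+c_1b_1)(\mu_2-\mu_1)+\mu_1b_1(c_1+c_2+\mu_2x)\bigr]\ge 0$, using only $\mu_2\ge\mu_1$ and positivity of the parameters. Hence $L_i=L^{(2)}$; in particular $L_i(x)$ equals a given line $T(x)$ only where $L^{(2)}(x)=T(x)$, which already confines any equality case and is why it suffices to analyze $L^{(2)}$.

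It remains to show that, with $T(x):=L_i(\alpha_i)+\alpha_i-x$, one has $L^{(2)}(x)\ge T(x)$ for all $x\ge 0$ with equality exactly at $\alpha_i$ and $\alpha_i+v_i$. This is where the constants $\delta=19/31$ and $\lambda=4\sqrt2-1$ are used, via $\lambda(\lambda+2)=31$, $1+\delta=50/31$, $1-\delta=12/31$, $2+\delta=81/31$, and $(7\sqrt2+6)/(3\sqrt2+7)=\sqrt2$. Plugging the parameters into the formulas above, I would verify three facts: (a) $\alpha_i<x_0<\alpha_i+v_i$; (b) $L^{(2)}$ has slope exactly $-1$ at $x=\alpha_i$ (on the first branch) and at $x=\alpha_i+v_i$ (on the second branch); (c) $L^{(2)}(\alpha_i)-L^{(2)}(\alpha_i+v_i)=v_i$. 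Facts (b) and (c) together say precisely that the slope-$(-1)$ line $T$ is the tangent to the first branch at $\alpha_i$ and simultaneously the tangent to the second branch at $\alpha_i+v_i$. Since a strictly convex function lies strictly above each of its tangent lines except at the point of tangency, (a)+(b)+(c) give $L^{(2)}\ge T$ on $[0,x_0]$ with equality only at $\alpha_i$, and $L^{(2)}\ge T$ on $[x_0,\infty)$ with equality only at $\alpha_i+v_i$ (the concave kink at $x_0$ then sits strictly above $T$). Splicing the two ranges yields the claim.

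The step I expect to be the real work is (a)--(c): these are elementary but must be checked by hand, and they are exactly what the otherwise-opaque numbers $19/31$ and $4\sqrt2-1$ are chosen to make true. The fact that the equality set is $\{\alpha_i,\alpha_i+v_i\}$ rather than a single point is not generic; it hinges on $x_0$ falling strictly between the two, so that the two convex branches of $L^{(2)}$ can each be tangent to the common slope-$(-1)$ line, and on those two tangent delays being exactly $v_i$ apart. Everything else --- the reduction to $L^{(2)}$, the tangent-line inequality, and strictness of the equality case --- is structural and parameter-independent.
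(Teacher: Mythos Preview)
Your central structural step---reducing to $L^{(2)}$ via $L^{(1)}(x)\ge L^{(2)}(x)$---is where the argument breaks. Your sign computation uses conductance $c_j+\mu_j x$, but the construction is calibrated to the convention $c_j+x/\mu_j$ (the paper's $C_1,C_2,C_3$ are written with $x/\mu_j$, and all the numerical identities you would need in (a)--(c) only come out right that way; this is admittedly inconsistent with the phrasing in Section~\ref{sec:notation}). Under the operative convention the per-dollar improvement rate on $e_1$ is $1/\mu_1>1/\mu_2$, your bracket becomes $x\bigl[(d+c_1b_1)(1/\mu_2-1/\mu_1)+\tfrac{b_1}{\mu_1}(c_1+c_2+x/\mu_2)\bigr]$, whose sign is not forced. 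In fact it is strictly negative at $x=\alpha_i$: one checks $C_1(\alpha_i)+\alpha_i=\gamma$ while $C_2(\alpha_i)+\alpha_i>\gamma$, so $L^{(1)}(\alpha_i)<L^{(2)}(\alpha_i)$ and $L_i\not\equiv L^{(2)}$. Throwing away the $L^{(1)}$ branch therefore discards precisely the curve on which the $x=\alpha_i$ equality lives.

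This also means your plan (a)--(c) cannot be carried out. The slope-$(-1)$ tangent to the \emph{first} branch of $L^{(2)}$ (the paper's $C_2$) sits at $\alpha_i+v_i$, not at $\alpha_i$; and the \emph{second} branch ($C_3$) satisfies $C_3(x)+x>\gamma$ strictly for all $x\ge 0$, so it contributes no tangency at all. The two equality points thus come from $C_1$ and $C_2$, one from ``invest in $e_1$'' and one from ``invest in $e_2$'', not from the two pieces of $L^{(2)}$. The paper's proof accordingly keeps all three branches and shows, for each $j$, that $C_j(x)+x\ge\gamma$ with equality exactly at $\alpha_i$ (for $C_1$), at $\alpha_i+v_i$ (for $C_2$), and never (for $C_3$); taking the minimum then yields the Claim. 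Your convexity/tangent-line reasoning is exactly what is used on each branch---only the decomposition needs to be corrected.
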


\begin{proof}
Fix a dipole $i$. Let $v=v_i$ and $c_1$, $c_2$, $\mu_1$, $\mu_2$, $b_1$, $b_2$ and $d$ be the parameters given above. Since each dipole consists of two parallel edges, by Corollary~\ref{cor:paralleledges}, we only need to consider allocations to a single edge. Since $b_2 = 0$, if $L_i(x) \ge b_1$ then both edges carry flow at equilibrium, otherwise only edge $e_2$ has positive flow. Hence, from~(\ref{eqn:pedelay}),

\[
 L(x) = \left\{ \begin{array}{ll}
	\displaystyle \min \left\{ \frac{d+b_1\left(c_1 + (x/\mu_1)\right)}{c_1 + c_2 + x/\mu_1}, \frac{d+b_1 c_1}{c_1 + c_2 + (x/\mu_2)} \right\} & \mbox{if $L(x) \ge b_1$} \\
	\displaystyle  \frac{d}{c_2 + (x/\mu_2)} & \mbox{ otherwise}
\end{array} \right.
\]

\noindent Define the following functions:

\[
C_1(x) := \frac{d+b_1\left(c_1 + (x/\mu_1)\right)}{c_1 + c_2 + x/\mu_1}, \qquad C_2(x) := \frac{d+b_1 c_1}{c_1 + c_2 + (x/\mu_2)}, \mbox{ and } C_3(x) := \frac{d}{c_2 + (x/\mu_2)} \, .
\]

\noindent By definition, then $L(x) = \min \{C_1(x), C_2(x), C_3(x)\}$. Define $\gamma = 2v(5 \sqrt{2} +1)$ and $\alpha = v(\sqrt{2}+1)$. We will show the following properties for these functions:

\begin{enumerate}
\item $C_1(x) + x \ge \gamma$, with equality if and only if $x = \alpha$. \label{item:c1}
\item $C_2(x) + x \ge \gamma$, with equality if and only if $x = \alpha + v$. \label{item:c2}
\item $C_3(x) + x > \gamma$. \label{item:c3}
\end{enumerate}

\noindent The proof of the claim then follows from these properties, and from definition of $L(x)$.

\noindent \emph{Proof of (\ref{item:c1})}. The proof proceeds by showing that $C_1(x) +x$ is a strictly convex function for $x \ge 0$, and then showing that the function is minimized at $x = \alpha$ and $C_1(\alpha)+\alpha = \gamma$. For convenience, we differentiate $C_1(x) + x - b_1$, yielding

\[
\frac{d \left(C_1(x)+x - b_1\right)}{dx} = -\frac{1}{\mu_1} \frac{d - b_1 c_2}{\left( c_1 + c_2 + x/\mu_1 \right)^2} + 1
\]

\noindent It is obvious that the expression on the right is strictly increasing in $x$, and hence $C_1(x)$ is strictly convex. Setting the derivative to be zero gives

\[
x = \sqrt{\mu_1 ( d- b_1 c_2)} - \mu_1 (c_1 + c_2) \, .
\]

\noindent Replacing values for the parameters,

\begin{align*}
x & = \sqrt{v^2 \lambda \left( 2(\lambda+2) - v(\lambda+2)\frac{1-\delta}{v}\right)} - v^2 \lambda \frac{1}{v} \\
	& = v \left( \sqrt{2 \lambda (\lambda + 2) - \lambda (\lambda+2) \frac{12}{31}} - \lambda \right) \\
	& = v \left( \sqrt{2 \times 31 - 31 \frac{12}{31}} - 4 \sqrt{2} + 1 \right) ~ = ~   v (\sqrt{2} + 1) ~ = ~ \alpha
\end{align*}

\noindent where the second and third equalities are obtained by replacing the values for $t$ and $\lambda$ respectively. We now evaluate $C_1(\alpha)+\alpha$ to obtain

\begin{align*}
C_1(\alpha) + \alpha & = \frac{2(\lambda + 2) + v(\lambda+2) \left(\frac{\delta}{v} + \frac{v(1+\sqrt{2})}{v^2 \lambda}\right)}{\frac{1}{v}+\frac{v(1+\sqrt{2})}{v^2 \lambda}} + v(1+\sqrt{2}) \\
	& = v \frac{2(\lambda + 2) + (\lambda+2) \left(t + \frac{(1+\sqrt{2})}{\lambda}\right)}{1+\frac{1+\sqrt{2}}{\lambda}} + v(1+\sqrt{2}) \\
	& = v \frac{2\lambda (\lambda + 2) + \delta (\lambda+2) \lambda + (\lambda+2)(1+\sqrt{2})}{\lambda+1+\sqrt{2}} + v(1+\sqrt{2}) \\
	& = v \frac{62 + 19 + 9 + 5 \sqrt{2}}{5 \sqrt{2}} + v(1+\sqrt{2}) \\
	& = v (1 + 9 \sqrt{2}) + v (1 + \sqrt{2}) ~ = ~ v(2 + 10 \sqrt{2}) ~ = ~ \gamma \, .
\end{align*}

Thus, $C_1(x) + x$ is minimized at $x = \alpha$, and $C_1(\alpha) + \alpha = \gamma$. Since $C_1(x)$ is strictly convex for $x \ge 0$, $C_1(x) + x > \gamma$ for $x \neq \gamma$ and $x \ge 0$.

\noindent \emph{Proof of (\ref{item:c2})}. Our proof is very similar to the proof for $C_1(x)$. We observe that $C_2(x)+x$ is strictly convex for $x \ge 0$. We show that $C_2(x) + x$ is minimized at $x = \alpha + v$, and that $C_2(\alpha+v)+\alpha+v = \gamma$. By strict convexity, $C_2(x) + x > \gamma$ for $x \neq \alpha +v $ and $x \ge 0$, completing the proof.

Differentiate $C_2(x) + x$ gives us

\[
\frac{d \left(C_2(x)+x\right)}{dx} = -\frac{1}{\mu_2} \frac{d + b_1 c_1}{\left( c_1 + c_2 + x/\mu_2 \right)^2} + 1
\]

\noindent Again, the expression on the right is strictly increasing, and hence $C_2(x)$ is strictly convex. Setting the derivative to be zero gives

\begin{align*}
x & = \sqrt{\mu_2 ( d+ b_1 c_1)} - \mu_2 (c_1 + c_2)  \\
	& = \sqrt{ 2 v^2 \lambda \left( 2 (\lambda+2) + v(\lambda+2)\frac{\delta}{v} \right) } - 2 v^2 \lambda \frac{1}{v} \\
	& = v \left( \sqrt{ 2 \lambda(\lambda+2) (2 +\delta)} - 2 \lambda \right) \\
	& = v \left( 9 \sqrt{2} - 2( 4\sqrt{2}-1) \right) ~ = ~ v (\sqrt{2}+2) ~ = ~ \alpha + v \, .
\end{align*}

We evaluate $C_2(\alpha+v) + \alpha+v$ to obtain

\begin{align*}
C_2(\alpha+v)+\alpha+v & = \frac{2(\lambda + 2) + v(\lambda+2) \frac{\delta}{v}}{\frac{1}{v}+ \frac{\alpha+v}{2 v^2 \lambda}} + \alpha + v \\
	& = v\frac{4 \lambda(\lambda+2) + 2\delta \lambda (\lambda+2)}{2 \lambda + \frac{\alpha+v}{v}} + \alpha+v \\
	& = v \frac{124 + \frac{62 \times 19}{31}}{8 \sqrt{2} - 2 + 2 + \sqrt{2}} + v(2 + \sqrt{2}) ~ = ~ v 9 \sqrt{2} + v (2 + \sqrt{2}) ~ = ~ \gamma \, .
\end{align*}

This completes the proof for (\ref{item:c2}).

\noindent \emph{Proof of (\ref{item:c3})}. We show that the minimum value of $C_3(x) + x$ is strictly larger than $\gamma$. Differentiating $C_3(x) + x$,

\[
\frac{d \left(C_3(x)+x\right)}{dx} = -\frac{1}{\mu_2} \frac{d}{\left(c_2 + x/\mu_2 \right)^2} + 1
\]

\noindent and hence, $C_3(x) + x$ is minimized when

\begin{align*}
x & = \sqrt{d \mu_2} - \mu_2 c_2  \, .
\end{align*}

\noindent Let $x'$ denote this value that minimizes $C_3(x) + x$. Then

\begin{align*}
C_3(x) + x & \ge \frac{d}{c_2 + x'/\mu_2} + x' \\
	& = \sqrt{d \mu_2} + \sqrt{d \mu_2} - \mu_2 c_2 \\
	& = 2 \sqrt{ 2(\lambda + 2) 2 v^2 \lambda} - 2 v^2 \lambda \frac{1-t}{v} \\
	& = 4 v \sqrt{31} - 2v (4 \sqrt{2} - 1) \frac{12}{31} ~ > ~ \gamma \, .
\end{align*} 
\end{proof}

We choose our budget $B = V + \sum_i \alpha_i$. Claim~\ref{clm:dipole} is illustrated in Figure~\ref{fig:hardness}, which depicts the optimal delay at equilibrium $L_i(x)$ in dipole $i$ as a function of the allocation $x$ to the dipole. By Corollary~\ref{cor:paralleledges}, for a single dipole it is always optimal to assign the entire budget to a single edge. Further, from~(\ref{eqn:pedelay2}) and the first-order conditions for optimality it can be obtained that the budgets for which it is optimal to assign to a fixed edge form a continuous interval, and the delay at equilibrium as a function of the allocation is convex in these intervals. This is depicted by the two convex portions of the curve in Figure~\ref{fig:hardness}. Our proof of hardness is based on observing that our construction from Claim~\ref{clm:dipole} puts the entire curve above the line $y = L_i(\alpha_i) + \alpha_i - x$ except at points $x = \alpha_i$ and $x = \alpha_i + v_i$ where the curve is tangent to the line.

\begin{figure}[h]
\centering
\psfrag{Li}[c]{$L_i(x)$}
\psfrag{Li1}[r]{$L_i(\alpha_i)$}
\psfrag{Li2}[r]{$L_i(\alpha_i) - v_i$}
\psfrag{ai1}[c]{$\alpha_i$}
\psfrag{ai2}[c]{$\alpha_i + v_i$}
\psfrag{x}[c]{$x$}
\includegraphics[scale=0.3]{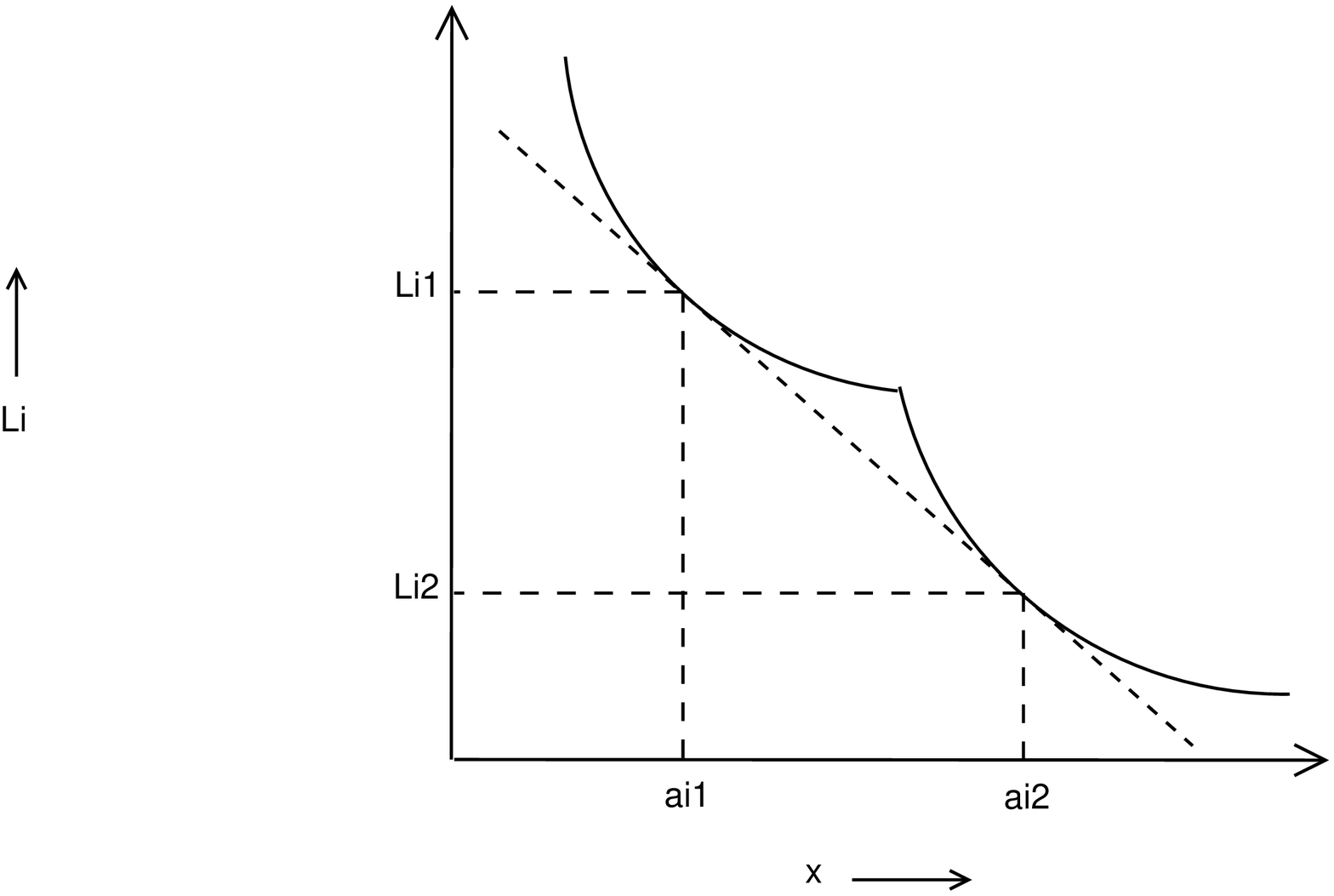}
\caption{The $L_i(x)$ curve for dipole $i$ is tangent to line $y = L_i(\alpha_i) + \alpha_i - x$ at exactly two points, $x = \alpha_i$ and $x = \alpha_i + v_i$.}
\label{fig:hardness}
\end{figure}

\begin{lemma}
The optimal delay for the constructed instance and the given budget is $\sum_i L_i(\alpha_i) - V$ if and only if the given instance of partition has a solution.
\label{lem:hardness}
\end{lemma}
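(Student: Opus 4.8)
The plan is to use Claim~\ref{clm:dipole} to reduce the optimization over allocations to a combinatorial choice, one bit per dipole, and then match this to the partition instance. By Claim~\ref{clm:dipole}, for every dipole $i$ and every allocation $x \ge 0$ to that dipole, $L_i(x) \ge L_i(\alpha_i) + \alpha_i - x$, with equality exactly when $x \in \{\alpha_i, \alpha_i + v_i\}$. Summing over all $n$ dipoles, for any valid allocation $\beta = (\beta_i)_{i \in [n]}$ we get $L(\beta) = \sum_i L_i(\beta_i) \ge \sum_i L_i(\alpha_i) + \sum_i \alpha_i - \sum_i \beta_i \ge \sum_i L_i(\alpha_i) + \sum_i \alpha_i - B$, where the last step uses $\sum_i \beta_i \le B$. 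Since we chose $B = V + \sum_i \alpha_i$, the right-hand side is exactly $\sum_i L_i(\alpha_i) - V$. So $\sum_i L_i(\alpha_i) - V$ is a lower bound on the optimal delay regardless of the partition instance, and the whole question is whether it can be achieved.

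Next I would analyze when equality holds throughout. Equality in the second inequality forces $\sum_i \beta_i = B$, i.e.\ the budget is spent entirely. Equality in the first (the sum of the per-dipole bounds) forces $\beta_i \in \{\alpha_i, \alpha_i + v_i\}$ for every $i$ simultaneously. So write $\beta_i = \alpha_i + \epsilon_i v_i$ with $\epsilon_i \in \{0,1\}$, and let $S = \{i : \epsilon_i = 1\}$. The budget constraint becomes $\sum_i \alpha_i + \sum_{i \in S} v_i = B = V + \sum_i \alpha_i$, i.e.\ $\sum_{i \in S} v_i = V$. Hence an allocation achieving delay $\sum_i L_i(\alpha_i) - V$ exists if and only if there is a subset $S \subseteq [n]$ with $\sum_{i \in S} v_i = V$, which is exactly the condition that the partition instance has a solution. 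Conversely, given such an $S$, setting $\beta_i = \alpha_i + v_i$ for $i \in S$ and $\beta_i = \alpha_i$ otherwise is a valid allocation (nonnegative, and total equal to $B$) that attains the bound with equality in every dipole, so the optimal delay equals $\sum_i L_i(\alpha_i) - V$.

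One technical point to pin down is that this whole reduction is polynomial-time and that all quantities involved are well-behaved: the parameters $c_j, \mu_j, b_j, d$ for each dipole are given by explicit rational/algebraic expressions in $v_i$ (with the fixed constants $\delta = 19/31$, $\lambda = 4\sqrt2 - 1$), the value $\alpha_i = (1+\sqrt2)v_i$ and the budget $B$ are likewise explicit, and $L_i(\alpha_i)$ is computable; since $v_i$ are integers and the construction is size-polynomial, the instance is produced in polynomial time, establishing (weak) NP-hardness. I do not expect any real obstacle here — the heavy lifting is entirely in Claim~\ref{clm:dipole}, which we may assume. The main thing to get right in writing is the bookkeeping: that equality in the summed inequality is equivalent to per-dipole equality (true because each term satisfies its own inequality, so the sum is tight iff each term is tight), and that the ``$x = \alpha_i$ or $x = \alpha_i + v_i$'' alternatives translate cleanly into the $0/1$ indicator $\epsilon_i$ and hence into a subset sum. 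I would present the two directions (existence of good allocation $\Rightarrow$ partition solution, and partition solution $\Rightarrow$ good allocation) explicitly, using the summation argument above for the first and the explicit construction for the second.
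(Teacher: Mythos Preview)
Your proposal is correct and follows essentially the same approach as the paper's proof: sum the per-dipole inequality from Claim~\ref{clm:dipole}, combine with the budget constraint to obtain the lower bound $\sum_i L_i(\alpha_i)-V$, and then argue that equality throughout forces $\beta_i\in\{\alpha_i,\alpha_i+v_i\}$ for every $i$ together with $\sum_i\beta_i=B$, which translates into a subset-sum condition $\sum_{i\in S}v_i=V$. Your equality analysis is in fact slightly more explicit than the paper's (you separate the two inequalities and note that tightness of a sum of valid inequalities forces termwise tightness), and your remarks on polynomial-time constructibility go beyond what the paper records in this particular proof, but the core argument is identical.
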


\begin{proof}
For any allocation $\{\beta_i\}_{i \in [n]}$, the delay obtained at equilibrium is the sum of the dipoles. Thus $L(\beta) = \sum_i L_i(\beta_i)$. By Claim~\ref{clm:dipole}, $L(\beta) \ge \sum_i (L_i(\alpha_i) + \alpha_i - \beta_i)$. Since $\sum_i \beta_i \le B = V + \sum_i \alpha_i$, $L(\beta) \ge \sum_i L_i(\alpha_i) - V$. We will show that this lower bound is achieved if and only if the instance of partition has a solution.

Suppose the instance has a solution $S \subseteq [n]$. Then for the network improvement instance, allocate $\beta_i = \alpha_i$ to each dipole $i \not \in S$, and $\beta_i = \alpha_i + v_i$ to each dipole $i \in S$. For this allocation, by the claim,

\begin{align*}
L(\beta) & = \sum_{i \in S}  \left( L_i(\alpha_i) + \alpha_i - \alpha_i - v_i \right) + \sum_{i \not \in S}  \left( L_i(\alpha_i) + \alpha_i - \alpha_i \right) \\
	& = \sum_{i \in [n]} L_i(\alpha_i) - \sum_{i \in S} v_i ~ = \sum_{i \in [n]} L_i(\alpha_i) - V 
\end{align*}

\noindent completing the proof in this direction. For the other direction, suppose the instance of network improvement has an allocation $\beta$ that achieves the lower bound.  Then the inequality in Claim~\ref{clm:dipole} must hold with equality for each dipole. Hence for each dipole, either $\beta_i = \alpha_i$, or $\beta_i = \alpha_i + v_i$. Let $S$ be the set of dipoles with the latter allocation. Then 

\[
L(\beta) = \sum_{i \not \in S} L_i(\alpha_i) + \sum_{i \in S} (L_i (\alpha_i) - v_i) = \sum_i L_i(\alpha_i) - V
\]

\noindent where the first equality is by Claim~\ref{clm:dipole} and the second equality is because the allocation achieves the lower bound in the lemma. It follows immediately that $\sum_{i \in S} v_i = V$, and hence $S$ is a solution to the partition instance. \noindent
\end{proof}

\section{An FPTAS for Series-Parallel Graphs}
\label{sec:sepa}

In the previous section, we have shown that even for limited network topologies, obtaining the optimal allocation is weakly NP-hard, and thus these topologies are unlikely to have optimal polynomial time algorithms. We now show that for a large class of graphs with a single source and sink, we can obtain in polynomial time near-optimal algorithms for network improvement. Specifically, we present an FPTAS\footnote{A fully polynomial-time approximation scheme (FPTAS) is a sequence of algorithms $\{A_\epsilon\}$ so that, for any $\epsilon > 0$, $A_\epsilon$ runs in time polynomial in the input and $1/\epsilon$ and outputs a solution that is at most a $(1+\epsilon)$ factor worse than the optimal solution.} for the network improvement with bounded polynomial delays on a two-terminal series-parallel graph. Our algorithm is based on appropriately discretizing the space of budgets and flows on each subgraph $H$ of the graph $G$. We then obtain the allocation and flow that minimizes the maximum delay over all paths with positive flow in this discretized space. Although the discretized flow obtained will not be an equilibrium flow, by Lemma~\ref{lem:sepaeqmin}, this maximum delay will be an upper bound on the delay of the equilibrium flow for the allocation obtained. We will show that the delay of the discretized flow is a good approximation to the optimal delay.

We start with a definition for series-parallel graphs and of the corresponding decomposition tree. 

\begin{definition}[Series-parallel graph]
A single edge $e=(s,t)$ is a series-parallel graph with source $s$ and sink $t$. Further, if two graphs $G_1$ and $G_2$ are series-parallel graphs with source and sink $s_1$, $t_1$ and $s_2$, $t_2$ respectively, then they can be combined to form a new series-parallel graph by either of the following operations:
\begin{enumerate}
\item Series Composition: Merge $t_1$ and $s_2$ to obtain graph $G$, and let $s_1$ and $t_2$ be the new source and sink of $G$;
\item Parallel Composition: Merge $s_1$ and $s_2$ to obtain a new source $s$, and $t_1$ and $t_2$ to obtain a new sink $t$ in the resulting graph $G$.
\end{enumerate}
\end{definition}

The recursive definition of a series-parallel graph naturally yields a binary tree called the \emph{decomposition tree} of the series-parallel graph. The root of the tree corresponds to graph $G$, and the leaves correspond to the edges of $G$. Each internal node corresponds to the subgraph obtained by the series or parallel composition of the subgraphs corresponding to the children. In the following discussion, a subgraph of series-parallel graph $G$ is a graph obtained during the recursive construction, and hence corresponds to a node in the decomposition tree. We use $s_H$ and $t_H$ to denote the source and sink of subgraph $H$. We use $H_1 \sim H_2$ to denote the series composition of $H_1$ and $H_2$, and $H_1 || H_2$ to denote the parallel composition.

We first show that in a series-parallel network, the equilibrium flow minimizes the maximum delay over all paths that have positive flow.

\begin{lemma}
Let $f$ be the equilibrium flow in routing game $\Gamma$ on series-parallel graph $G$, and $g$ be any $s$-$t$ flow of value $d$. Then $\max_{p: f_p >0} l_p(f) \le \max_{p:g_p > 0} l_p(g)$.
\label{lem:sepaeqmin}
\end{lemma}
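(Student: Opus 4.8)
The plan is to prove this by induction on the structure of the series-parallel graph $G$, following its decomposition tree. The statement to establish for each subgraph $H$ is: if $f$ is the equilibrium flow for demand $d$ routed through $H$ (i.e., $f$ minimizes $\sum_{e} \int_0^{f_e} l_e$ over $s_H$-$t_H$ flows of value $d$), and $g$ is \emph{any} $s_H$-$t_H$ flow of value $d$, then $\max_{p : f_p > 0} l_p(f) \le \max_{p : g_p > 0} l_p(g)$. Write $L_H(d)$ for the common equilibrium path delay $\max_{p:f_p>0} l_p(f)$ when demand $d$ is routed through $H$ (by Definition~\ref{defn:wardrop} all flow-carrying paths share this delay). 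The base case $H = $ single edge $e$ is immediate: the only path is $e$ itself, both flows route $d$ on it, and $l_e(d) = l_e(d)$.

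For the inductive step, suppose $H = H_1 \sim H_2$ (series). Any $s_H$-$t_H$ flow of value $d$ decomposes uniquely into an $s_{H_1}$-$t_{H_1}$ flow of value $d$ on $H_1$ and an $s_{H_2}$-$t_{H_2}$ flow of value $d$ on $H_2$, since all the demand must pass through the merged cut vertex. The equilibrium flow $f$ on $H$ restricts to the equilibrium flows $f^1, f^2$ on the pieces (minimizing the separable potential decouples across the two halves), and every $s_H$-$t_H$ path is the concatenation of an $H_1$-path and an $H_2$-path, so $\max_{p:f_p>0} l_p(f) = L_{H_1}(d) + L_{H_2}(d)$. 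For an arbitrary $g$ with restrictions $g^1, g^2$, pick a path $p = p_1 p_2$ carrying positive flow in $g$; then $l_p(g) = l_{p_1}(g^1) + l_{p_2}(g^2)$. We want to choose $p_1, p_2$ so that $l_{p_1}(g^1) \ge L_{H_1}(d)$ and $l_{p_2}(g^2) \ge L_{H_2}(d)$; this is exactly the inductive hypothesis applied to each half, which gives $\max_{p_1 : g^1_{p_1}>0} l_{p_1}(g^1) \ge L_{H_1}(d)$, similarly for $H_2$ — and since the $H_1$- and $H_2$-parts of $g$ are independent, some single $s_H$-$t_H$ path realizes both maxima (concatenate the worst $H_1$-subpath with the worst $H_2$-subpath; both carry positive flow, hence so does the concatenation). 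Thus $\max_{p:g_p>0} l_p(g) \ge L_{H_1}(d) + L_{H_2}(d)$, as needed.

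For $H = H_1 \| H_2$ (parallel), any $s_H$-$t_H$ flow of value $d$ splits as a flow of value $d_1$ through $H_1$ and $d_2 = d - d_1$ through $H_2$ for some $d_1 \in [0,d]$. For the equilibrium flow $f$, let $d_1^*, d_2^*$ be its split; the equilibrium condition forces $L_{H_1}(d_1^*) = L_{H_2}(d_2^*)$ if both $d_i^* > 0$ (and this common value is $\max_{p:f_p>0}l_p(f)$), or if say $d_2^* = 0$ then $L_{H_1}(d) \le L_{H_2}(0)$. Now take arbitrary $g$ with split $d_1, d_2$ and restrictions $g^1, g^2$. A path $p$ carrying positive $g$-flow lies entirely in $H_1$ or entirely in $H_2$; WLOG by symmetry suppose $d_1 \ge d_1^*$ (the split sends at least as much through $H_1$ as the equilibrium does). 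Apply the inductive hypothesis to $H_1$ with demand $d_1$: the equilibrium flow through $H_1$ at demand $d_1$ has path delay $L_{H_1}(d_1)$, and $\max_{p:g^1_p>0} l_p(g^1) \ge L_{H_1}(d_1)$. Since $L_{H_1}$ is nondecreasing in demand (each $l_e$ is increasing, so more flow cannot decrease the equilibrium path delay — a standard monotonicity fact for Wardrop equilibria), $L_{H_1}(d_1) \ge L_{H_1}(d_1^*) = \max_{p:f_p>0} l_p(f)$ (or $\ge L_{H_1}(d)$ in the degenerate case, which equals the equilibrium delay). Since $g$ must route positive flow somewhere in $H_1$ when $d_1 > 0$ — and if $d_1 = 0$ then $d_1^* = 0$ too, handle that boundary case directly — we get $\max_{p:g_p>0} l_p(g) \ge \max_{p:f_p>0} l_p(f)$.

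The main obstacle I anticipate is the careful handling of the parallel case: pinning down exactly which side receives $\ge d_i^*$ flow, dealing with the boundary cases where a side gets zero flow in the equilibrium or in $g$, and cleanly invoking the monotonicity of $L_H(\cdot)$ in the demand. The monotonicity fact itself is intuitive (add more travelers to a network with increasing delays and the shortest-path-at-equilibrium cannot drop) but should be stated explicitly — perhaps as a small lemma proved by a standard exchange/variational-inequality argument — since the induction leans on it. The series case, by contrast, is essentially bookkeeping about concatenation of paths and the separability of both the potential function and the path-delay sum.
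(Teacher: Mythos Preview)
Your inductive proof is correct, but the paper takes a different and considerably shorter route. The paper isolates a purely flow-theoretic fact about series-parallel graphs (its Lemma~\ref{lem:sepapaths}): given any two $s$-$t$ flows $f,g$ with $|g|\ge|f|$, there is an $s$-$t$ path $p$ on which $g_e>0$ and $g_e\ge f_e$ for every edge $e$. From this, the lemma is a two-liner: along that path $l_p(g)\ge l_p(f)\ge \min_q l_q(f)$, and since $f$ is a Wardrop equilibrium, $\min_q l_q(f)=\max_{q:f_q>0} l_q(f)$.

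Your argument instead performs the induction directly on the statement of the lemma, which forces you to carry the equilibrium structure through the recursion. This buys nothing and costs something: in the parallel case you must invoke monotonicity of the equilibrium path delay $L_H(\cdot)$ in the demand and handle several boundary cases ($d_i^*=0$, $d_i=0$). These are all fine---the monotonicity is standard (e.g., by convexity of the Beckmann potential in the demand, or by folding it into the same induction)---but they make the proof longer and more delicate than necessary. The paper's decomposition cleanly separates the combinatorial content (edgewise flow domination on some path, proved by an easy SP induction that never mentions delays) from the game-theoretic content (the one-line use of the Wardrop condition), which is the more reusable and transparent organization.
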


\noindent We use the following result about flows in series-parallel graphs. We omit a formal proof, which follows immediately by induction on the graph.

\begin{lemma}
Let $G$ be a directed two-terminal series-parallel graph with terminals $s$ and $t$, and $f$, $g$ be two $s$-$t$ flows satisfying $|g| \ge |f|$ and $|g| > 0$. Then there exists an $s$-$t$ path $p$ such that $\forall e \in p$, $g_e > 0$ and $g_e \ge f_e$.
\label{lem:sepapaths}
\end{lemma}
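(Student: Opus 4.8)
The plan is to induct on the series-parallel decomposition tree of $G$, following the recursive definition. In the base case $G$ is a single edge $e=(s,t)$, so $f_e=|f|$ and $g_e=|g|$; since $|g|\ge|f|$ and $|g|>0$, the one-edge path $p=(e)$ has $g_e=|g|>0$ and $g_e=|g|\ge|f|=f_e$, as required.

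For the inductive step I would first record the elementary observation that, because the two subgraphs in a composition have disjoint edge sets — sharing only the middle vertex $u$ in a series composition $H_1\sim H_2$, and only the terminals in a parallel composition $H_1|| H_2$ — any flow $h$ on $G$ restricts to flows on the two pieces: in the series case to an $s_{H_1}$-$u$ flow $h^1$ on $H_1$ and a $u$-$t_{H_2}$ flow $h^2$ on $H_2$ with $|h^1|=|h^2|=|h|$ (flow conservation at $u$ equates the two sub-values, and each equals the net flow across $s$, resp.\ $t$), and in the parallel case to $s$-$t$ flows $h^1$ on $H_1$ and $h^2$ on $H_2$ with $|h^1|+|h^2|=|h|$; in both cases $h^i\ge 0$ edgewise. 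I would apply this to $f$ and $g$. In the series case, $|g^i|=|g|\ge|f|=|f^i|$ and $|g^i|=|g|>0$ for $i=1,2$, so the induction hypothesis yields an $s_{H_1}$-$u$ path $p_1$ in $H_1$ and a $u$-$t_{H_2}$ path $p_2$ in $H_2$ along each of which $g_e>0$ and $g_e\ge f_e$; since $H_1$ and $H_2$ meet only at $u$, the concatenation $p=p_1p_2$ is an $s$-$t$ path in $G$ with the claimed property.

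The one step that requires a (short) argument is the parallel case, where I must show that at least one branch $i\in\{1,2\}$ inherits the hypothesis, i.e.\ satisfies $|g^i|>0$ and $|g^i|\ge|f^i|$; granted this, the induction hypothesis applied to $H_i$ produces an $s$-$t$ path $p$ lying inside $H_i$, hence inside $G$, with $g_e>0$ and $g_e\ge f_e$ on $p$. Suppose for contradiction that no branch works, so for each $i$ either $|g^i|=0$ or $0<|g^i|<|f^i|$. Let $I=\{\,i:|g^i|>0\,\}$; if $I=\emptyset$ then $|g|=|g^1|+|g^2|=0$, contradicting $|g|>0$, so $I\neq\emptyset$, and every $i\in I$ has $|g^i|<|f^i|$. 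Then
\[
|g|=\sum_{i\in I}|g^i|<\sum_{i\in I}|f^i|\le |f^1|+|f^2|=|f|,
\]
contradicting $|g|\ge|f|$. Hence some branch satisfies the hypothesis and the induction closes.

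Thus the only genuine content is this pigeonhole-style inequality in the parallel composition; the base case and the series case are pure bookkeeping, which is why the statement can fairly be called immediate. The only place to be careful is to confirm that the hypotheses $|g|\ge|f|$ and $|g|>0$ are both inherited by some branch in the parallel case (they need not be inherited by a fixed branch), and the displayed computation handles exactly that.
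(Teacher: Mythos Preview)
Your proof is correct and is exactly the induction on the series-parallel decomposition that the paper alludes to when it says the result ``follows immediately by induction on the graph.'' You have simply made explicit the one nontrivial step (the pigeonhole argument in the parallel case) that the paper leaves to the reader.
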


\noindent \emph{Proof of Lemma~\ref{lem:sepaeqmin}.}
By Lemma~\ref{lem:sepapaths}, it follows that

\[
\max_{p: g_p > 0} l_p(g) \ge \min_p l_p(f) \, .
\]

\noindent Since $f$ is an equilibrium flow, all paths carrying positive flow have minimum delay, and the lemma follows. \qed

Given a parameter $\epsilon > 0$, define $\nu := \max_e n_e$ as the maximum exponent of the delay function on any edge. Let $\lambda := \epsilon^2/m^2$ be our unit of discretization, where as before $m = |E|$. For clarity of presentation, we assume that $1/\lambda$ is integral. For any subgraph $H$ of $G$ and $k \in \mathbb{Z}_+$, we define the set of discretized allocations $A_\epsilon(H,k)$ as the set of all valid allocations of budget $k B \lambda$ to edges in $H$, so that the allocation to each edge is either 0 or an integral multiple of $B \lambda$. We define $F_\epsilon(H,k)$ as the set of all valid $s_H$-$t_H$ flows on the edges of $H$ of value $k d \lambda $, that are additionally either zero or an integral multiple of $d \lambda$ on every edge. More formally,

\begin{align*}
A_\epsilon(H,k) & := \left\{ (\beta_e)_{e \in E(H)}: \forall e \in E(H), \, \beta_e = k_e B \lambda \mbox{ for } k_e \in \mathbb{Z}_+ \mbox{ and } \sum_{e \in E(H)} \beta_e = k B \lambda \right\} \\
F_\epsilon(H,k) & := \left\{ (f_e)_{e \in E(H)}: f \mbox{ is an $s_H$-$t_H$ flow of value $k d \lambda$ and } \forall e \in E(H), \, f_e = k_e d \lambda \mbox{ for } k_e \in \mathbb{Z}_+ \right\}
\end{align*}

As usual, let $\beta^*$ be the optimal allocation, and $f^*$ and $L^*$ be the equilibrium flow and delay for allocation $\beta^*$. We initially make the assumption that $\beta_e^* \ge \lambda B / \epsilon$ on every edge, and will remove this assumption later. We first show that optimizing over flows and allocations in the discretized space is sufficient to obtain a good approximation to the optimal delay.

\begin{lemma}
There exists flow $\hat{f} \in F_\epsilon(G, 1/\lambda)$ and allocation $\hat{\beta} \in A_\epsilon(G, 1/\lambda)$ that satisfy

\[
\max_{p: \hat{f}_p > 0} \sum_{e \in p} l_e (\hat{f}_e, \hat{\beta}_e) \le (1+\epsilon)^\nu L^* / (1-\epsilon)^\nu \, .
\]
\label{lem:epsclose}
\end{lemma}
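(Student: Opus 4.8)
I would build $\hat\beta$ by snapping $\beta^*$ down onto the allocation grid (dumping the small leftover budget onto one edge), and $\hat f$ by rounding a path decomposition of the optimal equilibrium flow $f^*$; then I would bound, path by path, the delay of $\hat f$ under $\hat\beta$ against $L^*$. Throughout, $\operatorname{supp}(f^*)$ denotes the edges with $f^*_e>0$.

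For the allocation: let $\hat\beta_e$ be $\beta^*_e$ rounded down to a multiple of $B\lambda$, so $\hat\beta_e>\beta^*_e-B\lambda$, with the leftover budget (a multiple of $B\lambda$, since $1/\lambda\in\mathbb{Z}_+$) added to a single edge, so that $\hat\beta\in A_\epsilon(G,1/\lambda)$. The assumption $\beta^*_e\ge\lambda B/\epsilon$ gives $\mu_e B\lambda\le\epsilon\,\mu_e\beta^*_e\le\epsilon\,c_e(\beta^*_e)$ (using $c_e\ge0$), hence $c_e(\hat\beta_e)=c_e+\mu_e\hat\beta_e>c_e(\beta^*_e)-\mu_e B\lambda\ge(1-\epsilon)c_e(\beta^*_e)$ for every edge. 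For the flow: decompose $f^*$ into at most $m$ path-flows $\phi_1,\dots,\phi_k>0$ along $s$-$t$ paths $p_1,\dots,p_k$, with $p_1$ the heaviest, so $\phi_1\ge d/k\ge d/m$; round $\phi_2,\dots,\phi_k$ down to multiples of $d\lambda$ and set $\hat\phi_1:=d-\sum_{j\ge2}\hat\phi_j$, and $\hat f_e:=\sum_{j:\,e\in p_j}\hat\phi_j$. Then every $\hat\phi_j$ is a nonnegative multiple of $d\lambda$ and $\sum_j\hat\phi_j=d$, so $\hat f\in F_\epsilon(G,1/\lambda)$; moreover $\hat\phi_j\le\phi_j$ for $j\ge2$, while the deficit absorbed into $\phi_1$ is at most $(k-1)d\lambda\le m\,d\lambda$, and since $m\,d\lambda/(d/m)=m^2\lambda=\epsilon^2$ this inflates $\phi_1$ by at most a factor $1+\epsilon$ (for $\epsilon\le1$). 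Hence $\hat\phi_j\le(1+\epsilon)\phi_j$ for all $j$, giving $\hat f_e\le(1+\epsilon)f^*_e$ on every edge and $\operatorname{supp}(\hat f)\subseteq\operatorname{supp}(f^*)$.

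To finish, I would use the series-parallel structure in the form: every $s$-$t$ path contained in $\operatorname{supp}(f^*)$ has delay exactly $L^*$ under $(f^*,\beta^*)$. This is a straightforward induction on the decomposition tree of $G$ (a support path in a parallel composition stays in one side and inherits that side's common delay; in a series composition it is the concatenation of a support path from each side, whose delays add; the base case is a single edge). Granting this, fix any $p$ with $\hat f_p>0$; then $p\subseteq\operatorname{supp}(\hat f)\subseteq\operatorname{supp}(f^*)$, so $\sum_{e\in p}l_e(f^*_e,\beta^*_e)=L^*$, and edge by edge (with $c_e(\beta^*_e)>0$ on the support, $n_e\le\nu$, and $\big((1+\epsilon)/(1-\epsilon)\big)^\nu\ge1$), $l_e(\hat f_e,\hat\beta_e)=(\hat f_e/c_e(\hat\beta_e))^{n_e}+b_e\le\big((1+\epsilon)/(1-\epsilon)\big)^{n_e}(f^*_e/c_e(\beta^*_e))^{n_e}+b_e\le\big((1+\epsilon)/(1-\epsilon)\big)^{\nu}\,l_e(f^*_e,\beta^*_e)$. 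Summing over $e\in p$ bounds $\sum_{e\in p}l_e(\hat f_e,\hat\beta_e)$ by $\big((1+\epsilon)/(1-\epsilon)\big)^{\nu}L^*=(1+\epsilon)^\nu L^*/(1-\epsilon)^\nu$, and taking the maximum over such $p$ gives the claim.

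The main obstacle is the last step. The maximum in the statement runs over \emph{all} flow-carrying paths of $\hat f$, including paths that splice together sub-paths of different decomposition paths, so I cannot simply control the decomposition paths one at a time; I instead need that $\operatorname{supp}(f^*)$ carries no $s$-$t$ path of delay exceeding $L^*$, which is a genuinely series-parallel phenomenon and fails on general graphs (in line with the $4/3$ lower bound of Section~\ref{sec:general}). The remaining, purely quantitative point is forcing the flow rounding to be \emph{multiplicative} rather than additive — which is exactly why the whole rounding deficit is pushed onto the heaviest path and why the discretization unit is taken proportional to $\epsilon^2/m^2$.
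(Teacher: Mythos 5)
Your construction of $\hat\beta$ and $\hat f$ is exactly the paper's (round $\beta^*$ down componentwise and dump the leftover on one edge; take an at-most-$m$-path decomposition of $f^*$, round each small path down, and absorb the deficit into the heaviest path), and the edge-by-edge multiplicative bound giving $\bigl((1+\epsilon)/(1-\epsilon)\bigr)^\nu L^*$ is the same as the paper's. So the approach matches.

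The one place you diverge is in how you justify that every $s$-$t$ path $p$ with $\hat f_p>0$ has $\sum_{e\in p}l_e(f^*_e,\beta^*_e)=L^*$. You treat this as ``a genuinely series-parallel phenomenon'' and propose an induction on the decomposition tree. That induction is correct, but the restriction to series-parallel is not needed, and the claim that the fact ``fails on general graphs'' is wrong. For any single-commodity Wardrop equilibrium $f^*$ (on any directed graph), the KKT conditions for the Beckmann potential yield node potentials $\phi:V\to\mathbb{R}$ with $l_e(f^*_e)\ge\phi(v)-\phi(u)$ for every edge $e=(u,v)$, with equality whenever $f^*_e>0$. Hence for any $s$-$t$ path $p\subseteq\operatorname{supp}(f^*)$, the delay telescopes to $\phi(t)-\phi(s)=L^*$, regardless of topology. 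That is exactly what the paper is invoking when it says ``the last equality is because $f_e^*>0$ for every edge $e\in p$.'' (Series-parallel structure \emph{is} needed elsewhere in Section~\ref{sec:sepa} --- for Lemma~\ref{lem:sepaeqmin} and the dynamic program --- but not in Lemma~\ref{lem:epsclose}.) The upshot: your proof is sound, and you correctly identified the subtlety that the maximum ranges over all of $\operatorname{supp}(\hat f)$ rather than just decomposition paths, but the general potential-function argument resolves it more directly than your topology-specific induction, and is what the paper uses.
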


We first show the following claim. The proof uses the fact that for any flow $f$ on an acyclic graph, a path-decomposition $\{f_p\}_{p \in P}$ of flow $f$ can be obtained so that at most $m$ paths $p \in P$ have strictly positive flow.

\begin{clm}
There exists flow $\hat{f} \in F_\epsilon(G, 1/\lambda)$ that satisfies $\hat{f}_e \le (1+\epsilon) f_e^*$ for all $e \in E$, and allocation $\hat{\beta} \in A_\epsilon(G, 1/\lambda)$ that satisfies $\hat{\beta}_e \ge \beta_e^* (1-\epsilon)$.
\label{clm:epsclose}
\end{clm}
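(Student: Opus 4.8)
The plan is to build $\hat\beta$ and $\hat f$ independently: round $\beta^*$ coordinatewise down to the $B\lambda$-grid, round the path flows of $f^*$ down to the $d\lambda$-grid, and then repair the two totals so they land exactly on $B$ and $d$. For the allocation, set $\hat\beta_e := \lfloor \beta_e^*/(B\lambda)\rfloor\, B\lambda$. Rounding down costs at most $B\lambda$ per edge, so $\hat\beta_e \ge \beta_e^* - B\lambda$; the standing hypothesis $\beta_e^* \ge \lambda B/\epsilon$ says exactly that $B\lambda \le \epsilon\beta_e^*$, whence $\hat\beta_e \ge (1-\epsilon)\beta_e^*$. Since $\sum_e\beta_e^* \le B$ and $1/\lambda \in \mathbb{Z}$, the sum $\sum_e\hat\beta_e$ is an integral multiple of $B\lambda$ not exceeding $B$; I would top it up to exactly $B$ by assigning the remaining integral number of $B\lambda$-units to an arbitrary edge. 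This only raises allocations, so the lower bound is preserved, and each coordinate is a multiple of $B\lambda$ with the coordinates summing to $B$, i.e., $\hat\beta \in A_\epsilon(G, 1/\lambda)$.

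For the flow, use the stated fact that a flow on an acyclic graph admits a path decomposition with at most $m$ positive-flow paths to fix $\{f_p^*\}_{p\in P}$ with $|P|\le m$. Put $f_p' := \lfloor f_p^*/(d\lambda)\rfloor\,d\lambda$ and let $f'$ be the induced edge flow; then $f'$ is a valid $s$-$t$ flow that is a nonnegative integral multiple of $d\lambda$ on each edge and satisfies $f'_e \le f_e^*$ everywhere. Its value equals $d - r\,d\lambda$ for some integer $r$ (using that $d = (1/\lambda)\,d\lambda$), and since at most $m$ paths were truncated, each by less than $d\lambda$, we have $r < m$. To recover value $d$, I would route an extra $r\,d\lambda$ units along the single path $q\in P$ of largest flow; because the $f_p^*$ sum to $d$ over at most $m$ paths, $f_q^* \ge d/m$, hence $f_e^* \ge d/m$ for every $e\in q$. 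Take $\hat f$ to be $f'$ with this extra flow added. Then $\hat f \in F_\epsilon(G,1/\lambda)$ (value exactly $d$, integral multiples of $d\lambda$), and $\hat f_e = f'_e \le f_e^* \le (1+\epsilon)f_e^*$ for $e\notin q$, while for $e\in q$ we have $\hat f_e \le f_e^* + r\,d\lambda < f_e^* + m\,d\lambda$, and $m\,d\lambda = d\epsilon^2/m \le \epsilon\,(d/m) \le \epsilon f_e^*$ since $\lambda = \epsilon^2/m^2$ and $\epsilon \le 1$; this yields $\hat f_e \le (1+\epsilon)f_e^*$.

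Almost everything here is bookkeeping; the one substantive point — and the only place the specific value $\lambda = \epsilon^2/m^2$ is exploited — is that the flow lost to truncating the decomposition, at most $m\,d\lambda$, has to be small relative to $\epsilon$ times the flow on the heaviest path (at least $d/m$), so that pushing it back through that path does not breach the per-edge cap $(1+\epsilon)f_e^*$. For the allocation the analogous estimate is handed to us: $\beta_e^* \ge \lambda B/\epsilon$ is precisely the statement that the grid error $B\lambda$ is an $\epsilon$-fraction of $\beta_e^*$.
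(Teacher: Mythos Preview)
Your argument is correct and essentially identical to the paper's: round $\beta^*$ down to the $B\lambda$-grid and top up on one edge, and round a path decomposition of $f^*$ down to the $d\lambda$-grid and push the deficit (at most $m\,d\lambda$) back through a single heavy path $q$. The only cosmetic difference is that the paper selects $q$ as any path with $f_q^* \ge m d\lambda/\epsilon$ (which exists since $d/m \ge m d\lambda/\epsilon$ when $\epsilon \le 1$), whereas you take the maximum-flow path; both choices yield $m\,d\lambda \le \epsilon f_q^*$ and hence the edge bound on $q$.
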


\begin{proof}
We will assume we are given $f^*$ and $\beta^*$ and will construct $\hat{f}$ and $\hat{\beta}$ that satisfy the conditions of the claim. We start with the allocation $\hat{\beta}$. On each edge, we round $\beta_e^*$ down to the nearest multiple of $\lambda B$ to obtain $\hat{\beta}_e$ on these edges. On an abitrary edge $e_1$, we allocate $\hat{\beta}_{e_1} = B - \sum_{e \neq e_1} \hat{\beta}_e$. Note that by assumption, $1/\lambda$ is integral. Since the allocation to every other edge is an integral multiple of $B \lambda$, so is the allocation to edge $e_1$. Allocation $\hat{\beta}$ is obviously a valid allocation of budget $B$, and thus $\hat{\beta} \in A_\epsilon(G,1/\lambda)$. Further, since $\beta_e^* \ge B \lambda/\epsilon$ on every edge by assumption, $\epsilon \beta_e^* \ge \lambda B$, and thus on every edge

\begin{align*}
\hat{\beta}_e & \ge \beta_e^* - B \lambda  \ge \beta_e^* - \epsilon \beta_e^* = \beta_e^* (1- \epsilon) \, .
\end{align*}

\noindent Allocation $\hat{\beta}$ is thus the required allocation. 

To obtain flow $\hat{f}$, we start with a flow decomposition $\{f_p^*\}_{p \in P}$ so that at most $m$ paths in $P$ have $f_p^* > 0$. There is some path $p$ with $f_p^* \ge md \lambda/\epsilon$ in this decomposition. Call this path $q$. Then on every path except $q$, we round $f_p^*$ down to the nearest multiple of $d \lambda$ to obtain $\hat{f}_p$ for that path, and assign the remaining flow $d - \sum_{p \neq q} \hat{f}_p$ to path $q$. Since we assume $1/\lambda$ is integral and the flow on every other path is an integral multiple of $d \lambda$, so is the flow on path $q$. Flow $\hat{f}$ is then a flow of value $v$ with the flow on every edge an integer multiple of $d \lambda$, and hence $\hat{f} \in F_\epsilon(G,1/\lambda)$. Further, on every edge $e \not \in q$, $\hat{f}_e \le f_e^*$. Since $f_q^* \ge \epsilon md\lambda/\epsilon$,

\[
\hat{f}_q \le f_q^* + m d \lambda \le f_q^* + \epsilon f_q^* = (1+\epsilon) f_q^* \, .
\]

\noindent Hence for any edge $e \in q$,
	
\begin{align*}
\hat{f}_e & = \sum_{p \in P, p \neq q} \hat{f}_p + \hat{f}_q  \le \sum_{p \in P, p \neq q} f_p^* + (1+\epsilon) f_q^*  \le (1+\epsilon) f_e^* \, .
\end{align*}

\noindent Flow $\hat{f}$ is thus the required flow.
\end{proof}

\noindent \emph{Proof of Lemma~\ref{lem:epsclose}.} We will show the lemma is true for flow $\hat{f}$ and allocation $\hat{\beta}$ obtained in Claim~\ref{clm:epsclose}. Note that for any edge $e$, $\hat{f}_e > 0$ only if $f_e^* > 0$. Then for any path $p$ with $\hat{f}_e > 0$ on every edge $e \in p$,

\begin{align*}
\sum_{e \in p} l_e(\hat{f}_e, \hat{\beta}_e) & = \sum_{e \in p} \left(\frac{\hat{f}_e}{c_e + \mu_e \hat{\beta}_e}\right)^{n_e} + b_e \\
	& \le \sum_{e \in p} \left(\frac{ (1+\epsilon) f_e^*}{c_e + (1 - \epsilon) \mu_e \beta_e^*}\right)^{n_e} + b_e \\
	& \le \left(\frac{1+\epsilon}{1-\epsilon}\right)^\nu \left( \sum_{e \in p} \frac{f_e^*}{c_e + \mu_e \beta_e^*} + b_e \right) \\
	& = \left(\frac{1+\epsilon}{1-\epsilon}\right)^\nu L^* \, .
\end{align*}

\noindent where the last equality is because $f_e^* > 0$ for every edge $e \in p$. \qed

We utilise the recursive structure of series-parallel graphs to give a dynamic programming algorithm to obtain an optimal flow and allocation in the discretized spaces $F_\epsilon$ and $A_\epsilon$ respectively. For subgraph $H$ and $k,l \in [1/\lambda]$, define $D(H,k,l)$ recursively as follows.

\[
D(H,k,l) := \left \{ \begin{array}{ll}
		0, & \mbox{ if $l = 0$} \\
		l_e(dl\lambda,Bk\lambda), & \mbox{ if $l > 0$ and $H = e$} \\
		\min_{u \in [k] \cup \{0\}} \{ D(H_1,u,l) + D(H_2, k-u,l)\}, & \mbox{ if $H=H_1 \sim H_2$.} \\
		\min_{u \in [k] \cup \{0\}, \, v \in [l] \cup \{0\}} \{ \max \{ D(H_1,u,v), D(H_2,k-u, l-v) \} \}, & \mbox{ if $H = H_1 \parallel H_2$. } 
	\end{array} \right.
\]

\noindent From the definition, $D(H,k,l)$ corresponds to a division of flow $dl\lambda$ and budget $Bl \lambda$ between its subgraphs $H_1$ and $H_2$ so that the assignment to each subgraph is a multiple of $d \lambda$ for the flow and $B \lambda$ for the budget. Hence, by recursion, $D(H,k,l)$ corresponds to an allocation of flow and budget to each edge of $H$, so that the flow on each edge is an integral multiple of $d \lambda$ and the budget allocated to each edge is a multiple of $B \lambda$. Further the total budget allocated is $kB \lambda$ and the value of the flow in $H$ is $ld \lambda$. By dynamic programming on the series-parallel decomposition tree for $G$, $D(G,1/\lambda, 1/\lambda)$ can then be computed in time $O(m/\lambda^2) = O(m^3 /\epsilon^2)$. For the following lemma, for subgraph $H$, flow $f$ and allocation $\beta$, define

\[
l_H(f, \beta) = \max_{s_H-t_H \mbox{ paths } p: f_p > 0} \sum_{e \in p} l_e(f_e, \beta_e) \, .
\]

\begin{lemma}
For any subgraph $H$ and $k,l \in [1/\lambda]$,
\[
D(H,k,l) = \min_{f \in F_\epsilon(H,l), \, \beta \in A_\epsilon(H,k)} l_H(f, 
\beta) \, .
\]
\label{lem:dpopt}
\end{lemma}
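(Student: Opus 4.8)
The plan is to prove the identity by structural induction on the decomposition tree of $H$, matching the four cases in the definition of $D$, under the convention that a maximum over the empty set is $0$ (so $l_H(f,\beta) = 0$ when $f$ is the zero flow, consistent with the $l=0$ case of $D$). The base cases are immediate: if $l=0$ then $F_\epsilon(H,l)$ contains only the zero flow, no path carries positive flow, and $l_H(f,\beta)=0=D(H,k,l)$; if $H$ is a single edge $e$ with $l>0$, the unique pair is $f_e=dl\lambda$, $\beta_e=Bk\lambda$, the only $s_H$-$t_H$ path is $e$, and $l_H(f,\beta)=l_e(dl\lambda,Bk\lambda)=D(H,k,l)$. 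For the inductive step I will use two structural facts, both immediate from the recursive definition of series-parallel graphs: in $H=H_1\sim H_2$, every $s_H$-$t_H$ path is a concatenation of a unique $s_{H_1}$-$t_{H_1}$ path with a unique $s_{H_2}$-$t_{H_2}$ path, and every such concatenation is an $s_H$-$t_H$ path; in $H=H_1\parallel H_2$, every $s_H$-$t_H$ path lies entirely inside $H_1$ or entirely inside $H_2$.

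For $H=H_1\sim H_2$: any discretized configuration $(f,\beta)$ on $H$ restricts to configurations $(f_i,\beta_i)$ on the $H_i$, with $f_i\in F_\epsilon(H_i,l)$ (all the flow passes through both parts) and $\beta_1\in A_\epsilon(H_1,u)$, $\beta_2\in A_\epsilon(H_2,k-u)$ for the unique $u\in\{0,\dots,k\}$ with $\sum_{e\in E(H_1)}\beta_e=uB\lambda$; conversely such a pair glues back to a configuration on $H$. By the first structural fact and additivity of path delays, $l_H(f,\beta)=l_{H_1}(f_1,\beta_1)+l_{H_2}(f_2,\beta_2)$, since a delay-maximizing positive-flow path of $H$ is the concatenation of delay-maximizing positive-flow paths of $H_1$ and $H_2$. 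Minimizing over all $(f,\beta)$ --- that is, over $u$ and then independently over the two parts --- and invoking the induction hypothesis gives $\min_{(f,\beta)} l_H(f,\beta)=\min_u\bigl(D(H_1,u,l)+D(H_2,k-u,l)\bigr)=D(H,k,l)$; the ``$\ge$'' direction comes from restricting an arbitrary $H$-configuration, the ``$\le$'' direction from gluing optimal part-configurations.

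For $H=H_1\parallel H_2$: now $(f,\beta)$ restricts to $(f_1,\beta_1)\in F_\epsilon(H_1,v)\times A_\epsilon(H_1,u)$ and $(f_2,\beta_2)\in F_\epsilon(H_2,l-v)\times A_\epsilon(H_2,k-u)$ for suitable $u\in\{0,\dots,k\}$ and $v\in\{0,\dots,l\}$ (the flow value through $H_1$ is a nonnegative-integer multiple of $d\lambda$ since it is a sum of per-edge flows), and again restriction and gluing are inverse. By the second structural fact, every positive-flow $s_H$-$t_H$ path is contained in exactly one part, so $l_H(f,\beta)=\max\{l_{H_1}(f_1,\beta_1),l_{H_2}(f_2,\beta_2)\}$. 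Since the two parts range over independent sets once $u,v$ are fixed and $\max$ is coordinatewise monotone, minimizing commutes into each argument: $\min\max\{\cdot,\cdot\}=\max\{\min(\cdot),\min(\cdot)\}$. Combining with the induction hypothesis, $\min_{(f,\beta)} l_H(f,\beta)=\min_{u,v}\max\{D(H_1,u,v),D(H_2,k-u,l-v)\}=D(H,k,l)$.

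The step I expect to need the most care is the discretization bookkeeping in the inductive step: verifying that splitting a discretized flow or allocation on $H$ across a composition always yields parts lying in the appropriate discretized sets with \emph{integral} split indices $u$ (and $v$, in the parallel case), that no discretized configuration of the parts is missed or double-counted, and that optimal part-configurations glue to a valid $H$-configuration with the right total indices. This rests on the integrality of $1/\lambda$ and on flow values being sums of (hence nonnegative-integer multiples of) the discretized per-edge flows. The structural facts above, and hence the whole argument, use the series-parallel hypothesis essentially and fail for general graphs.
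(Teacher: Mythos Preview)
Your proposal is correct and is precisely the structural-induction argument the paper has in mind; the paper itself omits the proof, noting only that it ``follows from the recursive definition of $D(H,k,l)$.'' Your handling of the two inductive cases---additivity of path delays under series composition and the identity $\min_{x,y}\max\{f(x),g(y)\}=\max\{\min_x f(x),\min_y g(y)\}$ for independent domains under parallel composition---together with the discretization bookkeeping, fills in exactly what the paper left implicit.
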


We omit a formal proof of the lemma, which follows from the recursive definition of $D(H,k,l)$. We now show that if $\beta_e^* \ge \lambda B/\epsilon$ on every edge, the allocation obtained for $D(G, 1/\lambda, 1/\lambda)$ is a near-optimal allocation.

\begin{theorem}
Given $\epsilon' > 0$ and an instance of CNDP in a series-parallel graph so that the optimal allocation satisfies $\beta_e^* \ge \epsilon' B/(6 \nu m^2)$, we can obtain in time $O(m^3 \nu^2/{\epsilon'}^2)$ a valid allocation $\hat{\beta}$ so that the delay at equilibrium for this allocation is at most $(1+ \epsilon') L^*$.
\label{thm:sepa}
\end{theorem}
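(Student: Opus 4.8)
The plan is to combine the discretization machinery already assembled (Claim~\ref{clm:epsclose}, Lemma~\ref{lem:epsclose}, Lemma~\ref{lem:dpopt}, and Lemma~\ref{lem:sepaeqmin}) and then manage the error terms so that the multiplicative factor $(1+\epsilon)^\nu/(1-\epsilon)^\nu$ from Lemma~\ref{lem:epsclose} collapses into a clean $(1+\epsilon')$ bound. First I would set $\epsilon := \epsilon'/(6\nu)$, so that the hypothesis $\beta_e^* \ge \epsilon' B/(6\nu m^2) = \epsilon B/m^2 = \lambda B/\epsilon$ (recalling $\lambda = \epsilon^2/m^2$) is exactly the standing assumption needed for Claim~\ref{clm:epsclose} and hence Lemma~\ref{lem:epsclose}. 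Running the dynamic program computes $D(G,1/\lambda,1/\lambda)$, and by Lemma~\ref{lem:dpopt} this equals $\min_{f \in F_\epsilon(G,1/\lambda),\,\beta \in A_\epsilon(G,1/\lambda)} l_G(f,\beta)$; by Lemma~\ref{lem:epsclose} this minimum is at most $(1+\epsilon)^\nu L^*/(1-\epsilon)^\nu$. Let $\hat\beta$ be the allocation attaining the DP optimum (extracted from the recursion) and $\hat f$ the accompanying discretized flow.

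The key remaining step is to pass from ``$\hat f$ is a discretized flow whose maximum loaded-path delay is small'' to ``the \emph{equilibrium} flow for $\hat\beta$ has small average delay.'' Here I invoke Lemma~\ref{lem:sepaeqmin}: in a series-parallel graph the equilibrium flow $\tilde f$ for allocation $\hat\beta$ satisfies $\max_{p:\tilde f_p>0} l_p(\tilde f) \le \max_{p:\hat f_p>0} l_p(\hat f) = l_G(\hat f,\hat\beta) = D(G,1/\lambda,1/\lambda)$. Since at equilibrium every used path has the common delay $L(\hat\beta)$, and the average delay equals that common delay, we get $L(\hat\beta) \le D(G,1/\lambda,1/\lambda) \le (1+\epsilon)^\nu L^*/(1-\epsilon)^\nu$.

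It then remains to check $(1+\epsilon)^\nu/(1-\epsilon)^\nu \le 1+\epsilon'$ for $\epsilon = \epsilon'/(6\nu)$ (and, say, $\epsilon' \le 1$). One estimates $(1+\epsilon)^\nu \le e^{\nu\epsilon} = e^{\epsilon'/6}$ and $(1-\epsilon)^{-\nu} \le e^{2\nu\epsilon} = e^{\epsilon'/3}$ (using $1-x \ge e^{-2x}$ for small $x$), so the ratio is at most $e^{\epsilon'/2} \le 1+\epsilon'$; the constant $6$ is chosen precisely to leave slack for these crude bounds. Finally, the running time: the DP over the decomposition tree takes $O(m/\lambda^2) = O(m^3/\epsilon^2) = O(m^3 \cdot 36\nu^2/{\epsilon'}^2) = O(m^3\nu^2/{\epsilon'}^2)$, matching the claimed bound, and extracting $\hat\beta$ from the recursion is within the same budget.

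The main obstacle is not any single step but the \emph{bookkeeping of the error budget}: one must verify that the single assumption $\beta_e^* \ge \epsilon'B/(6\nu m^2)$ is simultaneously (i) strong enough to trigger Claim~\ref{clm:epsclose} with the chosen $\epsilon$, and (ii) weak enough that the resulting $(1+\epsilon)^\nu/(1-\epsilon)^\nu$ is below $1+\epsilon'$ — i.e., that the choice $\epsilon = \epsilon'/(6\nu)$ threads both needles. The genuinely conceptual content — that a discretized, non-equilibrium flow still controls the equilibrium delay — is entirely supplied by Lemma~\ref{lem:sepaeqmin}, which is specific to series-parallel topology; everything else is careful constant-chasing. (A final remark: Theorem~\ref{thm:sepa} still carries the assumption $\beta_e^* \ge \epsilon'B/(6\nu m^2)$; removing it, as promised in the text, would require the separate argument handling edges receiving negligible allocation, which I would treat afterward by guessing the small-allocation support.)
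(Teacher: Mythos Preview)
Your proposal is correct and follows essentially the same route as the paper's proof: set $\epsilon=\epsilon'/(6\nu)$, invoke Lemmas~\ref{lem:dpopt} and~\ref{lem:epsclose} to bound $D(G,1/\lambda,1/\lambda)$, then use Lemma~\ref{lem:sepaeqmin} to control the equilibrium delay for the extracted allocation $\hat\beta$, and finally account for the running time $O(m/\lambda^2)=O(m^3\nu^2/{\epsilon'}^2)$. The only cosmetic difference is in how the factor $(1+\epsilon)^\nu/(1-\epsilon)^\nu$ is simplified: the paper uses $(1+\epsilon)/(1-\epsilon)\le 1+3\epsilon$ and then $(1+\epsilon'/(2\nu))^\nu\le 1+\epsilon'$, whereas you pass through exponential bounds; both yield the same conclusion.
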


\begin{proof}
Choose $\epsilon = \epsilon'/(6 \nu)$, and let $\hat{f} \in F_\epsilon(G, 1/\lambda)$ and $\hat{\beta} \in A_\epsilon(G, 1/\lambda)$ be the flow and allocation that obtains delay $D(G, 1/\lambda, 1/\lambda)$. By the recursive construction given, $\hat{f}$ and $\hat{\beta}$ can be obtained in time $O(m^3 \nu^2/{\epsilon'}^2)$. By Lemmas~\ref{lem:dpopt} and~\ref{lem:epsclose}, 

\[ 
D(G, 1/\lambda, 1/\lambda) \le \left(\frac{1+\epsilon}{1-\epsilon}\right)^\nu L^* \le (1+3 \epsilon)^\nu L^* = (1+\frac{\epsilon'}{2 \nu})^\nu L^* \le (1+\epsilon') L^*\, .
\]

\noindent Let $f$ be the equilibrium flow with allocation $\hat{\beta}$, and $g = \hat{f}$. Then by Lemma~\ref{lem:sepaeqmin}, 

\[
L(f,\hat{\beta}) \le \max_{p : g_p > 0} l_p(g) = D(g, 1/\lambda, 1/\lambda) \le (1+\epsilon') L^* \, .
\]
\qed
\end{proof}

Given $\epsilon' > 0$, define $\alpha := \epsilon'/(6 \nu m^2)$. To remove the assumption that $\beta_e^* \ge \alpha B$ on every edge, consider an allocation $\tilde{\beta}_e := \beta_e^*(1 - \alpha) + \alpha B$. Then $\tilde{\beta}$ is a valid allocation. Let $\tilde{f}$ be the equilibrium flow for allocation $\tilde{\beta}$. Then $\tilde{\beta}$ satisfies the conditions for Theorem~\ref{thm:sepa}, and hence $L(f, \hat{\beta} \le (1 + \epsilon') L(\tilde{f}, \tilde{\beta})$. Further,

\[
L(\tilde{f},\tilde{\beta}) \le L(f^*, \tilde{\beta}) \le \frac{1}{(1-\alpha)^\nu} L(f^*, \beta^*) \le (1+\epsilon') L^* \,
\]

\noindent where the first inequality follows from Lemma~\ref{lem:sepaeqmin}, the second inequality follows from the proof of Lemma~\ref{lem:epsclose}, and the third by definition of $\alpha$ and $\nu$. Thus, $L(f, \hat{\beta} \le (1 + \epsilon')^2 L^*$ where $\beta^*$ is no longer restricted. Choosing $\epsilon'$ appropriately yields the required approximation ratio.

\bibliographystyle{plain}
\bibliography{ni-bib}

\appendix

\section*{Appendix}

\subsection*{Proofs from Section~\ref{sec:ppaths}}

\paragraph*{Removing assumption about all paths being used at equilibrium.} Recall we order paths so that $b_1 \le b_2 \le \dots \le b_m$, where $b_i$ is the length of path $i$ and there are $m$ $s$-$t$ paths, and assume the inequalities are strict. Define $\mP_i := \cup_{j \le i} p_j$, i.e., $P_i$ is the set of paths with length at most $b_i$. Thus, $P_1 \subset P_2 \subset \dots \subset P_m$. It is easy to see that the equilibrium flow for any allocation has strictly positive flow on exactly the edges in $P_i$ for some $i$.

Let $\mP_k$ be the set of paths with strictly positive flow at equilibrium. By a similar derivation as for~(\ref{eqn:pedelay}), the delay at equilibrium is given by

\begin{align}
L(\beta) & = \frac{d + \sum_{p \in \mP_k} c_p(\beta) b_p}{\sum_{p \in \mP_k} c_p(\beta)} \, . \label{eqn:ppdelay}
\end{align}

For an allocation $\beta$, since we do not \textit{a priori} know the set of paths used in the optimal solution, we will consider all possible sets of paths. To formalize this, for an allocation $\beta$, define 

\begin{equation}
M_i(\beta) := \frac{d + \sum_{p \in \mP_i} c_p(\beta) b_p}{\sum_{p \in \mP_i} c_p(\beta)} \, \label{eqn:M}
\end{equation}

\noindent For allocation $\beta$ if $\mP_i$ is the set of paths used in the resulting equilibrium, then by~(\ref{eqn:ppdelay}), $L(\beta) = M_i(\beta)$. We show that knowing $M_i(\beta)$ for each $i \in [m]$ also allows us to obtain $L(\beta)$.

\begin{clm}
For allocation $\beta$ if $b_i < M_i(\beta) \le b_{i+1}$, then $L(\beta) = M_i(\beta)$.
\label{clm:LMequivalent}
\end{clm}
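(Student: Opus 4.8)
The plan is to reduce the statement to a uniqueness assertion about the indices. Let $\mP_k$ be the set of paths carrying positive flow at equilibrium for $\beta$ (such a $k$ exists by the remark preceding~(\ref{eqn:ppdelay})). By~(\ref{eqn:ppdelay}) we have $L(\beta)=M_k(\beta)$; moreover every path in $\mP_k$ carries positive flow, so on each such path the equilibrium delay strictly exceeds its length, giving $b_k<L(\beta)$, while path $k+1$ carries no flow, so $L(\beta)\le b_{k+1}$ (with the convention $b_{m+1}:=+\infty$). Hence $k$ itself satisfies $b_k<M_k(\beta)\le b_{k+1}$. Therefore it suffices to prove that \emph{at most one} index $j$ can satisfy $b_j<M_j(\beta)\le b_{j+1}$: if so, then the hypothesis $b_i<M_i(\beta)\le b_{i+1}$ forces $i=k$, and $L(\beta)=M_k(\beta)=M_i(\beta)$.

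The key observation is a recursion relating consecutive $M_j$'s. Writing $S_j:=\sum_{p\in\mP_j}c_p(\beta)$ and using $\mP_{j+1}=\mP_j\cup\{p_{j+1}\}$, one checks from~(\ref{eqn:M}) that
\[
M_{j+1}(\beta)=\frac{S_j}{S_j+c_{p_{j+1}}(\beta)}\,M_j(\beta)+\frac{c_{p_{j+1}}(\beta)}{S_j+c_{p_{j+1}}(\beta)}\,b_{j+1},
\]
i.e.\ $M_{j+1}(\beta)$ is a convex combination of $M_j(\beta)$ and $b_{j+1}$; in particular $M_{j+1}(\beta)-M_j(\beta)$ has the same sign as $b_{j+1}-M_j(\beta)$ (this is also exactly what Fact~\ref{fact:xyz} gives with $x=d+\sum_{p\in\mP_j}c_p(\beta)b_p$, $y=S_j$, $z=c_{p_{j+1}}(\beta)$, $k=b_{j+1}$), and $M_{j+1}(\beta)$ always lies weakly between $M_j(\beta)$ and $b_{j+1}$.

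Using this I would prove the following monotonicity claim by induction on $j$: if $M_i(\beta)\le b_{i+1}$ for some index $i$, then $M_{i+1}(\beta)\le b_{i+1}$, and $M_j(\beta)<b_j$ for every $j\ge i+2$. The base case is immediate since $M_{i+1}(\beta)$ is a convex combination of $M_i(\beta)$ and $b_{i+1}$, both at most $b_{i+1}$. For the inductive step, the induction hypothesis gives $M_{j-1}(\beta)\le b_{j-1}$, and since the lengths are strictly increasing this yields $M_{j-1}(\beta)<b_j$; as $M_j(\beta)$ lies between $M_{j-1}(\beta)$ and $b_j$, we get $M_j(\beta)<b_j$. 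Consequently, once $M_i(\beta)\le b_{i+1}$, no index $j>i$ can satisfy $b_j<M_j(\beta)\le b_{j+1}$: index $i+1$ fails because $M_{i+1}(\beta)\le b_{i+1}$, and every larger index fails because $M_j(\beta)<b_j$. Applying this to whichever of the two indices (the index $i$ from the hypothesis, or the equilibrium index $k$) is the smaller proves the required uniqueness, and hence the claim.

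I expect the only real subtlety to be the bookkeeping in the induction — keeping the weak versus strict inequalities straight and invoking the strict ordering $b_1<\dots<b_m$ at exactly the right place; the convex-combination identity and the sign statement are routine. One should also note that paths of zero conductance, if present, are harmless: for such a path $p_{j+1}$ the weight on $b_{j+1}$ vanishes and $M_{j+1}(\beta)=M_j(\beta)$, so every step above still goes through (it suffices that $S_j>0$, which holds whenever $M_j(\beta)$ is defined).
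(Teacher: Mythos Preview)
Your argument is correct, but it takes a genuinely different route from the paper. The paper proves the claim constructively: it simply writes down the candidate equilibrium flow $f_p=(M_i(\beta)-b_p)\,c_p(\beta)$ for $p\in\mP_i$ and $f_p=0$ otherwise, checks that the hypothesis $b_i<M_i(\beta)\le b_{i+1}$ makes this a valid flow (nonnegativity from $M_i(\beta)>b_p$, total value $d$ from the definition of $M_i$) and a Wardrop equilibrium (all used paths have delay $M_i(\beta)$, unused paths have delay $\ge b_{i+1}\ge M_i(\beta)$), and then invokes uniqueness of equilibrium. That is the whole proof.

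Your approach instead argues indirectly: you observe that the actual equilibrium index $k$ also satisfies $b_k<M_k(\beta)\le b_{k+1}$, and then prove via the convex-combination recursion that at most one index can satisfy this bracket condition, forcing $i=k$. The paper's proof is shorter and more transparent---it tells you exactly what the equilibrium flow is---whereas your route avoids constructing the flow altogether and instead yields a monotonicity statement about the sequence $(M_j(\beta))_j$ that is of independent interest (it is, in spirit, the fixed-$\beta$ analogue of Lemma~\ref{lem:pppathorder}). Both arguments lean on Fact~\ref{fact:xyz} in the same way; the difference is whether one front-loads the work into building the flow or into analyzing the $M_j$'s.
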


\begin{proof}
We will show that if $b_i < M_i(\beta) \le b_{i+1}$ then there is an equilibrium flow $f$ with delay $M_i(\beta)$. By the uniqueness of equilibrium flow, the claim must then be true. 

For path $p \in \mP_i$, let $f_p = \left(M_i(\beta)-b_p\right)c_p(\beta)$, and $f_p = 0$ for $p \not \in \mP_i$. Then 

\begin{align*}
\sum_p f_p & = M_i(\beta) \sum_{p \in \mP_i} c_p(\beta) - \sum_{p \in \mP_i} c_p(\beta) b_p = d 
\end{align*}

\noindent where the second equality follows from the definition of $M_i(\beta)$ in~(\ref{eqn:M}). Further, each $f_p \ge 0$, since $M_i(\beta) \ge b_p$ for $p \in \mP_i$. Thus $f$ is a valid flow. To see that $f$ is also an equilibrium flow, note that for any path with strictly positive flow, the delay is exactly $f_p/c_p(\beta) + b_p$ $=M_i(\beta)$, while any path with zero flow has delay at least $b_{i+1} \ge M_i(\beta)$.
\end{proof}

For any budget $B$, define $M_i^*(B) := \min_\beta \{ M_i(\beta): \beta \ge 0, ~ \mone^T \beta \le B\}$, i.e., $M_i^*(B)$ is the minimal value of $M_i(\beta)$ over all valid allocations. Then if $M_i^*(B) > b_i$, replacing $L(\beta)$ by $M_i(\beta)$, $b_m$ by $b_i$, and $\mP$ by $\mP_i$, our algorithm in Section~\ref{sec:ppaths} returns the allocation $beta$ that minimizes $M_i(\beta)$. 

In order to use this algorithm, we need to know $k$ so that the set of paths used by the equilibrium for the optimal allocation is exactly $\mP_k$, since in this case the minima of $L(\beta)$ and $M_k(\beta)$ coincide. 

Since we do not know $k$, we use the following iterative algorithm. Start with $i=1$, and obtain the allocation $\beta' = \min_\beta M_i(\beta)$. If $M_i(\beta') \le b_{i+1}$, we stop; $\beta'$ is then the allocation that minimizes $L(\beta)$, and $k=i$. If $M_i(\beta') > b_{i+1}$, we increase $i$ and repeat the process.

To show the correctness of this process, we use the following lemma.

\begin{lemma}
For $i < m$ and any budget $B$, if $M_i^*(B) > b_{i+1}$, then $M_{i+1}^*(B) > b_{i+1}$. \label{lem:pppathorder}
\end{lemma}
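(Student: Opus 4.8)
The key structural observation is that $\mathcal{P}_{i+1}=\mathcal{P}_i\cup\{p_{i+1}\}$ and that the added path $p_{i+1}$ has length exactly $b_{i+1}$, so $M_{i+1}(\beta)$ is a \emph{mediant} of $M_i(\beta)$ and $b_{i+1}$. Concretely, fix a valid allocation $\beta$ and set $x=d+\sum_{p\in\mathcal{P}_i}c_p(\beta)b_p$, $y=\sum_{p\in\mathcal{P}_i}c_p(\beta)$ and $z=c_{p_{i+1}}(\beta)\ge 0$, so that $M_i(\beta)=x/y$ and, since $b_{p_{i+1}}=b_{i+1}$, $M_{i+1}(\beta)=\dfrac{x+b_{i+1}z}{y+z}$. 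The plan is: (i) show $M_{i+1}(\beta)>b_{i+1}$ for every valid $\beta$, and (ii) upgrade this pointwise strict inequality to a strict inequality for the minimum $M_{i+1}^*(B)$.

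For step (i), I would first unpack the hypothesis: $M_i^*(B)>b_{i+1}$ means, by definition of $M_i^*$, that $M_i(\beta)>b_{i+1}$ for \emph{every} valid allocation $\beta$. A one-line computation then gives
\[
M_{i+1}(\beta)-b_{i+1}=\frac{(x+b_{i+1}z)-b_{i+1}(y+z)}{y+z}=\frac{x-b_{i+1}y}{y+z}=\frac{y\big(M_i(\beta)-b_{i+1}\big)}{y+z}>0
\]
whenever $y>0$; and when $y=0$ one has $x=d$, so $M_{i+1}(\beta)-b_{i+1}=d/z>0$ (or $+\infty$ if also $z=0$). This is also exactly the first equivalence of Fact~\ref{fact:xyz} applied with $k=b_{i+1}$, together with the elementary fact that the mediant of two reals never lies below the smaller of them, so $M_{i+1}(\beta)$ lies strictly between $b_{i+1}$ and $M_i(\beta)$. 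Hence $M_{i+1}(\beta)>b_{i+1}$ for all valid $\beta$.

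For step (ii) — which is the part requiring a little care, since a priori the infimum of a strictly positive family can be $0$ — I would argue quantitatively. The numerator $N(\beta):=x-b_{i+1}y=d-\sum_{p\in\mathcal{P}_i}c_p(\beta)(b_{i+1}-b_p)$ is a continuous function of $\beta$ on the compact feasible set $\{\beta\ge\bfz:\mone^{T}\beta\le B\}$ and, by step (i), is strictly positive throughout it; hence it attains a minimum $N_0>0$. Meanwhile each $c_e(\beta)=c_e+\mu_e\beta_e$, and therefore each $c_p(\beta)$, is bounded above on the feasible set, so the denominator $\sum_{p\in\mathcal{P}_{i+1}}c_p(\beta)\le C_0$ for some finite $C_0$. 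Consequently $M_{i+1}(\beta)-b_{i+1}=N(\beta)/\big(\sum_{p\in\mathcal{P}_{i+1}}c_p(\beta)\big)\ge N_0/C_0>0$ uniformly in $\beta$, which yields $M_{i+1}^*(B)\ge b_{i+1}+N_0/C_0>b_{i+1}$. (Alternatively, one may just note that $M_{i+1}$ is continuous on the compact feasible set, so its minimum is attained at some $\beta^{\circ}$, and invoke the pointwise bound of step (i) at $\beta^{\circ}$.) The main obstacle is thus exactly this last upgrade from a pointwise to an infimum statement; the mediant identity itself is routine, and the only degenerate case ($\sum_{p\in\mathcal{P}_i}c_p(\beta)=0$) is handled by the $y=0$ remark above.
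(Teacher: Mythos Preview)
Your proof is correct and rests on the same mediant identity as the paper's: writing $M_{i+1}(\beta)=(x+b_{i+1}z)/(y+z)$ with $x/y=M_i(\beta)$ and applying Fact~\ref{fact:xyz} with $k=b_{i+1}$. The only difference is packaging: the paper argues the contrapositive (taking a minimizer $\beta''$ of $M_{i+1}$ and showing $M_i(\beta'')\le M_{i+1}(\beta'')\le b_{i+1}$), whereas you argue the direct implication pointwise and then---more carefully than the paper---invoke compactness to pass to the infimum; your ``alternatively'' clause in step~(ii) is precisely the paper's argument.
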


\begin{proof}
Let $\beta'$ and $\beta''$ be valid allocations that minimize $M_i(\beta)$ and $M_{i+1}(\beta)$ respectively. We will prove the contrapositive. Then by assumption $M_{i+1}(\beta'') \le b_{i+1}$. Further,

\begin{align*}
M_i(\beta') \le M_i(\beta'') & = \frac{d + \sum_{p \in P_i} c_p(\beta'') b_p}{\sum_{p \in P_i} c_p(\beta'')} \\
 & \le \frac{d + \sum_{p \in P_{i+1}} c_p(\beta'') b_p}{\sum_{p \in P_{i+1}} c_p(\beta'')} = M_{i+1}(\beta'')
\end{align*}

\noindent where the first inequality is since $\beta'$ minimizes $M_i(\beta)$ and the second inequality follows from the contrapositive of Fact~\ref{fact:xyz}, by setting $k = b_{i+1}$ and $z = c_{i+1}(\beta'')$.
\end{proof}

To see that the process given above obtains an optimal allocation $\beta^*$, let $\mP_k$ be the set of paths used by equilibrium for the optimal allocation. Suppose our process ends for $i < k$, then it must have found an allocation $\beta'$ so that $M_i(\beta') \le b_{i+1}$. Since the process did not terminate at $i-1$,$M_{i-1}^*(B) > b_i$ and hence by Lemma~\ref{lem:pppathorder} $M_i(\beta') > b_i$. Thus $b_i < M_i(\beta') \le b_{i+1}$, and hence by Claim~\ref{clm:LMequivalent}, $L(\beta') = M_i(\beta') \le b_{i+1} < L(\beta^*)$, where the last inequality is because $P_k$ is the set of paths used by equilibrium for the optimal allocation $\beta^*$ and $b_k \ge b_{i+1}$. This is obviously a contradiction, since $L(\beta^*)$ is minimal.

Further, the process must stop for $i=k$, since there exists an allocation $\beta^*$ with $L(\beta^*) \le b_{i+1}$. The process is hence correct, and will obtain the optimal allocation.

\medskip

\paragraph*{Implementing Step 2 of the algorithm.} Given $\bar{L}$, our problem is to obtain an allocation $\beta$ that satisfies~(\ref{eqn:barL}) and $L(\beta) = \bar{L}$. To obtain such an allocation, we will use a binary search procedure together with a concave relaxation of the original problem for which the first-order conditions exactly correspond to~(\ref{eqn:barL}).

Consider the following optimization problem $P(B)$, with variables $(\beta_p)_{p \in \mP}$ and parametrized by the budget $B$:

\[
\max \sum_{p \in \mP} (\bar{L} - b_p) c_p(\beta_p)
\]

\noindent subject to $(\beta_p)_{p \in \mP}$ being a valid allocation for budget $B$. The first-order conditions for this problem are exactly~(\ref{eqn:barL}). Further, for each path $c_p(\beta_p)$ is a concave function, and hence the objective is concave. Thus for a given budget $B$ the problem can be solved to obtain the optimal allocation. We now show that by increasing the budget $B$, we can obtain a monotone solution $\beta$ to $P(B)$.

\begin{clm}
Let $B'' > B'$, and $\beta'$ is an optimal solution to $P(B')$. Then there is an optimal solution $\beta''$ to $P(B'')$ that satisfies $\beta_p'' \ge \beta_p'$ on all paths $p$.
\label{clm:relaxmonotone}
\end{clm}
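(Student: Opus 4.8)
The plan is to prove this by an exchange argument: among all optimal solutions of $P(B'')$ I would single out one that is ``closest from below'' to $\beta'$, and then show that the remaining deficit is forced to be zero. First I would record two preliminaries. Since $\bar L > b_m \ge b_p$ for every path $p$ (the regime in which Step~2 operates), each coefficient $\bar L - b_p$ is strictly positive, and as every $c_p(\cdot)$ is strictly increasing the objective of $P(B)$ is strictly increasing in each coordinate; hence every optimal solution of $P(B)$ exhausts its budget, so in particular $\sum_p \beta_p'' = B'' > B' \ge \sum_p \beta_p'$. Also, the feasible region of $P(B'')$ is compact and its objective is continuous (each $c_p$ is concave, hence continuous on $[0,\infty)$), so the set $\mathcal O$ of optimal solutions is nonempty and compact, and the continuous potential $\Phi(\beta) := \sum_{p \in \mP} \max(\beta_p' - \beta_p,\, 0)$ attains its minimum over $\mathcal O$; let $\beta''$ be a minimizer. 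The goal is then to show $\Phi(\beta'') = 0$, which is exactly $\beta_p'' \ge \beta_p'$ for all $p$.

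Next I would argue by contradiction. If $\Phi(\beta'') > 0$, some path $q$ has $\beta_q'' < \beta_q'$ (so $\beta_q' > 0$), and since $\sum_p (\beta_p'' - \beta_p') = B'' - \sum_p \beta_p' > 0$, some path $r$ has $\beta_r'' > \beta_r'$. For small $\epsilon \in (0,\ \min\{\beta_q' - \beta_q'',\ \beta_r'' - \beta_r'\}]$, let $\hat\beta$ be $\beta''$ with $\epsilon$ units of budget moved from $r$ to $q$; this is feasible for $P(B'')$, and since $\beta_r'' - \epsilon \ge \beta_r'$ while $\beta_q' - \beta_q'' - \epsilon \ge 0$, one gets $\Phi(\hat\beta) = \Phi(\beta'') - \epsilon < \Phi(\beta'')$. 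So it suffices to show $\hat\beta$ is still optimal for $P(B'')$. Writing $\Delta$ for the change in objective and using concavity of $c_q, c_r$ with the choice of $\epsilon$,
\[
\Delta \ \ge\ \epsilon\Big[(\bar L - b_q)\,(c_q)'_-(\beta_q') \;-\; (\bar L - b_r)\,(c_r)'_+(\beta_r')\Big],
\]
since $c_q(\beta_q'' + \epsilon) - c_q(\beta_q'') \ge \epsilon\,(c_q)'_-(\beta_q'' + \epsilon) \ge \epsilon\,(c_q)'_-(\beta_q')$ (as $\beta_q'' + \epsilon \le \beta_q'$ and, $c_q$ being concave, its one-sided derivatives are nonincreasing in the point) and symmetrically $c_r(\beta_r'') - c_r(\beta_r'' - \epsilon) \le \epsilon\,(c_r)'_+(\beta_r'' - \epsilon) \le \epsilon\,(c_r)'_+(\beta_r')$. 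Finally, applying the first-order optimality of $\beta'$ for $P(B')$ to an infinitesimal transfer of budget from $q$ (where $\beta_q' > 0$) to $r$ yields exactly $(\bar L - b_r)(c_r)'_+(\beta_r') \le (\bar L - b_q)(c_q)'_-(\beta_q')$, so the bracket is $\ge 0$ and $\Delta \ge 0$; combined with optimality of $\beta''$ this forces $\Delta = 0$, so $\hat\beta \in \mathcal O$ with $\Phi(\hat\beta) < \Phi(\beta'')$, contradicting minimality. Hence $\Phi(\beta'') = 0$.

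The concavity estimates and the passage to one-sided derivatives are routine bookkeeping. The point that needs care --- the real crux --- is that it is the first-order conditions of the \emph{smaller}-budget problem $P(B')$ (not $P(B'')$) that certify that robbing the surplus at $r$ to patch the deficit at $q$ cannot decrease the \emph{larger}-budget objective; this is where concavity of the path conductances (Corollary~\ref{cor:cpconcave}) and the signs $\beta_q'' < \beta_q'$, $\beta_r'' > \beta_r'$ have to be combined correctly. A minor subtlety worth noting is that $\beta'$ need only be assumed feasible for $P(B')$ (i.e.\ $\sum_p \beta_p' \le B'$), whereas it is essential that every optimal solution of $P(B'')$ exhausts its budget --- that is what guarantees the existence of the surplus path $r$.
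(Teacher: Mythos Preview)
Your argument is correct. The concavity bounds, the one-sided derivative bookkeeping, and the use of the KKT condition at $\beta'$ for the transfer $q\to r$ all check out, and the extremal choice of $\beta''$ via the deficit potential $\Phi$ is a clean way to close the loop.

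Your route is genuinely different from the paper's. The paper fixes an arbitrary optimum $\alpha''$ of $P(B'')$, combines the first-order conditions at \emph{both} $\beta'$ and $\alpha''$ with concavity to force the four relevant marginals to coincide, and then builds $\beta''$ explicitly by starting at $\beta'$ and greedily topping up along paths with $\alpha_p''>\beta_p'$ until the budget $B''$ is exhausted. Your proof, by contrast, selects $\beta''$ extremally (minimizing the total deficit over the optimum set) and uses only the first-order condition at $\beta'$; the KKT condition for $P(B'')$ is never invoked directly, only the optimality of $\beta''$. This buys you two things: you avoid assuming differentiability of $c_p$ (the one-sided derivatives of a concave function suffice throughout), and you sidestep the step in the paper where equality of the marginals is asserted ``for any path $p$'' after having been derived only for the specific deficit/surplus pair $(q,r)$. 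The paper's approach, on the other hand, yields an explicit constructive description of $\beta''$ in terms of $\beta'$ and $\alpha''$, which your compactness argument does not.

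One small remark: your claim that every optimum of $P(B'')$ exhausts its budget relies on each $c_p$ being strictly increasing, which can fail if every edge on some path has $\mu_e=0$. This is harmless---any slack can be dumped on an arbitrary path without changing the objective, so one may always pass to an optimum with $\sum_p\beta_p''=B''$---but it is worth a one-line remark.
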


\begin{proof}
Let $\alpha_p''$ be an optimal solution to $P(B'')$. If $\alpha_p'' \ge \beta_p'$ on all paths $p$, we are done. Otherwise, let $q$ be a path such that $\alpha_q'' < \beta_q'$. Since $B'' > B'$, there is a path $r$ such that $\alpha_r'' > \beta_r'$. Then, by Corollary~\ref{cor:cpconcave}, 

\begin{align*}
(\bar{L} - b_q) \frac{\partial c_q(\alpha_q'')}{\partial \beta_q} & \ge (\bar{L} - b_q) \frac{\partial c_q(\beta_q')}{\partial \beta_q} \, ,\\
(\bar{L} - b_r) \frac{\partial c_r(\alpha_r'')}{\partial \beta_r} & \le (\bar{L} - b_r) \frac{\partial c_r(\beta_r')}{\partial \beta_r}  \, .\\
\end{align*}

\noindent Further, by the first-order conditions for optimality, since $\beta_q' > 0$ and $\alpha_r'' > 0$,

\begin{align*}
(\bar{L} - b_q) \frac{\partial c_q(\beta_q')}{\partial \beta_q}& \ge (\bar{L} - b_r) \frac{\partial c_r(\beta_r')}{\partial \beta_r}  \, ,\\
(\bar{L} - b_r) \frac{\partial c_r(\alpha_r'')}{\partial \beta_r} & \ge (\bar{L} - b_q) \frac{\partial c_q(\alpha_q'')}{\partial \beta_q} \, .\\
\end{align*}

\noindent It immediately follows that all the above inequalities must be equalities, and further, for any path $p$,

\begin{align}
\frac{\partial c_p(\beta_p')}{\partial \beta_p} & = \frac{\partial c_r(\alpha_p'')}{\partial \beta_p}  \, . \label{eqn:monotone}
\end{align}

Consider the allocation $\beta''$, constructed as follows. Let $\beta'' = \beta'$ initially. Then for $p = 1, 2, \dots, m$ where $m$ is the number of paths, if $\alpha_p'' > \beta_p'$,  $\beta_p'' = \beta_p'' + \min \{\alpha_p'' - \beta_p', B'' - \sum_{q \in [m]} \beta_q'' \}$. Then $\beta_p'' \ge \beta_p'$ for all paths $p$. Further, $\beta_p''$ lies between $\beta_p'$ and $\alpha_p''$, and hence by concavity of conductance and~(\ref{eqn:monotone}), for all paths $p$,

\begin{align*}
\frac{\partial c_p(\beta_p'')}{\partial \beta_p} & = \frac{\partial c_p(\alpha_p'')}{\partial \beta_p}  \, .\\
\end{align*}

\noindent Also, since $\sum_q \beta_q' = B'$ and $\sum_q \max\{\alpha_q'', \beta_q'\} \ge B''$, it follows that $\sum_q \beta_q'' = B''$. Thus allocation $\beta''$ satisfies the first-order conditions for optimality of $P(B'')$, and since $P(B'')$ is a concave maximization program, $\beta''$ is also an optimal solution that satisfies the conditions of the claim.
\end{proof}

We now use the following binary search procedure. Let $B^l = 0$ and $B^h = mB$ be the lower and limits on the budget, and $\bar{B} = (B^l + B^h)/2$. Solve $P(\bar{B})$, and let $\bar{\beta}$ be the optimal solution. If $L(\bar{B}) < \bar{L}$, set $B^l = \bar{B}$, $\bar{B} = (B^l + B^h)/2$, add constraints $\beta \ge \bar{\beta}$ to $P(\bar{B})$ and solve again. Similarly if $L(\bar{B}) > \bar{L}$, we set $B^h = \bar{B}$, $\bar{B} = (B^l + B^h)/2$, add constraints $\beta \le \bar{\beta}$ to $P(\bar{B})$ and solve again.

Let $\beta'$ and $\beta''$ be the solution to $P(B')$ and $P(B'')$ with $B' > B''$. The added bounds on $\beta$ ensure that $\beta' \ge \beta''$, and since $\sum_p \beta_p' = B'$ at optimality, for some path $p$ $\beta_p' > \beta_p''$. By Claim~\ref{clm:midecreasing}, $L(\beta') < L(\beta'')$. Hence the binary search procedure must obtain a budget $\bar{B}$ and an allocation $\bar{\beta}$ so that $L(\bar{\beta}) = \bar{L}$. Further, by Claim~\ref{clm:relaxmonotone}, $\bar{\beta}$ satisfies the KKT conditions for the original problem, and hence satisfies~(\ref{eqn:barL}) as well.

\end{document}